\documentclass[11pt,a4paper]{article}
\usepackage{amsmath, amsfonts, amssymb}
\usepackage{a4wide}
\usepackage{parskip}
\usepackage{enumitem}
\usepackage{xcolor}

\usepackage{hyperref}
\usepackage{booktabs}
\usepackage{caption}
\usepackage{siunitx}
\usepackage{tabularx}
\usepackage{comment}
\usepackage{multirow}
\usepackage{adjustbox}
\usepackage{mathtools}
\usepackage{caption}
\usepackage{parskip}
\usepackage{graphicx}
\usepackage{adjustbox}
\usepackage{multirow}
\usepackage{amsthm}
\usepackage{bm, makecell}
\usepackage{lscape}
\usepackage{rotating}
\usepackage{floatrow}
\usepackage[noadjust]{cite}

\newtheorem{theorem}{Theorem}[section]
\newtheorem{lemma}[theorem]{Lemma}
\newtheorem{corollary}[theorem]{Corollary}
\newtheorem{proposition}[theorem]{Proposition}
\theoremstyle{definition}
\newtheorem{definition}[theorem]{Definition}

\newtheorem{example}[theorem]{Example}
\newtheorem{remark}[theorem]{Remark}

\numberwithin{equation}{section}

\newcommand{\Tor}{\textnormal{Tor}}

\usepackage{tikz,xcolor,hyperref}
\usepackage{mathdots}

\definecolor{lime}{HTML}{A6CE39}
\DeclareRobustCommand{\orcidicon}{%
	\begin{tikzpicture}
		\draw[lime, fill=lime] (0,0) 
		circle [radius=0.16] 
		node[white] {{\fontfamily{qag}\selectfont \tiny ID}};
		\draw[white, fill=white] (-0.0625,0.095) 
		circle [radius=0.007];
	\end{tikzpicture}
	\hspace{-2mm}
}

\foreach \x in {A, ..., Z}{%
	\expandafter\xdef\csname orcid\x\endcsname{\noexpand\href{https://orcid.org/\csname orcidauthor\x\endcsname}{\noexpand\orcidicon}}
}
\begin{document}
	\date{}
    \title{ Skew Polycyclic Codes over $\frac{\mathbb{F}_{p^m}[u]}{\langle u^t \rangle}$}
		\author{{\bf Akanksha Tiwari\footnote{email: {\tt akankshafzd8@gmail.com}}\orcidA{}, \bf Ritumoni Sarma\footnote{	email: {\tt ritumoni407@gmail.com}}\orcidC{}} \\ Department of Mathematics\\ Indian Institute of Technology Delhi\\Hauz Khas, New Delhi-110016, India}
\maketitle

\begin{abstract}
Let $R^t$ denote the finite chain ring $\frac{\mathbb{F}_{p^m}[u]}{\langle u^t \rangle},$ where $p$ is a prime and $t$ is a positive integer.  In this article, for a prime $p$ and an automorphism $\theta$ of $\mathbb{F}_{p^m}$, we give the structure of the left ideals of the ring $\frac{R^t[x,\Theta]}{\langle f(x) \rangle},$ where $f(x)$ is in the center of the skew polynomial ring $R^t[x,\Theta]$ and $\Theta$ is an automorphism of $R^t$ that extends $\theta$ with $\Theta(u)=u$. These left ideals are also referred to as skew polycyclic codes associated to $f(x).$
In particular, when the central element \( f(x)\) is \(x^{np^s}-\lambda \), where $\lambda=\lambda_0+u\lambda_1+\cdots +u^{t-1}\lambda_{t-1}$ with $\lambda_0\ne0,$ and \( n=1,2 \), we give a more refined form of the left ideals (which are also called skew constacyclic codes). Moreover, the case $\lambda_1 \neq 0$ is analyzed in detail, yielding a simpler form of generators that reveals a more refined structural characterization of the left ideals. As an application, for $n=1,t=3$ and $n=2,t=2$ we give a full description of the left ideals by including certain necessary conditions that were omitted in available literature, preventing the different classes of left ideals from being mutually disjoint and in certain cases, we also compute $i$-th torsion codes.
\medskip

\noindent \textit{Keywords:} Linear Code, Cyclic Code, Skew Constacyclic Code, Skew Polynomial Ring, Torsion Module
			
\medskip
			
\noindent \textit{2020 Mathematics Subject Classification:} 94B05, 94B15, 	16S36, 	16S90 	

\end{abstract}

\section{Introduction}\label{Section 1}\label{sec1}
Linear codes play a central role in ensuring reliable communication by enabling error detection and error correction during data transmission. Among the broad class of linear codes, cyclic codes occupy a distinguished position due to their algebraic and computational advantages. A cyclic code is invariant under cyclic shifts of its codewords, a property that allows such codes to be represented as ideals in suitable polynomial quotient rings. This correspondence enables the application of algebraic tools for their analysis and facilitates efficient encoding and decoding algorithms. The structural simplicity and strong theoretical framework of cyclic codes have motivated numerous generalizations, including constacyclic, negacyclic, and skew cyclic codes. Codes over finite rings have since emerged as an important area of study, offering new structural insights and enabling constructions that have not yet been achieved by considering fields alone. In particular, linear codes over rings such as $\mathbb{Z}_4$ revealed deep connections with well-known families of nonlinear binary codes via Gray maps studied by Hammons et al. in \cite{hammons1994z}. Beyond their theoretical appeal, ring-based codes provide greater flexibility in algebraic structure.
\par
Constacyclic codes, a generalization of cyclic codes, over finite fields, have been widely investigated. Their structure depends significantly on whether the characteristic of the field divides the code length. If the characteristic does not divide the length, the corresponding defining polynomial has simple roots, whereas if it divides the length, repeated roots occur. Repeated-root constacyclic codes over finite fields were first studied by Castagnoli et al. (\cite{castagnoli1991repeated}) and van Lint (\cite{van1991repeated}). The investigation was later extended to constacyclic codes over finite chain rings. Constacyclic codes over a finite chain ring $R$ have been studied mainly in two directions: when the characteristic of the residue field $\bar{R}$ of $R$ divides the length of the code, and the other when the characteristic of $\bar{R}$ does not divide the length of the code. 
Constacyclic codes over a finite chain ring  $R$ have been studied mainly in two directions: when the characteristic of the residue field $\bar{R}$ of 
$R$ divides the length of the code, and when it does not. Dinh has done several studies on repeated root constacyclic codes, for instance, see \cite{dinh2010constacyclic}-\cite{dinh2020hamming}. In 2016, Liu et al. (\cite{liu2016repeated}) studied repeated-root constacyclic codes of length $3lp^s$ and their dual codes. 

\par
Motivated by these developments, the study was further extended to polycyclic codes, a natural generalization of constacyclic codes. The structure of polycyclic codes over finite chain rings has been examined by several authors; see, for example, \cite{boudine2023polycyclic}, \cite{AlexandreFotue-Tabue2020AdvancesinMathematicsofCommunications}, \cite{boudine2023classification}, \cite{lopez2013polycyclic}, and \cite{AkaKanSar}.
\par
At the same time, the introduction of skew polynomial rings has led to further generalizations of cyclic codes to skew cyclic codes. The non-commutative nature of skew polynomial rings, governed by ring automorphisms, gives rise to skew cyclic and skew constacyclic codes first studied by Boulagouaz and Leroy in \cite{Boulagouaz}. Further, Boucher and Ulmer in \cite{boucher2014linear} investigated linear codes by using a skew polynomial ring with an automorphism as well as a derivation. The interplay between repeated-root structures and skew polynomial settings has attracted growing attention in recent years.
\par
Jitman et al. (\cite{Jitman}) studied the structure and properties of skew constacyclic codes over finite chain rings and, in particular, they studied cyclic codes of length $n$ over $\mathbb{F}_{p^m} +u\mathbb{F}_{p^m}$. Bagheri et al. in \cite{bagheri2022skew} classified skew cyclic codes of length $p^s$ over $\mathbb{F}_{p^m} +u\mathbb{F}_{p^m}$. In \cite{hesari2021self}, Hesari et al. determined the structure of the Euclidean dual of skew cyclic codes of length $p^s$ over $\mathbb{F}_{p^m} +u\mathbb{F}_{p^m}$ in some special cases and established that they are self-dual. Inchaisri et al. in  \cite{inchaisri2022negacyclic} studied negacyclic codes of prime power length over the finite non-commutative chain ring $\mathbb{F}_{p^m}[u,\theta]/\langle u^2\rangle$. Phuto and Klin-Eam in \cite{phuto2022duality} studied the Euclidean dual of skew constacyclic codes of length $p^s.$ Pathak et al. in \cite{pathak2025skew} studied skew nega cyclic codes of length $4p^s$ over $\mathbb{F}_{p^m} +u\mathbb{F}_{p^m}$.
Hesari et al. in \cite{hesari2023skew} studied skew constacyclic codes of length $p^s$ and $2p^s$ over $\mathbb{F}_{p^m} +u\mathbb{F}_{p^m}$ and found their torsion codes. Recently, Hesari and Samei in \cite{hesari2025skew} studied skew constacyclic codes of length $p^s$ over $\mathbb{F}_{p^m} +u\mathbb{F}_{p^m}+u^2\mathbb{F}_{p^m}$ and found their torsion codes.
\par 
To the best of our knowledge, the existing literature, has investigated the structure of skew constacyclic codes of lengths $p^s$, $2p^s$, and $4p^s$ over the finite chain rings $\frac{\mathbb{F}_{p^m}[u]}{\langle u^2 \rangle}$ and $\frac{\mathbb{F}_{p^m}[u]}{\langle u^3 \rangle}$. However, in the description of left ideals in these references, certain necessary conditions were omitted, preventing the different classes of left ideals from being mutually disjoint. In this article, we extend these earlier works from skew constacyclic codes to the more general class of skew polycyclic codes over $R^t:=\frac{\mathbb{F}_{p^m}[u]}{\langle u^t \rangle}$. Moreover, we incorporate the missing necessary conditions required for these families to be pairwise disjoint in several existing descriptions of such codes available in the literature (see, \cite{hesari2023skew}, \cite{hesari2025skew}, and \cite{pathak2025skew}).
\par
 The structure of the article is as follows. In Section \ref{section2}, we fix notation and provide basic definitions. In Section \ref{sec3}, we present the main results of the article. Specifically, for an automorphism $\theta$ of $\mathbb{F}_{p^m}$, we study skew polycyclic codes arising from the quotient ring $\frac{R^t[x,\Theta]}
{\langle f(x) \rangle}$,
where $f(x)$ is in the center of $ R^t[x,\Theta]$, $\Theta$ is a ring automorphism of $R^t$ such that $\Theta(a_0+ua_1+\dots+u^{t-1}a_{t-1})=\theta(a_0)+u\theta(a_1)+\dots + u^{t-1}\theta(a_{t-1})$, which is equivalent to describing the structure of its left ideals. In Section \ref{sec4}, we consider the special case when the central element $f(x)=x^{p^s}-\lambda,$ where
$\lambda=\lambda_0 + u\lambda_1 + \cdots + u^{t-1}\lambda_{t-1} \in 
R^t$ and give a more refined form of left ideals of $\frac{R^t[x,\Theta]}{\langle x^{p^s}-\lambda \rangle}$. In Section \ref{sec5}, we consider the particular case when the central element $f(x)=x^{2p^s}-\lambda,$ where
$\lambda=\lambda_0 + u\lambda_1 + \cdots + u^{t-1}\lambda_{t-1} \in 
R^t$. 
Furthermore, as an application, in Section \ref{sec6}, we completely characterize the left ideals of 
$
\frac{R^3[x,\Theta]}{\langle x^{p^s}-\lambda \rangle}\text{ and }
\frac{R^2[x,\Theta]}{\langle x^{2p^s}-\lambda \rangle},$
by incorporating the missing conditions omitted in the earlier descriptions available in the literature. In certain cases, we also determine the $i$-th torsion of these ideals. Moreover, we provide a counterexample demonstrating that, in the absence of these conditions, the corresponding families are not necessarily disjoint.
In Section \ref{sec7}, we conclude the article.
\section{Notation and Preliminaries}\label{section2}
For a prime number $p$, $\mathbb{F}_{p^m}$ denotes the finite field with $p^m$ elements. Let $R$ denote a ring with unity throughout this section.
\medskip
\begin{definition}
Let $\sigma$ be an automorphism of the ring $R$.  
The set $R[x,\sigma]$ consists of all polynomials over $R$ in indeterminate $x$, where addition is as in the usual polynomial ring over $R$ but multiplication in $R[x,\sigma]$ is determined by the rule
\[
    x a = \sigma(a)\, x \qquad \forall a \in R,
\]
and is extended to all elements of $R[x,\sigma]$ by distributivity and 
associativity.  
With these operations, $R[x,\sigma]$ forms a ring called the 
\emph{skew polynomial ring} over $R$.  
The ring $R[x,\sigma]$ is non-commutative unless $\sigma$ is the identity 
automorphism on $R$.
Moreover, let $R$ be a ring with unity, then the left (as well as right) division algorithm holds in the skew polynomial ring $R[x,\sigma],$ where $\sigma$ is an automorphism of $R$.
\end{definition}
\begin{theorem}[Right Division Algorithm]\cite{jian2022}
    Let $f(x),g(x)\in R[x,\sigma],$ where the leading coefficient of $g(x)$ is a unit. Then there exists $q(x), r(x)\in R[x,\sigma]$ such that
    $$f(x)=q(x)g(x)+r(x),$$
    where $r(x)=0$ or $\deg{r}(x)<\deg{g}(x).$
\end{theorem}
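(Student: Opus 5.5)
We prove the right division algorithm by strong induction on $\deg f(x)$, with $g(x)$ fixed. Write $g(x)=\sum_{j=0}^{k} b_j x^j$, where $b_k$ is a unit of $R$ and $k=\deg g(x)$. If $f(x)=0$ or $\deg f(x)<k$, we take $q(x)=0$ and $r(x)=f(x)$, and there is nothing to prove. So let $f(x)=\sum_{i=0}^{n} a_i x^i$ with $a_n\neq 0$ and $n\ge k$, and assume the statement holds for every polynomial of degree less than $n$.

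The key step is to cancel the leading term of $f(x)$ by a left multiple of $g(x)$. Since the ring is skew, we compute the leading term of $c\,x^{n-k}g(x)$ for $c\in R$. The rule $xa=\sigma(a)x$ gives, by induction on the exponent, $x^{l}a=\sigma^{l}(a)x^{l}$ for all $l\ge 0$. Hence
\[
c\,x^{n-k}g(x)=\sum_{j=0}^{k} c\,\sigma^{n-k}(b_j)\,x^{n-k+j},
\]
whose coefficient of $x^{n}$ is $c\,\sigma^{n-k}(b_k)$. Because $\sigma$ is a ring automorphism, $\sigma^{n-k}(b_k)$ is again a unit, with inverse $\sigma^{n-k}(b_k^{-1})$. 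So we choose
\[
c=a_n\,\sigma^{n-k}(b_k)^{-1}=a_n\,\sigma^{n-k}(b_k^{-1}).
\]
This is the only point where we use the hypothesis that the leading coefficient of $g(x)$ is a unit, and the only place the argument differs from the commutative case.

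Now set $f_1(x)=f(x)-c\,x^{n-k}g(x)$. The coefficient of $x^{n}$ in $f_1(x)$ is $a_n-a_n=0$, and no higher powers of $x$ occur. Hence $f_1(x)=0$ or $\deg f_1(x)<n$.

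- If $f_1(x)=0$, we take $q(x)=c\,x^{n-k}$ and $r(x)=0$.
- Otherwise the induction hypothesis gives $f_1(x)=q_1(x)g(x)+r(x)$ with $r(x)=0$ or $\deg r(x)<k$. Then
\[
f(x)=\bigl(q_1(x)+c\,x^{n-k}\bigr)g(x)+r(x),
\]
which is the required decomposition.

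Regrouping in this way uses only distributivity and associativity in $R[x,\sigma]$, which hold by definition. The statement asks only for existence, so uniqueness is not needed. It would follow in the same way, since $b_k$ is a unit and hence not a zero divisor, which forces $\deg(q(x)g(x))=\deg q(x)+k$. The one real obstacle is the leading-coefficient computation above: one must place $c$ on the left and twist the coefficients of $g(x)$ by $\sigma^{n-k}$. Once that is done, the rest is a routine degree induction.
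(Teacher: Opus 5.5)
Your proof is correct. Choosing $c=a_n\,\sigma^{n-k}(b_k)^{-1}$ on the left so that $c\,x^{n-k}g(x)$ cancels the leading term of $f(x)$, and then inducting on $\deg f(x)$, is the standard argument. The paper gives no proof of its own and only cites the result from the literature, where it is established by this same leading-term cancellation.
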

A ring $R$ with unity is a \emph{principal left ideal ring} if every left ideal is 
principally generated. A ring $R$ is called a \emph{local ring} if $R$ has a unique maximal 
right (left) ideal. Furthermore, a ring $R$ is called a \emph{left chain ring} if the set 
of all left ideals of $R$ is linearly ordered under set-theoretic inclusion.
For any two left ideals $I$ and $J$ of a ring $R$, we use $(I:_RJ)$ to denote their ideal quotient, that is, 
$$(I:_RJ)=\{x\in R : xJ \subset I \}.$$
For a finite ring $R$ with unity, an automorphism $\sigma$ of $R$, and $f(x)$ in the center of $R[x,\sigma]$,  left or right ideals of the ring $\frac{R[x,\sigma]}{\langle f(x)\rangle }$ are called \emph{skew polycyclic codes} associated to $f$ over $R.$ If $\alpha \in R$ is a central unit fixed by $\sigma$ and $|\sigma| \mid n$, then any left or right ideal $C$ of the quotient ring $\frac{R[x,\sigma]}{\langle x^n-\alpha\rangle}$ is called an \emph{$(\alpha,\sigma)$-constacyclic code} of length $n$ over $R$. In particular, when $\alpha = 1$, $C$ is called a \emph{skew cyclic code}, and when $\alpha = -1$, it is called a \emph{skew negacyclic code}.

Let $\theta$ be an automorphism of $\mathbb{F}_{p^m}$ and $\mathbb{F}_{p^m}[x,\theta]$ be the skew polynomial ring.
We say that $f(x)$ is a \emph{right divisor} (\emph{left divisor}) of $g(x)$ in $\mathbb{F}_{p^m}[x,\theta]$ and we write 
$f(x)\mid_r g(x)$ (resp. $f(x)\mid_l g(x)$) if there exists a skew polynomial $h(x)$ such that 
$g(x)=h(x)f(x)$ (resp. $g(x)=f(x)h(x)$).

\begin{definition}
    Suppose $f(x), g(x)$ are skew polynomials in $\mathbb{F}_{p^m}[x,\theta]$.  
The \emph{greatest common right divisor} of $f(x)$ and $g(x)$ is the monic polynomial 
$d(x)\in \mathbb{F}_{p^m}[x,\theta]$, where $d(x)\mid_r f(x)$, $d(x)\mid_r g(x)$ and for any 
$e(x)\in \mathbb{F}_{p^m}[x,\theta]$ such that $e(x)\mid_r f(x)$ and $e(x)\mid_r g(x)$, 
we have $e(x)\mid_r d(x)$.  
We denote $d(x)$ by $\gcd_r(f(x),g(x))$.

\end{definition}

\begin{definition}
The \emph{least common left multiple} ($\operatorname{lcm_l}$) of $f(x)$ and $g(x)$ is the unique monic 
polynomial $m(x)=\operatorname{lcm_l}(f(x),g(x))$ such that 
$f(x)\mid_r m(x)$, $g(x)\mid_r m(x)$ and for any $n(x)\in \mathbb{F}_{q}[x,\theta]$ such that 
$f(x)\mid_r n(x)$ and $g(x)\mid_r n(x)$, we have $m(x)\mid_r n(x)$.
\end{definition}

Let $\frac{\mathbb{F}_{p^m}[u]}{\langle u^t\rangle}$ be denoted by $R^t$, where $t\in \mathbb{N}.$ It is well known that $R^t$ is a chain ring. An arbitrary element $a$ of $R^t$ can be uniquely written as $a_0+a_1u+\dots+a_{t-1}u^{t-1},$ where $a_i\in\mathbb{F}_{p^m}$ for $0\leq i\leq t-1.$ For an automorphism $\theta$ of $\mathbb{F}_{p^m}$, the center of the skew polynomial ring $\mathbb{F}_{p^m}[x,\theta]$ is $\mathbb{F}_{p^m}^{\theta}[x^{|\theta|}]$, where $\mathbb{F}_{p^m}^{\theta}$ is the fixed field of $\theta$ and $|\theta|$ denotes the order of $\theta.$ The map $\Theta:R^t\rightarrow R^t$ defined by $\Theta(a)=\theta (a_0)+\theta(a_1)u+\dots+\theta(a_{t-1})u^{t-1}$ is an automorphism of $R^t$. Obviously,  $|\Theta|=|\theta|$. Let $R^t [x,\Theta]$ be the skew polynomial ring with usual addition and multiplication given by $xr=\Theta(r)x.$ Further assume that $f(x)=f_0+f_1x^{|\Theta|}+\dots+f_rx^{r|\Theta|}\in R^{t}[x,\Theta]$ is a polynomial such that $\Theta(f_i)=f_i$ for $0\le i \le n$. Then $f(x)$ is also in the center of $R^t[x,\Theta]$ and hence a left (or right) ideal generated by $f(x)$ is a two-sided ideal  \cite{mcdonald1974finite}. Let us denote by $R^{t,f}_{\Theta}$ the quotient ring $\frac{R^t[x,\Theta]}{\langle f(x)\rangle}.$ We denote the image of $f(x)$ under modulo $u$ map by $\overline{f(x)}.$ We further assume that $\deg\overline{f(x)}=d.$ In that case $\langle \overline{f(x)}\rangle$ is a two sided ideal in $\mathbb{F}_{p^m}[x,\theta].$
In particular case when $f(x)$ is $x^{np^s}-\lambda$, where $\lambda =\lambda_0+u\lambda_1+\dots+u^{t-1}\lambda_{t-1}\in R^t,$ $\lambda_i\in\mathbb{F}_{p^m}$, for $0\leq i\leq t-1$, and $\lambda_0\ne 0$ with the assumption that $|\Theta|=|\theta|$ divides $np^s$ and $\lambda$ is fixed by $\Theta,$ then $\langle x^{np^s}-\lambda \rangle$ is a two-sided ideal in $R^t [x,\Theta]$ and in this case we denote $\frac{R^t[x,\Theta]}{\langle x^{np^s}-\lambda \rangle}$ by $R^{t,\,x^{np^s}-\lambda}_{\Theta}$. With these notations, we give the following definition.
\begin{definition}
For a left ideal $I$ of $R^{t,f}_{\Theta}$, the $i$-th torsion of $I$, denoted by $\textnormal{Tor}_i(I)$ is given by,$$\textnormal{Tor}_i(I):=\mu \{c(x)\in R^{t,f}_{\Theta}\,:\, c(x)u^{i}\in I \},$$ for $0\leq i \leq t-1$.
\end{definition}
From the definition of the $i$-th torsion of a left ideal $I$ in $R^{t,f}_{\Theta},$ it is easy to see that $\textnormal{Tor}_i(I)$ is a left ideal in $\frac{\mathbb{F}_{p^m}[x,\theta]}{\langle \overline{f(x)} \rangle}$ and $\textnormal{Tor}_i(I)\subset\textnormal{Tor}_{i+1}(I)$ for $0\leq i\leq t-2.$
\section{Skew Polycyclic Codes over \texorpdfstring{$\frac{\mathbb{F}_{p^m}[u]}{\langle u^t\rangle}$}{}}\label{sec3}
With the assumptions and notations given in Section \ref{section2}, we begin by proving a result of independent interest that will play a key role in determining the structure of the left ideals (right ideals) of $R^{t,f}_{\Theta}$, and it is an extension of Proposition 2.1 of \cite{AkaKanSar} in a non-commutative setting.
\begin{proposition}\label{FormofIdeals}
Let $A$ and $B$ be two rings with unity and let $\phi: A \rightarrow B$ be a surjective ring homomorphism with $ ker(\phi)=A \pi.$ If $I$ is a left (respectively, right) ideal of $A$ such that $\phi(I)$ is a finitely generated left (respectively, right) ideal of $B$, then there exist $x_1,x_2,\dots,x_n$ in $I$ such that 
$$I= Ax_1+Ax_2+\dots+Ax_n+(I:_{A}\pi)\pi,$$where $(I:_{A}\pi)=\{a\in A\,|\, a\pi\in I\}.$
\end{proposition}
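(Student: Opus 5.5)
The plan is to lift generators of $\phi(I)$ back to $I$, then account for the rest of $I$ through the kernel $A\pi$. I treat the left-ideal case; the right-ideal case is symmetric, with $\ker(\phi)=\pi A$ understood appropriately, or by passing to opposite rings.

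First I check that $\phi(I)$ is a left ideal of $B$. Given $b\in B$ and $y\in I$, surjectivity of $\phi$ gives $a\in A$ with $\phi(a)=b$. Then $b\phi(y)=\phi(ay)\in\phi(I)$. By hypothesis, $\phi(I)=By_1+\dots+By_n$ for some $y_j\in\phi(I)$. For each $j$ I pick $x_j\in I$ with $\phi(x_j)=y_j$.

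Next I set $J=Ax_1+\dots+Ax_n+(I:_A\pi)\pi$. The inclusion $J\subseteq I$ is immediate: each $x_j\in I$ and $I$ is a left ideal. Also, if $a\in(I:_A\pi)$, then $a\pi\in I$ by definition. Since $I$ is closed under addition, $J\subseteq I$.

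For the reverse inclusion, take $z\in I$ and write $\phi(z)=\sum_j b_j y_j$. Using surjectivity again, choose $a_j\in A$ with $\phi(a_j)=b_j$. Then
\[
\phi\Bigl(z-\sum_j a_jx_j\Bigr)=0,
\]
so $z-\sum_j a_jx_j\in\ker(\phi)=A\pi$. Hence $z-\sum_j a_jx_j=c\pi$ for some $c\in A$. Because $z$ and each $a_jx_j$ lie in $I$, we get $c\pi\in I$, that is, $c\in(I:_A\pi)$. Therefore $z\in J$.

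The only delicate point is the noncommutative bookkeeping. Since $\ker(\phi)$ is written as $A\pi$, kernel elements have the form $c\pi$ with $\pi$ on the right. This matches exactly the definition $(I:_A\pi)=\{a : a\pi\in I\}$, so no centrality of $\pi$ is needed. For right ideals, one writes the kernel as $\pi A$ and uses the corresponding right quotient.
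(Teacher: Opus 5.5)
Your proof is correct and follows the paper's argument: you lift generators of $\phi(I)$ to elements $x_j\in I$, then absorb $z-\sum_j a_jx_j$ into $\ker(\phi)\cap I=(I:_A\pi)\pi$, spelling out both this identification and the inclusion $J\subseteq I$, which the paper leaves implicit. For the right-ideal case you do not need to switch to $\pi A$: the identity $\ker(\phi)\cap I=(I:_A\pi)\pi$ holds verbatim for any one-sided ideal $I$ when $\ker(\phi)=A\pi$, and only the generators change, to $x_jA$.
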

\begin{proof}We are given that $\phi(I)$ is finitely generated left ideal of $B$. Let us assume that  $\phi(I)= By_1+By_2+\dots+By_n \lhd_{l}B$, where $y_1,y_2,\dots,y_n$ in $\phi(I)$. Then we have $x_1,x_2,\dots,x_n\in I$ such that $\phi(x_i)=y_i$ for $1\leq i\leq n$. Let $x\in I,$ then there exist $z_1,z_2,\dots,z_n\in B$ such that $\phi(x)=z_1y_1+z_2y_2+\dots+z_ny_n$ (as $\phi(I)$ is a left ideal). For $1\leq i\leq n$, there exists $u_i\in A$ such that $\phi(u_i)=z_i$ (as $\phi$ is surjective). Thus $\phi(x)=\phi(u_1)\phi(x_1)+\phi(u_2)\phi(x_2)+\dots+\phi(u_n)\phi(x_n)=\phi(u_1x_1+u_2x_2+\dots+u_nx_n)$. Then $x-\underset{i=1}{\overset{n}{\sum}}u_ix_i\in \ker(\phi)=\langle \pi\rangle$. In fact, $x-\underset{i=1}{\overset{n}{\sum}}u_ix_i\in  ker(\phi)\cap I.$ Hence, $I\subset Ax_1+Ax_2+\dots+Ax_n +ker(\phi)\cap I. $ It is easy to see that $ker(\phi)\cap I=(I:_{A}\pi)\pi,$ where $(I:_{A}\pi)=\{a\in A\,|\, a\pi\in I\}.$\hfill $\square$
\end{proof}
Further, we define a surjective ring homomorphism from $R^{t,f}_{\Theta}$ to $\frac{\mathbb{F}_{p^m}[x,\theta]}{\langle  \overline{f(x)} \rangle}$ in the following lemma.
\begin{lemma}
Let $\mu :R^{t,f}_{\Theta}\rightarrow \frac{\mathbb{F}_{p^m}[x,\theta]}{\langle  \overline{f(x)} \rangle}$ such that $\mu(c(x))=c(x) (mod\, u).$ Then $\mu$ is a surjective ring homomorphism with $\ker (\mu)=\langle u \rangle$.
\end{lemma}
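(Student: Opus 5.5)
The plan is to build $\mu$ in two stages: first reduce modulo $u$ at the level of skew polynomial rings, then pass to the quotients. Define $\tilde\mu: R^t[x,\Theta]\to \mathbb{F}_{p^m}[x,\theta]$ by reducing every coefficient modulo $u$, that is, $\sum_i c_i x^i\mapsto \sum_i \overline{c_i}\,x^i$. Additivity is clear. For multiplicativity it suffices to check that the reduction map $R^t\to\mathbb{F}_{p^m}$, $a\mapsto a_0$, is a ring homomorphism that intertwines the automorphisms, i.e. $\overline{\Theta(a)}=\theta(\overline{a})$. This holds because $\Theta(a)=\theta(a_0)+u(\cdots)$ by the definition of $\Theta$. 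Given this, the product rule $(ax^i)(bx^j)=a\Theta^i(b)x^{i+j}$ is carried to $\overline{a}\,\theta^i(\overline b)x^{i+j}$, so $\tilde\mu$ is a ring homomorphism. It is surjective since $\mathbb{F}_{p^m}\subset R^t$. Its kernel is the set of polynomials all of whose coefficients lie in $uR^t$, which is exactly $uR^t[x,\Theta]$. Since $\Theta(u)=u$, the element $u$ is central, so $uR^t[x,\Theta]=R^t[x,\Theta]u$ is a two-sided ideal.

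Next, compose $\tilde\mu$ with the projection onto $\frac{\mathbb{F}_{p^m}[x,\theta]}{\langle \overline{f(x)}\rangle}$. Since $\tilde\mu(f)=\overline{f(x)}$ and $\langle f(x)\rangle$ is two-sided (because $f$ is central), we have $\tilde\mu(\langle f\rangle)=\langle\overline f\rangle$, and the map factors through $R^{t,f}_{\Theta}$. This gives a well-defined surjective ring homomorphism $\mu$. Here one should note that $\overline{f(x)}$ is central in $\mathbb{F}_{p^m}[x,\theta]$: its coefficients $\overline{f_i}$ are fixed by $\theta$ and it involves only powers $x^{k|\theta|}$, so the target quotient ring makes sense.

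The main point is the kernel. It is clear that $\langle u\rangle\subseteq\ker\mu$. Conversely, suppose $c(x)\in R^t[x,\Theta]$ satisfies $\overline{c(x)}\in\langle\overline{f(x)}\rangle$, say $\overline{c}=\overline{g}\,\overline{f}$ for some $g\in\mathbb{F}_{p^m}[x,\theta]\subset R^t[x,\Theta]$. Then $\tilde\mu(c-gf)=0$, so $c-gf\in uR^t[x,\Theta]$, and hence $c\in\langle f\rangle+\langle u\rangle$. Passing to the quotient, $\ker\mu=\langle u\rangle$ in $R^{t,f}_{\Theta}$. The only delicate step is lifting the multiple $\overline g\,\overline f$ back; it works because $\mathbb{F}_{p^m}$ embeds in $R^t$ compatibly with the automorphisms, so no division algorithm is needed.
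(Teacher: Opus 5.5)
Your proof is correct and complete. The paper states this lemma without any proof, treating it as routine, so there is no argument of the authors' to compare against. Your write-up supplies exactly the verification they leave implicit:
\begin{itemize}
\item coefficientwise reduction $\tilde\mu\colon R^t[x,\Theta]\to\mathbb{F}_{p^m}[x,\theta]$ is a ring homomorphism because $\overline{\Theta(a)}=\theta(\overline{a})$;
\item its kernel is $uR^t[x,\Theta]$, which is two-sided since $\Theta(u)=u$;
\item $\ker\mu=\tilde\mu^{-1}(\langle\overline{f(x)}\rangle)/\langle f(x)\rangle=(\langle f(x)\rangle+uR^t[x,\Theta])/\langle f(x)\rangle$.
\end{itemize}

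Two optional streamlinings. First, lifting $\overline{g}$ only uses that $\tilde\mu$ is a surjective homomorphism: any preimage $g$ gives $\tilde\mu(gf)=\overline{g}\,\overline{f}$. The compatibility of $\mathbb{F}_{p^m}\subset R^t$ with the automorphisms is needed for $\tilde\mu$ to be multiplicative, not for this lifting step. Second, centrality of $\overline{f(x)}$ follows at once from $\overline{f(x)}=\tilde\mu(f(x))$, since $f(x)$ is central and $\tilde\mu$ is surjective; you do not need to inspect its coefficients.
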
 
Consider the set
\[
\mathcal{B}_{\overline{f}}:=\left\{a(x)\in \frac{\mathbb{F}_{p^m}[x,\theta]}{\langle \overline{f(x)} \rangle}\;:\; a(x)\mid_{r}\overline{f(x)}\right\}.
\]
The structure of the left ideals of the ring $\frac{\mathbb{F}_{p^m}[x,\theta]}{\langle \overline{f(x)} \rangle}$ is given by the following theorem.
\begin{theorem}
The ring $\frac{\mathbb{F}_{p^m}[x,\theta]}{\langle\overline{ f(x)}\rangle}$ is a left principal ideal ring. The left ideals are given by $a(x)+\langle \overline{ f(x)}\rangle$, where $a(x)\in\mathcal{B}_{\overline{f}}$.
\end{theorem}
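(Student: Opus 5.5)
The plan is to pull everything back to the skew polynomial ring $\mathbb{F}_{p^m}[x,\theta]$ and use the right division algorithm there. Since $\mathbb{F}_{p^m}$ is a field, every nonzero polynomial has unit leading coefficient, so the Right Division Algorithm stated above applies to any nonzero divisor. Moreover, $\overline{f(x)}$ lies in the center of $\mathbb{F}_{p^m}[x,\theta]$: its coefficients are fixed by $\theta$, and only powers of $x^{|\theta|}$ occur. Hence $\langle \overline{f(x)}\rangle$ is two-sided and the quotient is a ring. Left ideals of the quotient correspond bijectively to left ideals $J$ of $\mathbb{F}_{p^m}[x,\theta]$ containing $\mathbb{F}_{p^m}[x,\theta]\,\overline{f(x)}$. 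It therefore suffices to show that every left ideal of $\mathbb{F}_{p^m}[x,\theta]$ is principal, generated by a monic polynomial, and then to identify which generators give ideals containing $\overline{f(x)}$.

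First I show that $\mathbb{F}_{p^m}[x,\theta]$ is a principal left ideal domain. Let $J\neq 0$ be a left ideal and pick $a(x)\in J$ nonzero of minimal degree; after left multiplying by the inverse of its leading coefficient, $a(x)$ is monic. For any $g(x)\in J$, right division gives $g(x)=q(x)a(x)+r(x)$ with $r=0$ or $\deg r<\deg a$. Since $r(x)=g(x)-q(x)a(x)\in J$, minimality forces $r=0$, so $J=\mathbb{F}_{p^m}[x,\theta]\,a(x)$. The same argument shows that the monic generator is unique. Indeed, if two monic polynomials $a,a'$ generate $J$, each is a left multiple of the other, so they have equal degree and the multiplier is a unit constant, which must be $1$ because both are monic.

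Next I match the generators with the set $\mathcal{B}_{\overline{f}}$. The ideal $J=\mathbb{F}_{p^m}[x,\theta]\,a(x)$ contains $\overline{f(x)}$ exactly when $\overline{f(x)}=h(x)a(x)$ for some $h(x)$, that is, when $a(x)\mid_r\overline{f(x)}$. The image of $J$ in the quotient is then the principal left ideal generated by $a(x)+\langle\overline{f(x)}\rangle$. So every left ideal of the quotient has the stated form with $a(x)\in\mathcal{B}_{\overline{f}}$, and conversely every such $a(x)$ yields a left ideal. The zero ideal corresponds to $a=\overline{f}$ and the whole ring to $a=1$.

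The one delicate point is to avoid tacitly treating the ring as commutative. One must keep the left/right conventions straight: left ideals pair with right division, $g=qa+r$, and with right divisors. The centrality of $\overline{f}$ is needed only so that the quotient is a ring and so that left ideals of the quotient correspond to left ideals of $\mathbb{F}_{p^m}[x,\theta]$ containing $\overline{f}$. Beyond this bookkeeping I expect no real obstacle.
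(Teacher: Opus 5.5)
Your proposal is correct and is essentially the argument the paper relies on. The paper gives no proof of its own and only defers to Proposition 2.6 of \cite{hesari2025skew}, which is this standard argument: the right division algorithm makes $\mathbb{F}_{p^m}[x,\theta]$ a principal left ideal domain, and the correspondence theorem for the two-sided ideal $\langle\overline{f(x)}\rangle$ identifies the left ideals of the quotient with those generated by right divisors of $\overline{f(x)}$. Your write-up spells out in full the details the paper leaves to that citation.
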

\begin{proof}
The argument of the proof is similar to Proposition 2.6 in  \cite{hesari2025skew}. 
\end{proof}
We now describe the structure of the left ideals of $R^{t,f}_{\Theta}$.
    
\begin{theorem}\label{nmain theorem}
The left ideals of the ring $R^{t,f}_{\Theta}$ have one of the following forms.
\begin{itemize}
\item [(i)] Trivial ideals $\langle 0\rangle,$ $\langle 1 \rangle.$
\item [(ii)] Any generator of a non-trivial left ideal contained in $\langle u\rangle $ has the form:
        $$u^{(t-1)-i}b_i(x)-u^{(t-1)-(i-1)}g_i(x),$$
        where $g_i(x)\in R^{t,f}_{\Theta}$, $b_{i}(x)\in\mathcal{B}_{\overline{f}}, \deg b_i(x)\leq d-1$ for $0\leq i \leq t-2.$ An arbitrary left ideal contained in $\langle u \rangle$ is given as:
\begin{align*}
        R^{t,f}_{\Theta}&(u^{(t-1)-i_1}b_{i_1}(x)-u^{(t-1)-(i_1-1)}g_{i_1}(x))+\dots+\\&R^{t,f}_{\Theta}( u^{(t-1)-i_n}b_{i_n}(x)-u^{(t-1)-(i_n-1)}g_{i_n}(x)),
\end{align*} where $0\leq i_1<i_2<\dots<i_n\leq t-2$, $ \deg b_{i_j}(x)\leq d-1$, and $g_{i_j}(x)\in R^{t,f}_{\Theta}$ for $1\leq j\leq n$, and whenever $u^{(t-1)-i_j}c(x)+u^{(t-1)-(i_j-1)}g(x)\in R^{t,f}_{\Theta} (u^{(t-1)-i_k}b_{i_k}(x)-u^{(t-1)-(i_k-1)}g_{i_k}(x))+\cdots+R^{t,f}_{\Theta} (u^{(t-1)-i_n}b_{i_n}(x)-u^{(t-1)-(i_n-1)}g_{i_n}(x))$ for some $c(x)\in \frac{\mathbb{F}_{p^m}[x,\theta]}{\langle\overline{ f(x)}\rangle}$, $g(x)\in R^{t,f}_{\Theta}$, and $j<k,$ then $b_{i_j}(x)\mid_r c(x).$
\item[(iii)] Any non-trivial ideal not contained in $\langle u \rangle$ has the form:
        $$ R^{t,f}_{\Theta}(b(x)+ ug(x)) +I,$$ where $g(x)\in R^{t,f}_{\Theta}$, $b(x)\in\mathcal{B}_{\overline{f}},\,\deg b(x)\leq d-1$, $I $ is an ideal described as in Type (ii), and whenever $u^{(t-1)-i_j}c(x)+u^{(t-1)-(i_j-1)}g(x)\in R^{t,f}_{\Theta} (u^{(t-1)-i_k}b_{i_k}(x)-u^{(t-1)-(i_k-1)}g_{i_k}(x))+\cdots+R^{t,f}_{\Theta} (u^{(t-1)-i_n}b_{i_n}(x)-u^{(t-1)-(i_n-1)}g_{i_n}(x))$ for some $c(x)\in \frac{\mathbb{F}_{p^m}[x,\theta]}{\langle\overline{ f(x)}\rangle}$,  $g(x)\in R^{t,f}_{\Theta},$ and $j<k,$ then $b_{i_j}(x)\mid_r c(x).$
\end{itemize}
\end{theorem}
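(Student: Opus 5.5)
We will argue by repeatedly applying Proposition \ref{FormofIdeals} to the surjection $\mu$ of the preceding lemma, peeling off one power of $u$ at a time and using the earlier theorem that $\frac{\mathbb{F}_{p^m}[x,\theta]}{\langle \overline{f(x)}\rangle}$ is a principal left ideal ring whose left ideals are generated by right divisors $a(x)\in\mathcal{B}_{\overline f}$. Since $\ker\mu=\langle u\rangle$ and $u$ is central in $R^{t,f}_{\Theta}$ (because $\Theta(u)=u$), we may take $\pi=u$. Let $C$ be a non-trivial left ideal. Then $\mu(C)$ is principal, say $\mu(C)=\langle b(x)\rangle$ with $b(x)\mid_r \overline{f(x)}$ monic and, after reducing modulo $\overline{f(x)}$, $\deg b(x)\le d-1$. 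Proposition \ref{FormofIdeals} with $n=1$ then gives
\[C=R^{t,f}_{\Theta}(b(x)+ug(x))+(C:u)u ,\]
where $b(x)+ug(x)\in C$ is any lift of $b(x)$. If $\mu(C)=0$, then $C\subseteq\langle u\rangle$; this is case (ii). Otherwise we are in case (iii) with $I=(C:u)u$, which is contained in $\langle u\rangle$. The statement therefore reduces to showing that every left ideal contained in $\langle u\rangle$ has the form described in (ii).

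For a left ideal $J\subseteq\langle u\rangle$ we filter by powers of $u$. Put $J_k=J\cap\langle u^{k}\rangle$ for $1\le k\le t$. Each element of $J_k$ can be written $u^k h(x)$, and the classes $\mu(h)$ modulo the contribution from $J_{k+1}$ form a left ideal of $\frac{\mathbb{F}_{p^m}[x,\theta]}{\langle\overline{f(x)}\rangle}$. This ideal is well defined because $u^k h=u^kh'$ forces $h-h'\in\langle u^{t-k}\rangle$, and there is no ambiguity once we work downward from $k=t-1$. It is precisely $\mu$ applied to $(J:u^k)$, i.e.\ a torsion-type ideal, so it is principal and generated by some $b\in\mathcal{B}_{\overline f}$. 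Iterating Proposition \ref{FormofIdeals} on $(J:u^k)$ with $\pi=u$, for $k=t-1,t-2,\dots,1$, shows that $J$ is generated by lifts of these generators: one element $u^{k}b(x)+u^{k+1}g(x)$ for each level $k$ at which the torsion ideal strictly grows. Setting $k=(t-1)-i$ with $0\le i\le t-2$ gives the displayed form $u^{(t-1)-i}b_i(x)-u^{(t-1)-(i-1)}g_i(x)$. The exponents obtained this way are distinct, which yields $i_1<\dots<i_n$. The indexing must be kept consistent with the levels $u^{(t-1)-i}$, and the lifts are chosen as monic right divisors of degree at most $d-1$.

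The remaining task, and the main technical point, is the compatibility (minimality) condition. It says that if $u^{(t-1)-i_j}c(x)+u^{(t-1)-(i_j-1)}g(x)$ lies in the sum generated at the deeper levels $k>j$, then $b_{i_j}\mid_r c$. We will derive it from how $b_{i_j}$ was chosen: $b_{i_j}$ generates the whole level-$((t-1)-i_j)$ torsion ideal of $J$, and any such element lies in $J$ and contributes $c$ to that level. Hence $c\in\langle b_{i_j}\rangle$, and by the principal left ideal structure together with right division in $\mathbb{F}_{p^m}[x,\theta]$, this gives $b_{i_j}\mid_r c$. The delicate part is the bookkeeping. We must make sure that the torsion ideal at a given level is generated by $b_{i_j}$ together with all contributions coming from the generators at the other levels, and that we have not double counted generators whose image at that level is already accounted for. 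We will handle this by choosing $b_{i_j}$ as the generator of the full level ideal $\mu((J:u^{(t-1)-i_j}))$ rather than of a quotient. This choice makes the divisibility condition automatic, and the condition in (iii) then follows by applying the same argument to $I=(C:u)u$.
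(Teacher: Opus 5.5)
Your argument is correct, but it is organized differently from the paper's. The paper inducts on $t$. It reduces modulo $u^{t-1}$ via $\Phi$, applies the inductive hypothesis in $R^{t-1}$ to $\Phi(I)$, and lifts those generators by Proposition~\ref{FormofIdeals} with $\pi=u^{t-1}$. It then adjoins at most one new generator $u^{t-1}b_1(x)$ coming from $\mu(I:u^{t-1})$. The divisibility conditions at the levels $u^k$, $k\le t-2$, are inherited from $R^{t-1}$ through $\Phi$, and only the condition for the new deepest generator is checked directly.

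You never leave $R^{t,f}_{\Theta}$ and use only $\mu$. Since $u$ is central, $J=u(J:u)$ and $((J:u^k):u)=(J:u^{k+1})$. Proposition~\ref{FormofIdeals} with $\pi=u$ then gives
\[
u^k(J:u^k)=R^{t,f}_{\Theta}\,u^k\bigl(b_k(x)+uq_k(x)\bigr)+u^{k+1}(J:u^{k+1}).
\]
Telescoping over $k=1,\dots,t-1$ yields all generators at once, each $b_k(x)$ being the monic generator of $\mu(J:u^k)=\Tor_k(J)$.

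What your route buys:
\begin{itemize}
\item The divisibility condition becomes a one-line consequence at every level simultaneously. In fact you get $b_{i_j}(x)\mid_r c(x)$ for every element $u^{(t-1)-i_j}c(x)+u^{(t-1)-(i_j-1)}g(x)$ of $J$, not just of the partial sum.
\item Since $(C:u)u=C\cap\langle u\rangle$ has the same $k$-th torsion as $C$ for $k\ge 1$, it identifies the $b$'s in (iii) with the torsion codes of $C$ itself. This is exactly what underlies the added conditions and the torsion computations in Section~\ref{sec6}.
\end{itemize}
The paper's induction instead shows how the generating set over $R^t$ arises from the one over $R^{t-1}$ by adding at most one generator.

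Two cosmetic points. First, the discussion of classes ``modulo the contribution from $J_{k+1}$'' and of well-definedness is unnecessary once you identify the level ideal with $\mu(J:u^k)$, which is a left ideal by construction. Second, if you keep only the levels where the torsion strictly grows, say why the others are redundant. When $\mu(J:u^{k+1})=\mu(J:u^k)$, the lift $b_k(x)+uq_k(x)$ also lies in $(J:u^{k+1})$, so the level-$(k+1)$ generator can be taken to be $u$ times the level-$k$ one. Alternatively, simply keep all nonzero levels, since the theorem does not require minimality.
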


    \begin{proof} Let $I$ be a nontrivial ideal of $R^{t,f}_{\Theta}$ contained in $\langle u \rangle$. 
 We will prove the result by induction on $t$. Note that the result trivially holds for $t=1$. Let $t>1$ and assume that the result holds for $t-1$. \\Define two maps, $\Phi:R^{t,f}_{\Theta}\rightarrow R^{t-1, f\,(\textnormal{ mod }u^{t-1})}_{\Theta|_{R^{t-1}}}:=\frac{R^{t-1}[x,\Theta|_{R^{t-1}}]}{\langle f(x)\,(\textnormal{ mod }u^{t-1})\rangle}$ such that $\Phi(c(x))=c(x)(\textnormal{mod } u^{t-1}$) and
 $ \mu :R^{t,f}_{\Theta}\rightarrow \frac{\mathbb{F}_{p^m}[x,\theta]}{\langle\overline{ f(x)}\rangle}$ such that $\mu(c(x))=c(x)(\textnormal{ mod }u).$

Let $I \subset \langle u \rangle$, then $\Phi(I) \subset \langle u \rangle $.

    Case 1. Let $\Phi(I)=\langle 0 \rangle.$
    Then, by Proposition \ref{FormofIdeals}, $I=(I:u^{t-1})u^{t-1}$. Since $I$ is non-trivial, $(I:u^{t-1})\not\subset \langle u\rangle$.  Note that $\mu(I:u^{t-1})$ is a non-trivial ideal of $\frac{\mathbb{F}_{p^m}[x,\theta]}{\langle\overline{ f(x)}\rangle}$. 
    Thus $\mu(I:u^{t-1}) =\frac{\mathbb{F}_{p^m}[x,\theta]}{\langle\overline{ f(x)}\rangle} b_1(x),$ for some $b_1(x)\in\mathcal{B}_{\overline{f}}$. Then, by Proposition \ref{FormofIdeals}, $(I:u^{t-1})=R^{t,f}_{\Theta}(b_1(x)+ug(x))+u(I:u^t),$ for some $g(x)\in R^{t,f}_{\Theta}$. Hence, $I=\langle u^{t-1} b_1(x)\rangle,$ for some $b_1(x)\in\mathcal{B}_{\overline{f}}.$

    Case 2. $\Phi(I)\not=\langle 0 \rangle.$ 

    Since $\Phi(I)\subset \langle u\rangle$, by induction hypothesis
    \begin{align*}
\Phi(I)=R^{t-1, f\,(\textnormal{ mod }u^{t-1})}_{\Theta|_{R^{t-1}}}&\big( u^{(t-2)-i_1}b_{i_1}(x)+u^{(t-2)-(i_1-1)}g_{(i_1-1)}(x)\big)+ R^{t-1, f\,(\textnormal{ mod }u^{t-1})}_{\Theta|_{R^{t-1}}}\big( u^{(t-2)-i_2}\\& b_{i_2}(x)+u^{(t-2)-(i_2-1)}
g_{(i_2-1)}\big)+\dots+ R^{t-1, f\,(\textnormal{ mod }u^{t-1})}_{\Theta|_{R^{t-1}}}\big(u^{(t-2)-i_n}b_{i_n}(x)\\& +u^{(t-2)-(i_n-1)}g_{(i_n-1)}(x)\big), 
\end{align*} where $0\leq i_1<i_2<\dots<i_n\leq t-2$, $b_{i_j}(x)\in \mathcal{B}_{\overline{f}}$ for $1\leq j\leq n$, and $g_{i_j-1}(x)\in R^{t-1, f\,(\textnormal{ mod }u^{t-1})}_{\Theta|_{R^{t-1}}}$ for $1\leq j\leq n.$

Then, by Proposition \ref{FormofIdeals}, we have,
\begin{align*}
I=R^{t,f}_{\Theta}& \big(u^{(t-2)-i_1}b_{i_1}(x)+u^{(t-2)-(i_1-1)}g_{(i_1-1)}(x)+u^{t-1}q_1(x)\big)+R^{t,f}_{\Theta}\big( u^{(t-2)-i_2}b_{i_2}(x)+\\&u^{(t-2)-(i_2-1)}g_{(i_2-1)}(x)+u^{t-1}q_2(x)\big)+\dots+R^{t,f}_{\Theta}\big( u^{(t-2)-i_n}b_{i_n}(x)+u^{(t-2)-(i_n-1)}\\&g_{(i_n-1)}(x)+ u^{t-1}q_n(x)\big) +(I:u^{t-1})u^{t-1},
\end{align*}where  $q_1(x),q_2(x),\dots,q_n(x)\in \frac{\mathbb{F}_{p^m}[x,\theta]}{\langle\overline{ f(x)}\rangle}$, $0\leq i_1<i_2<\dots<i_n\leq t-2$.\\
Hence, 
\begin{align*}
 I=R^{t,f}_{\Theta}&\big(u^{(t-1)-(i_1+1)}b_{i_1}(x)+u^{(t-1)-i_1}g_{i_1-1}(x)+u^{t-1}q_1(x)\big)+ R^{t,f}_{\Theta}\big(u^{(t-1)-(i_2+1)}b_{i_2}(x)+\\&u^{(t-1)-i_2}g_{i_2-1}(x)+u^{t-1}q_2(x)\big)+\dots+R^{t,f}_{\Theta}\big(u^{(t-1)-(i_n+1)}b_{i_n}(x)+u^{(t-1)-i_n}g_{i_n-1}(x)\\&+u^{t-1}q_n(x)\big) +(I:u^{t-1})u^{t-1},   
\end{align*}where $0\leq i_1<i_2<\dots<i_n\leq t-2$, $b_{i_j}(x)\in \mathcal{B}_{\overline{f}}$ for $1\leq j\leq n$, and $g_{i_j-1}(x)\in R^{t-1, f\,(\textnormal{ mod }u^{t-1})}_{\Theta|_{R^{t-1}}}$ for $1\leq j\leq n.$ 
If $(I:u^{t-1})\subset \langle u \rangle,$ then $u^{t-1}(I:u^{t-1})=\langle 0\rangle$ and hence
\begin{align*}
    I=R^{t,f}_{\Theta}&\big(u^{(t-1)-(i_1+1)}b_{i_1}(x)+u^{(t-1)-i_1}g_{i_1-1}(x)+u^{t-1}q_1(x)\big)+ R^{t,f}_{\Theta}\big(u^{(t-1)-(i_2+1)}b_{i_2}(x)+\\&u^{(t-1)-i_2}g_{i_2-1}(x)+u^{t-1}q_2(x)\big)+\dots+R^{t,f}_{\Theta}\big(u^{(t-1)-(i_n+1)}b_{i_n}(x)+u^{(t-1)-i_n}g_{i_n-1}(x)\\&+u^{t-1}q_n(x)\big),
\end{align*} 
where $0\leq i_1<i_2<\dots<i_n\leq t-2$, $b_{i_j}(x)\in \mathcal{B}_{\overline{f}}$ for $1\leq j\leq n$, and $g_{i_j-1}(x)\in R^{t-1, f\,(\textnormal{ mod }u^{t-1})}_{\Theta|_{R^{t-1}}}$ for $1\leq j\leq n.$ 

Also,

$u^{(t-2)-i_j}c(x)+u^{(t-2)-(i_j-1)}g(x)\in R^{t-1, f\,(\textnormal{ mod }u^{t-1})}_{\Theta|_{R^{t-1}}} (u^{(t-2)-i_k}b_{i_k}(x)-u^{(t-2)-(i_k-1)}g_{i_k}(x))+\cdots+ R^{t-1, f\,(\textnormal{ mod }u^{t-1})}_{\Theta|_{R^{t-1}}} (u^{(t-2)-i_n}b_{i_n}(x)-u^{(t-2)-(i_n-1)}g_{i_n}(x)) \\\iff u^{(t-1)-(i_j+1)}c(x)+u^{(t-1)-i_j}g(x)\in R^{t,f}_{\Theta} (u^{(t-1)-(i_k+1)}b_{i_k}(x)-u^{(t-1)-i_k}g_{i_k}(x))+\cdots+R^{t,f}_{\Theta} (u^{(t-1)-(i_n+1)}b_{i_n}(x)-u^{(t-1)-i_n}g_{i_n}(x)) $. 

If $(I:u^{t-1})\not \subset \langle u \rangle,$ then $\mu(I:u^{t-1})$ is a non-zero ideal of $\frac{\mathbb{F}_{p^m}[x,\theta]}{\langle\overline{ f(x)}\rangle}$. Thus $\mu(I:u^{t-1})=\frac{\mathbb{F}_{p^m}[x,\theta]}{\langle\overline{ f(x)}\rangle}\big( b_1(x)\big),$ for some $b_1(x)\in \mathcal{B}_{\overline{f}}$. Then, by Proposition \ref{FormofIdeals}, we have that $(I:u^{t-1})=R^{t,f}_{\Theta}\big( b_1(x)+u q(x)\big) +u(I:u^{t}),$ for some $q(x)\in R^{t,f}_{\Theta}$. Consequently, we have
\begin{align*}
    I=R^{t,f}_{\Theta}&\big(u^{(t-1)-(i_1+1)}b_{i_1}(x)+u^{(t-1)-i_1}g_{i_1-1}(x)+u^{t-1}q_1(x)\big)+R^{t,f}_{\Theta}\big( u^{(t-1)-(i_2+1)}b_{i_2}(x)+\\&u^{(t-1)-i_2}g_{i_2-1}(x)+u^{t-1}q_2(x)\big)+\dots+R^{t,f}_{\Theta}\big(u^{(t-1)-(i_n+1)}b_{i_n}(x)+u^{(t-1)-i_n}g_{i_n-1}(x)\\&+u^{t-1}q_n(x)\big) +R^{t,f}_{\Theta}(u^{t-1}b_1(x)).
\end{align*}
Hence, 
\begin{align*}
    I=R^{t,f}_{\Theta}(&u^{t-1}b_1(x))+R^{t,f}_{\Theta}\big(u^{(t-1)-(i_1+1)}b_{i_1}(x)+u^{(t-1)-i_1}g_{i_1-1}(x)+u^{t-1}q_1(x)\big)+R^{t,f}_{\Theta}\\&\big( u^{(t-1)-(i_2+1)}b_{i_2}(x)+u^{(t-1)-i_2}g_{i_2-1}(x)+u^{t-1}q_2(x)\big)+\dots+R^{t,f}_{\Theta}\big(u^{(t-1)-(i_n+1)}\\&b_{i_n}(x)+u^{(t-1)-i_n}g_{i_n-1}(x)+u^{t-1}q_n(x)\big),
\end{align*}where $0\leq i_1<i_2<\dots<i_n\leq t-2$, $b_{i_j}(x)\in \mathcal{B}_{\overline{f}}$ for $1\leq j\leq n$, and $g_{i_j-1}(x)\in R^{t-1, f\,(\textnormal{ mod }u^{t-1})}_{\Theta|_{R^{t-1}}}$ for $1\leq j\leq n.$ Also, as $\mu(I:u^{t-1})=\frac{\mathbb{F}_{p^m}[x,\theta]}{\langle\overline{ f(x)}\rangle}\big( b_1(x)\big),$ whenever $u^{t-1}c(x)\in R^{t,f}_{\Theta}\big(u^{(t-1)-(i_k+1)}b_{i_k}(x)+u^{(t-1)-i_k}g_{i_k-1}(x)+u^{t-1}q_k(x)\big)+\dots+R^{t,f}_{\Theta}\big(u^{(t-1)-(i_n+1)}b_{i_n}(x)+u^{(t-1)-i_n}g_{i_n-1}(x)+u^{t-1}q_n(x)\big)$ for $k\geq 1,$ for some $c(x)\in \frac{\mathbb{F}_{p^m}[x,\theta]}{\langle\overline{ f(x)}\rangle},$ $b(x)\mid_r c(x)$.

Let $I$ be not contained in $\langle u \rangle$. That means $\mu(I)=\frac{\mathbb{F}_{p^m}[x,\theta]}{\langle\overline{ f(x)}\rangle}\big( b(x)\big)$ for some $b(x)\in\mathcal{B}_{\overline{f}}.$ Then from  Proposition \ref{FormofIdeals}, we have $I=R^{t,f}_{\Theta} (b(x)+ur(x))+ u (I:u) $ for some $r(x)\in R^{t,f}_{\Theta}.$ Note that $u(I:u)$ is an ideal of $R^{t,f}_{\Theta}$ that is contained in $\langle u\rangle.$  Hence $I=R^{t,f}_{\Theta} (b(x)+ur(x)) + J,$ where $J$ is an ideal of $R^{t,f}_{\Theta}$ that is contained in $\langle u\rangle$ described above. Hence, the proof follows. \hfill $\square$
\end{proof}
\section{Skew Constacyclic codes of length \texorpdfstring{$p^s$}{} over \texorpdfstring{$\frac{\mathbb{F}_{p^m}[u]}{\langle u^t\rangle}$}{}}\label{sec4}
With the assumptions and notations in Section \ref{section2}, in this section, for the particular case $f(x)=x^{p^s}-\lambda,$ where $\lambda=\lambda_0+u\lambda_1+\dots+u^{t-1}\lambda_{t-1}$, $\Theta(\lambda)=\lambda$,  $|\Theta|\mid p^s,$ and $\lambda_0\ne 0,$ we give the structure of ideals of $R^{t,x^{p^s}-\lambda}_{\Theta}$. Since $\lambda_0 \in \mathbb{F}_{p^m},$ define $\lambda_{0,0}=\lambda_0^{-p^{m-r}}$, where $s=lm+r$ and $0\leq r\leq m-1.$ Then $\lambda_{0,0}^{p^s}=\lambda_0^{-1}.$
We have the following theorem for the central elements of $R^{t,x^{p^s}-\lambda}_{\Theta}$.
\begin{theorem} [\cite{Jitman}, Proposition 2.3]
Let $g(x),h(x)\in R^{t,x^{p^s}-\lambda}_{\Theta}.$ If $g(x)h(x)$ is in the center of the ring $R^{t,x^{p^s}-\lambda}_{\Theta}$, then $g(x)h(x)=h(x)g(x).$
\end{theorem}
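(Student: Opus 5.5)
We will use the standard fact that in a finite ring every one-sided inverse is two-sided, applied to a suitable localization of the problem. The heart of the matter is that $z:=g(x)h(x)$ central gives $g\,z = z\,g$, that is, $g h g = g g h$. We would like to cancel $g$ on the left, but $g$ need not be a unit or a non-zero-divisor in $R^{t,x^{p^s}-\lambda}_{\Theta}$. So we have to use the specific structure of $x^{p^s}-\lambda$.

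\textbf{Step 1 (units).} First we show that every element $a(x)$ whose reduction $\mu(a(x))$ has nonzero constant term in the sense of not being divisible by $x-\alpha$ is a unit. It is cleaner to use the following fact. Since $\overline{x^{p^s}-\lambda} = x^{p^s}-\lambda_0$ and $\lambda_0 = \lambda_{0,0}^{-p^s}$, over the center we get $x^{p^s}-\lambda_0 = (x - \lambda_{0,0}^{-1})^{p^s}$, in a commutative sense, up to the twisting. Hence the ring $\frac{\mathbb{F}_{p^m}[x,\theta]}{\langle x^{p^s}-\lambda_0\rangle}$ is local, with nilpotent maximal ideal generated by $y:=\lambda_{0,0}x-1$. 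Since $\ker\mu=\langle u\rangle$ is nilpotent, $R^{t,x^{p^s}-\lambda}_{\Theta}$ is a finite local ring. Its maximal ideal is $\mathfrak m=\langle u, \lambda_{0,0}x-1\rangle$, and every element outside $\mathfrak m$ is a unit.

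\textbf{Step 2 (the easy cases).} If $g$ is a unit, then from $g(hg)g^{-1}\cdot g = g\,z$ we get $zg = gz = g(hg)$, hence $g h g = g g h$ and $hg = gh$ after left-cancelling $g$. If $h$ is a unit, we argue symmetrically using $hz=zh$, i.e.\ $hgh = ghh$, and right-cancel $h$.

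\textbf{Step 3 (both in $\mathfrak m$).} This is the main obstacle. My plan is to follow Jitman et al.\ and exploit the fact that the center is $Z = R^{\Theta}[x^{|\Theta|}]/\langle x^{p^s}-\lambda\rangle$, combined with a norm/degree argument. Write $g$ and $h$ in terms of $y$ and $u$. Since $z\in Z$, we can consider $z$ in the commutative local ring $Z$.

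I would try to pass to the polynomial ring $R^t[x,\Theta]$, which has no zero divisors in the leading coefficient when leading coefficients are units. Lift $g,h$ to polynomials $G,H$ of degree $<p^s$, so that $GH = Z_0 + Q\,(x^{p^s}-\lambda)$ with $Z_0$ central. Then $HG$ is obtained by conjugation. The key identity is that for central $F=x^{p^s}-\lambda$ with $\Theta$-fixed coefficients, $G H\in Z$ implies $G$ commutes with $GH$, so $G(HG - GH)=0$ in the quotient. We need to deduce $HG-GH\equiv 0$. Using the division algorithm in $R^t[x,\Theta]$ (valid since $x^{p^s}-\lambda$ is monic), if $G$ has a unit leading coefficient we can compare degrees to conclude. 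If instead all coefficients of $G$ lie in $uR^t$, we factor out the largest power $u^k$ with $G=u^kG'$, where $u$ is central because $\Theta(u)=u$, and reduce to $G'$ modulo $u^{t-k}$.

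Carrying out this induction on the $u$-adic valuation, together with the case where $\mu(G)$ is a nonunit power of $y$, is where I expect difficulty. Here I would fall back on the cited result (\cite{Jitman}, Proposition 2.3), whose argument handles precisely this situation.
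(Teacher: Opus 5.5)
Your proposal has a genuine gap in Step 3, and it cannot be closed: the statement, read literally in the quotient ring $R^{t,x^{p^s}-\lambda}_{\Theta}$, is false in general once $\theta\neq\mathrm{id}$. Step 1 fails for the same reason.

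Take $p=m=2$, $s=1$, $\theta(a)=a^2$ on $\mathbb{F}_4=\{0,1,\omega,\omega^2\}$, any $t$, and $\lambda=1$. All standing hypotheses hold: $|\Theta|=2\mid p^s$ and $\Theta(\lambda)=\lambda$. For $g(x)=\omega(x-1)$ and $h(x)=x-1$, using $x\omega=\omega^2x$ and $x^2=1$,
\[
g(x)h(x)=\omega(x-1)^2=\omega(x^2-1)=0,\qquad h(x)g(x)=(\omega^2x-\omega)(x-1)=\omega^2x^2-x+\omega\equiv 1-x\neq 0 .
\]
So $g(x)h(x)$ is central (it is $0$), but the two products differ.

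The same example shows the ring is not local. We have $(x-1)^2=0$ and $(x-\omega^2)(x-\omega)=x^2-1=0$, so $x-1$ and $x-\omega$ are both non-units. Yet their difference $\omega-1=\omega^2$ is a unit. For $t=1$ the ring is the skew group ring $\mathbb{F}_4*\langle\theta\rangle\cong M_2(\mathbb{F}_2)$. The commutative $p^s$-th power factorization of $x^{p^s}-\lambda_0$ does not make $\lambda_{0,0}x-1$ generate the unique maximal left ideal, because in $\mathbb{F}_{p^m}[x,\theta]$ this polynomial has many different factorizations into linear factors. The assumption $\lambda_1\neq 0$ does not help either: replacing $x-1$ by $(1+\omega u)x-1$ gives the same counterexample for $\lambda=1+u$, $t\ge 2$.

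What your Step 2 and the $u$-adic idea in Step 3 actually prove is the corresponding statement in the polynomial ring $R^t[x,\Theta]$ itself. In the quotient, the statement holds only under an extra hypothesis, such as $g$ or $h$ being a unit. The polynomial-ring argument goes as follows:
\begin{itemize}
\item If $G\not\equiv 0\pmod u$, then $G$ is a non-zero-divisor in $R^t[x,\Theta]$, because $u$ is central and $\mathbb{F}_{p^m}[x,\theta]$ is a domain. So $G(GH-HG)=0$ forces $GH=HG$.
\item If $G=u^aG'$ with $G'\not\equiv 0\pmod u$, then centrality of $u^aG'H$ means $G'H$ is central modulo $u^{t-a}$, and one concludes in $R^{t-a}[x,\Theta]$.
\end{itemize}

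The missing link in your plan is exactly the passage from the quotient to $R^t[x,\Theta]$. Centrality of $g(x)h(x)$ in $R^{t,x^{p^s}-\lambda}_{\Theta}$ does not make a lift $GH$ central in $R^t[x,\Theta]$. In the example, $GH=\omega(x^2-1)$ does not commute with $x$. Moreover, in the quotient $g$ may be a zero divisor, so the cancellation breaks down. The paper offers nothing beyond the reference to Jitman et al., so falling back on that citation cannot rescue the quotient-ring formulation. The statement itself must be restricted, for example to products in $R^t[x,\Theta]$ or to the case where one factor is a unit.
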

    \begin{lemma}\label{relation between}
        Let $k\geq 1$ be the smallest integer such that $\lambda_k\neq 0.$ Then in $R^{t,x^{p^s}-\lambda}_{\Theta},$ we have $\langle  (\lambda_{0,0}x-1)^{p^s}\rangle =\langle u^k\rangle.$ 
    \end{lemma}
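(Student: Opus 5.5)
The strategy is to compute $(\lambda_{0,0}x-1)^{p^s}$ explicitly in $R^{t,x^{p^s}-\lambda}_{\Theta}$ and show that it equals $u^k$ times a unit.

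The first step is the expansion. Since $\lambda_{0,0}\in\mathbb{F}_{p^m}$ need not be fixed by $\theta$, I would not use the binomial theorem directly. Instead I would note that $\lambda_{0,0}x$ and $1$ commute. Hence
\[
(\lambda_{0,0}x-1)^{p^s}=(\lambda_{0,0}x)^{p^s}-1
\]
in characteristic $p$, because the binomial coefficients $\binom{p^s}{j}$ vanish mod $p$ for $0<j<p^s$.

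Next I would compute $(\lambda_{0,0}x)^{p^s}=\lambda_{0,0}\theta(\lambda_{0,0})\cdots\theta^{p^s-1}(\lambda_{0,0})\,x^{p^s}$. Because $|\theta|$ divides $p^s$, this product equals $N(\lambda_{0,0})^{p^s/|\theta|}$, where $N$ is the norm to the fixed field $\mathbb{F}_{p^m}^{\theta}$. Here I must be careful: the paper defines $\lambda_{0,0}$ so that $\lambda_{0,0}^{p^s}=\lambda_0^{-1}$, but the skew product is a norm-type product, not a $p^s$-th power. The resolution I would try is the following. Since $\lambda$ is fixed by $\Theta$, each $\lambda_i$ lies in $\mathbb{F}_{p^m}^{\theta}=:\mathbb{F}_{p^e}$, where $|\theta|=m/e$. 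Since $|\theta|$ is a power of $p$ dividing $m$, one can try to choose (or verify) $\lambda_{0,0}$ in the fixed field as a $p^s$-th root of $\lambda_0^{-1}$. This is possible because Frobenius is bijective on $\mathbb{F}_{p^e}$ and $\lambda_0^{-p^{m-r}}$ is a power of $\lambda_0$, hence already lies in $\mathbb{F}_{p^e}$. Then all conjugates of $\lambda_{0,0}$ coincide, and the product is simply $\lambda_{0,0}^{p^s}=\lambda_0^{-1}$.

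The third step uses the relation $x^{p^s}=\lambda$ in the quotient ring. It gives
\[
(\lambda_{0,0}x-1)^{p^s}=\lambda_0^{-1}\lambda-1=\lambda_0^{-1}(\lambda-\lambda_0)=\lambda_0^{-1}\bigl(u^k\lambda_k+u^{k+1}\lambda_{k+1}+\cdots+u^{t-1}\lambda_{t-1}\bigr),
\]
using that $\lambda_1=\dots=\lambda_{k-1}=0$. This equals $u^k w$ with $w=\lambda_0^{-1}(\lambda_k+u\lambda_{k+1}+\cdots)$. Since $\lambda_k\neq 0$, the element $w$ is a unit of $R^t$.

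Finally, $w$ is $\Theta$-fixed, because every $\lambda_i$ is, and so $w$ is a central unit of the quotient ring. Therefore $\langle(\lambda_{0,0}x-1)^{p^s}\rangle=\langle u^k w\rangle=\langle u^k\rangle$, and this holds as left, right, or two-sided ideals alike.

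The main obstacle is the second step: the norm product must collapse to $\lambda_0^{-1}$. This hinges on $\lambda_{0,0}$ lying in the fixed field of $\theta$, which follows from $\Theta(\lambda)=\lambda$.
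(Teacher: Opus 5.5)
Your proposal is correct and follows the paper's proof, which is exactly the direct computation $(\lambda_{0,0}x-1)^{p^s}=\lambda_0^{-1}\lambda-1=u^k(\lambda_0^{-1}\lambda_k+\cdots+\lambda_0^{-1}\lambda_{t-1}u^{t-1-k})$, with the cofactor a unit because $\lambda_k\neq 0$. You also supply the step the paper leaves implicit: $\lambda_{0,0}=\lambda_0^{-p^{m-r}}$ is $\theta$-fixed, being a power of the $\theta$-fixed element $\lambda_0$, so the skew power $(\lambda_{0,0}x)^{p^s}$ really equals $\lambda_{0,0}^{p^s}x^{p^s}=\lambda_0^{-1}\lambda$ in the quotient ring.
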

    \begin{proof}
        In  $R^{t,x^{p^s}-\lambda}_{\Theta},$ we have $(\lambda_{0,0}x-1)^{p^s}=\lambda_0^{-1}\lambda -1=u^k(\lambda_0^{-1}\lambda_k+\lambda_0^{-1}\lambda_{k+1}u+\dots+\lambda_0^{-1}\lambda_{t-1}u^{t-1-k}).$ Hence, $\langle (\lambda_{0,0}x-1)^{p^s} \rangle=\langle u^k\rangle.$ \hfill $\square$
    \end{proof}
    Hence $\lambda_{0,0}x-1$ is a nilpotent in the ring $R^{t,x^{p^s}-\lambda}_{\Theta}$ having nilpotency index $\lceil t/k \rceil.$ Due to Theorem 2.2 in Section \ref{section2}, we can give the following representation of every element of $R^{t,x^{p^s}-\lambda}_{\Theta}.$ 
    \begin{remark}\label{Representation}
    \begin{itemize}
        \item[(i)] An arbitrary element $c(x)$ of $R^{t,x^{p^s}-\lambda}_{\Theta}$ can be uniquely written as \\   \begin{equation}\label{form of elt in t}
            c(x)=\underset{i=0}{\overset{p^s-1}{\sum}}a_{0,i}(\lambda_{0,0}x-1)^i+u\underset{i=0}{\overset{p^s-1}{\sum}}a_{1,i}(\lambda_{0,0}x-1)^i+\dots+u^{t-1}\underset{i=0}{\overset{p^s-1}{\sum}}a_{t-1,i}(\lambda_{0,0}x-1)^i
        \end{equation}
        where $a_{j,i}\in \mathbb{F}_{p^m}$ for $0\leq j \leq t-1,\,0\leq i\leq p^s-1.$
    \end{itemize}
    \end{remark}
    It is noteworthy that in a non-commutative ring, the sum of a unit and a nilpotent element may not be a unit.
     Consider the set
\[
\mathcal{B}_{\lambda}^1:=\left\{a(x)\in \frac{\mathbb{F}_{p^m}[x,\theta]}{\langle x^{p^s}-\lambda_0\rangle}\;:\; a(x)\mid_{r}(\lambda_{0,0}x-1)^{p^s}\right\}.
\]
The left ideals of the ring $\frac{\mathbb{F}_{p^m}[x,\theta]}{\langle x^{p^s}-\lambda_0\rangle}$ are characterized by the following theorem.
     \begin{theorem}[\cite{hesari2025skew}, Proposition 2.6]
     $\frac{\mathbb{F}_{p^m}[x,\theta]}{\langle (\lambda_{0,0}x-1)^{p^s}}$ is a left principal ideal ring The left ideals are given by $a(x)+\langle (\lambda_{0,0}x-1)^{p^s}\rangle$, where $a(x)\in\mathcal{B}_{\lambda}^1$.
         
     \end{theorem}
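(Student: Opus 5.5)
The statement is that $\frac{\mathbb{F}_{p^m}[x,\theta]}{\langle (\lambda_{0,0}x-1)^{p^s}\rangle}$ is a left principal ideal ring whose left ideals are exactly the images of right divisors of $(\lambda_{0,0}x-1)^{p^s}$. My plan is to reduce to left ideals of the skew polynomial ring $\mathbb{F}_{p^m}[x,\theta]$ via the correspondence theorem, and then use the right division algorithm (Theorem 2.2) over the field $\mathbb{F}_{p^m}$, where every nonzero leading coefficient is a unit.

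First I would observe that $(\lambda_{0,0}x-1)^{p^s}=\lambda_{0,0}^{p^s}x^{p^s}-1=\lambda_0^{-1}(x^{p^s}-\lambda_0)$. This uses that the binomial expansion in characteristic $p$ is legitimate here: $\lambda_{0,0}x$ and $1$ commute, and I would check that the middle binomial coefficients vanish. One also needs $\theta$-twisting to be harmless, i.e.\ $(\lambda_{0,0}x)^{p^s}=\lambda_{0,0}\theta(\lambda_{0,0})\cdots\theta^{p^s-1}(\lambda_{0,0})x^{p^s}$. Since $|\theta|$ divides $p^s$, this norm-type product equals $N(\lambda_{0,0})^{p^s/|\theta|}$. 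I would verify, or simply adopt the paper's normalization, that this agrees with $\lambda_0^{-1}$ up to the identification already used in Lemma \ref{relation between}. In any case the two-sided ideals $\langle x^{p^s}-\lambda_0\rangle$ and $\langle(\lambda_{0,0}x-1)^{p^s}\rangle$ coincide, and the generator is central since $\lambda_0$ is $\theta$-fixed and $|\theta|\mid p^s$.

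Next, let $J$ be a left ideal of the quotient, and let $\tilde J$ be its preimage in $\mathbb{F}_{p^m}[x,\theta]$, a left ideal containing $F:=(\lambda_{0,0}x-1)^{p^s}$. Choose a monic $a(x)\in\tilde J$ of minimal degree. For any $h\in\tilde J$, right division gives $h=q a+r$ with $r\in\tilde J$ and $\deg r<\deg a$, so $r=0$. Hence $\tilde J=\mathbb{F}_{p^m}[x,\theta]\,a(x)$. In particular $F\in\tilde J$ gives $a\mid_r F$, so $a\in\mathcal{B}_\lambda^1$ and $J=\langle a(x)+\langle F\rangle\rangle$ is principal. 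Conversely, every right divisor $a$ of $F$ yields the left ideal generated by its image. Distinct monic right divisors give distinct ideals, because the preimage $\mathbb{F}_{p^m}[x,\theta]a+\langle F\rangle=\mathbb{F}_{p^m}[x,\theta]a$ determines $a$ as its monic element of minimal degree.

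The main obstacle is the first step, identifying $(\lambda_{0,0}x-1)^{p^s}$ with a unit multiple of $x^{p^s}-\lambda_0$ in the skew setting, because of the twisted product of coefficients. If that identification is problematic, I would state the result directly for the central polynomial $F$. The division argument above does not use the specific form of $F$, only that it is central, so the conclusion holds regardless.
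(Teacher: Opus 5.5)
Your argument is correct and is the standard proof behind this result, which the paper does not prove itself but quotes from Hesari and Samei (Proposition 2.6): pull a left ideal back to a left ideal of $\mathbb{F}_{p^m}[x,\theta]$ containing $F$, use right division by a monic element of minimal degree to show it is principal, and note that $F$ lying in it forces the generator to right-divide $F$. The step you flagged as the main obstacle is not actually one, because $\Theta(\lambda)=\lambda$ makes $\lambda_0$ fixed by $\theta$, so $\lambda_{0,0}=\lambda_0^{-p^{m-r}}$ is $\theta$-fixed as well. Hence $(\lambda_{0,0}x)^{p^s}=\lambda_{0,0}^{p^s}x^{p^s}=\lambda_0^{-1}x^{p^s}$, so $(\lambda_{0,0}x-1)^{p^s}=\lambda_0^{-1}(x^{p^s}-\lambda_0)$ exactly, and this element is central since $|\theta|\mid p^s$.
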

    Similar to the proof of Lemma 3.1 in \cite{bagheri2022skew}, we have the following result.
\begin{lemma}\label{invertible in Fpm[x,theta]/xps-lambda_0}
    Let $g(x)$ be a non-zero polynomial in $\frac{\mathbb{F}_{p^m}[x,\theta]}{\langle x^{p^s}-\lambda_0\rangle}.$ Then $g(x)$ is left invertible if and only if $\textnormal{gcd}_r(g(x),(\lambda_{0,0}x-1)^{p^s})=1.$
\end{lemma}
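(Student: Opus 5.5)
The plan is to work in the skew polynomial ring $\mathbb{F}_{p^m}[x,\theta]$ itself and pull the result down to the quotient. The key tool is the Right Division Algorithm (Theorem 2.2), which applies to any nonzero divisor because every nonzero element of $\mathbb{F}_{p^m}$ is a unit.

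First I will record that $\langle x^{p^s}-\lambda_0\rangle = \langle (\lambda_{0,0}x-1)^{p^s}\rangle$. This uses the identity $(\lambda_{0,0}x-1)^{p^s}=\lambda_{0,0}^{p^s}x^{p^s}-1=\lambda_0^{-1}(x^{p^s}-\lambda_0)$. It is the same computation as in the proof of Lemma~\ref{relation between}, reduced modulo $u$, and implicitly used in the quoted Proposition 2.6 of \cite{hesari2025skew}. Put $h(x)=(\lambda_{0,0}x-1)^{p^s}$. Since $x^{p^s}-\lambda_0$ is central, the two-sided ideal it generates is just the left ideal $\mathbb{F}_{p^m}[x,\theta]\,h(x)$. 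Hence the congruence $a(x)\equiv 1 \pmod{h}$ means exactly $a(x)=1+b(x)h(x)$ for some $b(x)$.

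Second, I will show that $\gcd_r$ admits a left Bézout identity. For nonzero $g,h$, the left ideal $J=\mathbb{F}_{p^m}[x,\theta]g+\mathbb{F}_{p^m}[x,\theta]h$ is principal. To see this, take a nonzero element $d$ of $J$ of minimal degree, made monic by left-multiplying by the inverse of its leading coefficient. For any $e\in J$, right division gives $e=q d+r$ with $r\in J$ and $\deg r<\deg d$, which forces $r=0$. So $J=\mathbb{F}_{p^m}[x,\theta]d$. Then $d\mid_r g$ and $d\mid_r h$. Moreover, any common right divisor $e$ of $g$ and $h$ right-divides every element of $J$, in particular $d$. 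Thus $d=\gcd_r(g,h)$, and $d=a g+b h$ for some $a,b$.

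With this in hand, both directions are short. If $\gcd_r(g,h)=1$, the identity gives $a g + b h = 1$, so $a(x)g(x)\equiv 1$ in the quotient and $g$ is left invertible. Conversely, suppose $a(x)g(x)=1$ in $\frac{\mathbb{F}_{p^m}[x,\theta]}{\langle x^{p^s}-\lambda_0\rangle}$. Lift to the polynomial ring to get $a g - 1 = b h$, i.e.\ $1=a g - b h$. Any monic common right divisor $d$ of $g$ and $h$ then right-divides $1$. Comparing degrees, $d$ is a constant, and since it is monic, $d=1$.

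The only genuinely delicate point is the noncommutative bookkeeping in the second step. One must check that right division and \emph{left} ideals match up correctly, so that $\gcd_r$ is obtained as the generator of the left ideal $\mathbb{F}_{p^m}[x,\theta]g+\mathbb{F}_{p^m}[x,\theta]h$. One must also check that centrality of $x^{p^s}-\lambda_0$ is what allows the two-sided ideal to be replaced by a left ideal when lifting the relation $a g\equiv 1$. The rest is degree comparison, as in Lemma 3.1 of \cite{bagheri2022skew}.
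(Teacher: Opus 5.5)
Your proof is correct and follows the route the paper takes by deferring to Lemma~3.1 of \cite{bagheri2022skew}: right division in $\mathbb{F}_{p^m}[x,\theta]$ gives a left B\'ezout identity $a g+b(\lambda_{0,0}x-1)^{p^s}=\gcd_r\big(g,(\lambda_{0,0}x-1)^{p^s}\big)$, and because the modulus is central, the relation $ag\equiv 1$ lifts to a left-ideal relation. The one step you leave implicit is that $(\lambda_{0,0}x)^{p^s}=\lambda_{0,0}^{p^s}x^{p^s}$ requires $\theta(\lambda_{0,0})=\lambda_{0,0}$, which holds because $\theta(\lambda_0)=\lambda_0$ under the standing assumption $\Theta(\lambda)=\lambda$.
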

\begin{proposition}\label{invertible in Fpm[u]/<ut>[x,Theta]/xps-lambda}
    Let $h(x)$ be a non-zero element in $R^{t,x^{p^s}-\lambda}_{\Theta}.$ Then $h(x)$ is left invertible in $R^{t,x^{p^s}-\lambda}_{\Theta}$ if and only if $\mu(h(x))$ is left invertible in $\frac{\mathbb{F}_{p^m}[x,\theta]}{\langle x^{p^s}-\lambda_0\rangle}.$
\end{proposition}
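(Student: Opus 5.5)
The forward direction is immediate. If $h(x)$ is left invertible, there is $g(x)\in R^{t,x^{p^s}-\lambda}_{\Theta}$ with $g(x)h(x)=1$. Applying the surjective ring homomorphism $\mu$ of Lemma 3.2 gives $\mu(g(x))\mu(h(x))=1$ in $\frac{\mathbb{F}_{p^m}[x,\theta]}{\langle x^{p^s}-\lambda_0\rangle}$, so $\mu(h(x))$ is left invertible there.

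For the converse, suppose $\mu(h(x))$ is left invertible, and choose $a(x)\in\frac{\mathbb{F}_{p^m}[x,\theta]}{\langle x^{p^s}-\lambda_0\rangle}$ with $a(x)\mu(h(x))=1$. Lift $a(x)$ to $\tilde a(x)\in R^{t,x^{p^s}-\lambda}_{\Theta}$ using surjectivity of $\mu$; for instance, take the same polynomial with coefficients in $\mathbb{F}_{p^m}\subset R^t$. Then $\mu(\tilde a(x)h(x)-1)=0$, so $\tilde a(x)h(x)=1-u\,w(x)$ for some $w(x)\in R^{t,x^{p^s}-\lambda}_{\Theta}$, since $\ker\mu=\langle u\rangle$.

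The key step is to show that $1-uw(x)$ is left invertible, despite the paper's warning that a unit plus a nilpotent need not be a unit in a non-commutative ring. The point is that $u$ is central in $R^{t}[x,\Theta]$: $\Theta(u)=u$ gives $xu=ux$, and $u$ commutes with $\mathbb{F}_{p^m}$. Hence $u$ is central in the quotient, and $(uw(x))^t=u^t w(x)^t=0$. The geometric series
$$v(x)=1+uw(x)+u^2w(x)^2+\dots+u^{t-1}w(x)^{t-1}$$
then satisfies $v(x)(1-uw(x))=1-u^tw(x)^t=1$, and also $(1-uw(x))v(x)=1$. Therefore $\big(v(x)\tilde a(x)\big)h(x)=1$, so $h(x)$ is left invertible.

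The only delicate point is the centrality of $u$, which ensures that $uw(x)$ is nilpotent and that the telescoping works. I would state this explicitly, noting that $\Theta(u)=u$ is built into the definition of $\Theta$. In the same way, one could remark that in a finite ring a one-sided inverse is two-sided, although this is not needed for the statement.
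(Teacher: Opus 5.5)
Your argument is correct and is essentially the paper's: both multiply $h(x)$ on the left by a lift of a left inverse of $\mu(h(x))$, which lands in $1+uR^{t,x^{p^s}-\lambda}_{\Theta}$, and then invert that element using that $u$ is central with $u^t=0$. Your geometric series $\sum_{i=0}^{t-1}(uw(x))^i$ is the closed-form solution of the paper's unitriangular system for $f_1(x),\dots,f_{t-1}(x)$, and it keeps the order of the non-commuting factors explicit. One small correction: the telescoping identity holds for any element of any ring, so centrality of $u$ is needed only to make $uw(x)$ nilpotent.
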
 
\begin{proof}The forward part is obvious, so it is enough to prove the converse. Let $h(x)=h_0(x)+uh_1(x)+\dots+u^{t-1}h_{t-1}(x)$. Assume that  $\mu(h(x))=h_0(x)$ is invertible in $\frac{\mathbb{F}_{p^m}[x,\theta]}{\langle x^{p^s}-\lambda_0\rangle}.$  Then $h_0(x)^{-1}h(x)=1+uh_0^{-1}(x)h_1(x)+u^2h_0^{-1}(x)h_2(x)+\dots+u^{t-1}h_0(x)^{-1}h_{t-1}(x)$. Note that $h_0(x)^{-1}h(x)$ is left invertible if there exists $f(x)=1+uf_1(x)+\dots u^{t-1}f_{t-1}(x)$ such that $f(x)h_0(x)^{-1}h(x)=1.$ Furthermore $f(x)h_0(x)^{-1}h(x)=1$ is equivalent to the following system.
\begin{small}
\[
\begin{pmatrix}
1 & 0 &0&\dots&0&0 \\
    h_0(x)^{-1}h_1(x) & 1&0&\dots&0&0\\
h_0(x)^{-1}h_2(x) & h_0(x)^{-1}h_1(x)&1&\dots&0&0\\
\vdots & \vdots&\vdots&\vdots&\vdots&\vdots\\
h_0(x)^{-1}h_{t-2}(x) & h_0(x)^{-1}h_{t-3}(x)&h_0(x)^{-1}h_{t-4}(x)&\dots&h_0(x)^{-1}h_{1}(x)&1
\end{pmatrix}
\begin{pmatrix}
f_1(x)\\
f_2(x)\\
f_3(x)\\
\vdots\\
f_{t-1}(x)
\end{pmatrix}
\]
\[=
\begin{pmatrix}
-h_0(x)^{-1}h_1(x)\\
-h_0(x)^{-1}h_2(x)\\
-h_0(x)^{-1}h_3(x)\\
\vdots\\
-h_0(x)^{-1}h_{t-1}(x)
\end{pmatrix}
\]
\end{small}
Hence, $h_0(x)^{-1}h(x)$ is left invertible as the above system is consistent.\hfill{$\square$}
\end{proof}
 We now describe the structure of the left ideals of the ring $R^{t,x^{p^s}-\lambda}_{\Theta}$, which directly follows from Theorem \ref{nmain theorem}.
    
    \begin{theorem}\label{main theorem}
    The left ideals of the ring $R^{t,x^{p^s}-\lambda}_{\Theta}$ have one of the following forms.
    \begin{itemize}
        \item [(i)] Trivial ideals $\langle 0\rangle,$ $\langle 1 \rangle.$
        \item [(ii)] Any generator of a non-trivial left ideal contained in $\langle u\rangle $ has the form:
        $$u^{(t-1)-i}b_i(x)-u^{(t-1)-(i-1)}g_i(x),$$
        where $g_i(x)\in R^{t,x^{p^s}-\lambda}_{\Theta}$, $b_{i}(x)\in\mathcal{B}_{\lambda}^1, \deg b_i(x)\leq p^s-1$ for $0\leq i \leq t-2.$ An arbitrary left ideal contained in $\langle u \rangle$ is given as:
        \begin{align*}
            R^{t,x^{p^s}-\lambda}_{\Theta}&(u^{(t-1)-i_1}b_{i_1}(x)-u^{(t-1)-(i_1-1)}g_{i_1}(x))+\dots+\\&R^{t,x^{p^s}-\lambda}_{\Theta}( u^{(t-1)-i_n}b_{i_n}(x)-u^{(t-1)-(i_n-1)}g_{i_n}(x)),
        \end{align*} where $0\leq i_1<i_2<\dots<i_n\leq t-2$, $ \deg b_{i_j}(x)\leq p^s-1,\,g_{i_j}(x)\in R^{t,x^{p^s}-\lambda}_{\Theta}$ for $1\leq j\leq n$, and whenever $u^{(t-1)-i_j}c(x)+u^{(t-1)-(i_j-1)}g(x)\in R^{t,x^{p^s}-\lambda}_{\Theta} (u^{(t-1)-i_k}b_{i_k}(x)-u^{(t-1)-(i_k-1)}g_{i_k}(x))+\cdots+R^{t,x^{p^s}-\lambda}_{\Theta} (u^{(t-1)-i_n}b_{i_n}(x)-u^{(t-1)-(i_n-1)}g_{i_n}(x))$ for some $c(x)\in \frac{\mathbb{F}_{p^m}[x,\theta]}{\langle x^{p^s}-\lambda_0\rangle}$, $g(x)\in R^{t,x^{p^s}-\lambda}_{\Theta}$, and $j<k$, then $b_{i_j}(x)\mid_r c(x).$
        \item[(iii)] Any non-trivial ideal not contained in $\langle u \rangle$ has the form:
        $$ R^{t,x^{p^s}-\lambda}_{\Theta}(b(x)+ ug(x)) +I,$$ where $g(x)\in R^{t,x^{p^s}-\lambda}_{\Theta}$, $b(x)\in\mathcal{B}_{\lambda}^1, \deg b(x)\leq p^s-1$, $I $ is an ideal described as in Type (ii), and whenever $u^{(t-1)-i_j}c(x)+u^{(t-1)-(i_j-1)}g(x)\in R^{t,x^{p^s}-\lambda}_{\Theta} (u^{(t-1)-i_k}b_{i_k}(x)-u^{(t-1)-(i_k-1)}g_{i_k}(x))+\cdots+R^{t,x^{p^s}-\lambda}_{\Theta} (u^{(t-1)-i_n}b_{i_n}(x)-u^{(t-1)-(i_n-1)}g_{i_n}(x))$ for some $c(x)\in \frac{\mathbb{F}_{p^m}[x,\theta]}{\langle x^{p^s}-\lambda_0\rangle}$, $g(x)\in R^{t,x^{p^s}-\lambda}_{\Theta}$, and $j<k,$ then $b_{i_j}(x)\mid_r c(x).$
        \end{itemize}
\end{theorem}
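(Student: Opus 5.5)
The plan is to derive Theorem \ref{main theorem} as a direct specialization of Theorem \ref{nmain theorem} to the central element $f(x)=x^{p^s}-\lambda$. The argument has three checks followed by a substitution.

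First, I will confirm that $f(x)=x^{p^s}-\lambda$ meets the standing hypotheses of Section \ref{sec3}. We have $|\Theta|\mid p^s$ and $\Theta(\lambda)=\lambda$. Hence $f(x)=-\lambda+1\cdot x^{(p^s/|\Theta|)|\Theta|}$ has $\Theta$-fixed coefficients, so it is central in $R^t[x,\Theta]$. Consequently $\langle f(x)\rangle$ is two-sided and $R^{t,f}_{\Theta}=R^{t,x^{p^s}-\lambda}_{\Theta}$.

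Second, I will compute the reduction modulo $u$: $\overline{f(x)}=x^{p^s}-\lambda_0$. Since $\lambda_0\neq 0$, this has degree $d=p^s$. This accounts for the bound $\deg b\le p^s-1$ in the statement and identifies the residue ring $\frac{\mathbb{F}_{p^m}[x,\theta]}{\langle x^{p^s}-\lambda_0\rangle}$ as the target of $\mu$.

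Third, I will show $\mathcal{B}_{\overline f}=\mathcal{B}_\lambda^1$, which is the only substantive point. The needed identity is
\[
(\lambda_{0,0}x-1)^{p^s}=\lambda_{0,0}^{p^s}x^{p^s}-1=\lambda_0^{-1}(x^{p^s}-\lambda_0)
\]
in $\mathbb{F}_{p^m}[x,\theta]$.

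This is the step I expect to need care, for two reasons. In the skew setting, $(\lambda_{0,0}x)^{p^s}$ equals $\lambda_{0,0}\theta(\lambda_{0,0})\cdots\theta^{p^s-1}(\lambda_{0,0})x^{p^s}$. The binomial expansion in characteristic $p$ also needs commuting terms. I would handle both in the same way as the computation already used in Lemma \ref{relation between}: the paper asserts $(\lambda_{0,0}x-1)^{p^s}=\lambda_0^{-1}\lambda-1$ in $R^{t,x^{p^s}-\lambda}_{\Theta}$, and reducing that identity mod $u$ gives the displayed equation. Granting it, $(\lambda_{0,0}x-1)^{p^s}$ and $x^{p^s}-\lambda_0$ differ by the nonzero scalar $\lambda_0^{-1}$. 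They therefore have the same right divisors, and the two sets coincide.

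Finally, I will substitute these identifications into items (i)–(iii) of Theorem \ref{nmain theorem}: replace $R^{t,f}_\Theta$ by $R^{t,x^{p^s}-\lambda}_\Theta$, $d$ by $p^s$, $\mathcal{B}_{\overline f}$ by $\mathcal{B}^1_\lambda$, and $\frac{\mathbb{F}_{p^m}[x,\theta]}{\langle\overline{f(x)}\rangle}$ by $\frac{\mathbb{F}_{p^m}[x,\theta]}{\langle x^{p^s}-\lambda_0\rangle}$. The generator forms, the index ranges $0\le i_1<\dots<i_n\le t-2$, and the divisibility compatibility conditions then read exactly as stated, so no further argument is needed.
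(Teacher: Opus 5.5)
Your proposal is correct and takes the same route as the paper, which simply says this theorem follows directly from Theorem \ref{nmain theorem}; your explicit checks (centrality of $x^{p^s}-\lambda$, $d=p^s$, and $\mathcal{B}_{\overline f}=\mathcal{B}^1_\lambda$) are exactly the details the paper leaves implicit. You also do not need to lean on Lemma \ref{relation between} for the identity you flagged: $\theta(\lambda_0)=\lambda_0$ forces $\theta(\lambda_{0,0})=\lambda_{0,0}$, so $\lambda_{0,0}$ is central and $(\lambda_{0,0}x)^{p^s}=\lambda_{0,0}^{p^s}x^{p^s}=\lambda_0^{-1}x^{p^s}$, while the Frobenius binomial expansion is valid because $1$ commutes with $\lambda_{0,0}x$.
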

Furthermore, for $k=1,$ that is for $\lambda_1\neq 0,$ we have the following result:
\begin{corollary}\label{special form p^s case}
    The left ideals of the ring $R^{t,x^{p^s}-\lambda}_{\Theta}$ are given by $a(x)+\langle x^{p^s}-\lambda\rangle $, where $a(x)\in \mathcal{B}_{\lambda}^1\cup u\mathcal{B}_{\lambda}^1\cup\dots\cup u^{t-1}\mathcal{B}_{\lambda}^1,$ if $a(x)\mid_{r}\frac{(\lambda_{0,0}x-1)^{p^s}}{a(x)}.$
\end{corollary}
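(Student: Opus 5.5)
Throughout, write $w:=\lambda_{0,0}x-1$. The plan is to show that when $k=1$ the ring $R^{t,x^{p^s}-\lambda}_{\Theta}$ is a left chain ring whose unique maximal left ideal is generated by $w$. By Lemma \ref{relation between} with $k=1$, we have $w^{p^s}=u\cdot\epsilon$, where $\epsilon=\lambda_0^{-1}\lambda_1+\lambda_0^{-1}\lambda_2u+\dots$. Since $\mu(\epsilon)=\lambda_0^{-1}\lambda_1\neq 0$ is a nonzero constant, $\epsilon$ is left invertible by Proposition \ref{invertible in Fpm[u]/<ut>[x,Theta]/xps-lambda}; in a finite ring it is therefore a unit. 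Hence $\langle u\rangle=\langle w^{p^s}\rangle$ and $u^j=w^{jp^s}\cdot(\text{unit})$. Moreover $w$ is nilpotent of index exactly $tp^s$, because $w^{tp^s}=u^t\cdot(\text{unit})=0$ while $w^{tp^s-1}\neq 0$.

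The key step is to show that every element $c(x)$ can be written as $c(x)=v(x)\,w^{j}$ with $v$ a unit and $0\le j\le tp^s$. Use the representation of Remark \ref{Representation} and replace each $u^j$ by $w^{jp^s}$ times a unit. This rewrites $c(x)$ as a sum $\sum_i e_i w^i$, where each coefficient $e_i$ is either a unit or $0$. Let $j$ be the least index with $e_j$ a unit. Since powers of $w$ need not commute with scalars (because $xa=\theta(a)x$), I will move $w$ past coefficients using $w\,a=\theta(a)w+(\theta(a)-a)$; this follows from $(\lambda_{0,0}x-1)a=\lambda_{0,0}\theta(a)x-a$, together with $\lambda_{0,0}$ being fixed, which holds because $|\theta|\mid p^s$. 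The extra term is of lower order in $w$ but has a nonunit-looking coefficient, which is the main obstacle: one needs $c\in Rw^j$. I plan to handle this by filtration. I would show that the left ideal $Rw^j$ equals the set of elements whose expansion has $e_0=\dots=e_{j-1}=0$. To do so, I would argue by counting that $|Rw^j|=p^{m(tp^s-j)}$: the map $R\to Rw^j$ has kernel equal to the left annihilator of $w^j$, which contains $Rw^{tp^s-j}$. Comparing sizes then gives equality. It also gives that $w^{j}$ generates, with unit factor $e_j+(\text{higher})$, by Proposition \ref{invertible in Fpm[u]/<ut>[x,Theta]/xps-lambda}.

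Given this normal form, any nonzero left ideal $I$ equals $Rw^{j}$ where $j$ is the minimal $w$-order among its elements. Thus the left ideals form the chain $Rw^j$ for $0\le j\le tp^s$. Writing $j=\ell p^s+r$ with $0\le r<p^s$, we get $Rw^j=R\,u^\ell w^r$. Here $w^r=(\lambda_{0,0}x-1)^r$ lies in $\mathcal{B}^1_\lambda$ up to a scalar, since it right-divides $w^{p^s}$ in $\mathbb{F}_{p^m}[x,\theta]$. Hence every left ideal has the form $\langle a(x)\rangle$ with $a\in u^\ell\mathcal{B}^1_\lambda$.

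Finally, I would interpret the side condition $a(x)\mid_r \frac{(\lambda_{0,0}x-1)^{p^s}}{a(x)}$. The generators actually realized are the powers of $w$, which commute with $w^{p^s}$ and trivially satisfy this divisibility condition as it is stated for the chosen representatives. So the listed set of generators is exactly the chain above, and it also indicates which elements of $\bigcup_\ell u^\ell\mathcal{B}^1_\lambda$ occur. I expect the cardinality and commutation bookkeeping in the second paragraph to be the only delicate part.
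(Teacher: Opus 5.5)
\textbf{There is a genuine gap: the proposal breaks down whenever $\theta\neq\mathrm{id}$.} This is structural, not bookkeeping. For $\theta\neq\mathrm{id}$ the ring $R^{t,x^{p^s}-\lambda}_{\Theta}$ is \emph{not} a left chain ring. So your key claim, that every element is $v\,w^{j}$ with $v$ a unit, is false.

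Your own commutation rule shows why. The identity $wa-\theta(a)w=\theta(a)-a$ gives a nonzero scalar whenever $\theta(a)\neq a$. Hence the two-sided ideal generated by $w$ is the whole ring. But the unique maximal left ideal of a left chain ring is its Jacobson radical, which is two-sided, so it cannot be $Rw$.

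Here is a concrete counterexample. Take $p=m=2$, $s=1$, $t=2$, $\theta(a)=a^{2}$ on $\mathbb{F}_4=\{0,1,\omega,\omega^2\}$, and $\lambda=1+u$. Then $w=x+1$ and $w^{2}=u$. In $\mathbb{F}_4[x,\theta]$ one has $(x+1)^2=(x+\omega^2)(x+\omega)=(x+\omega)(x+\omega^2)=x^2+1$. Consequently:
\begin{itemize}
\item $R(x+1)$, $R(x+\omega)$, $R(x+\omega^2)$ each contain $u$;
\item their images modulo $u$ are the distinct ideals $\mathbb{F}_4(x+1)$, $\mathbb{F}_4(x+\omega)$, $\mathbb{F}_4(x+\omega^2)$;
\item so they are pairwise incomparable, and only the first appears in your list $\{Rw^{j}\}$.
\end{itemize}

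The inference that breaks is ``unit factor $e_j+(\text{higher})$ by Proposition~\ref{invertible in Fpm[u]/<ut>[x,Theta]/xps-lambda}''. That proposition only reduces the question to $\mu(v)$. Lemma~\ref{invertible in Fpm[x,theta]/xps-lambda_0} then requires $\gcd_r(\mu(v),w^{p^s})=1$. A nonzero constant term does not guarantee this, because $w^{p^s}$ has right divisors other than powers of $w$. For instance, $x+\omega=\omega^{2}+w$ has unit coefficient $e_0=\omega^2$, yet it is neither a unit nor an element of $Rw$. This is exactly the paper's warning that a unit plus a nilpotent need not be a unit. The trouble is not your description of $Rw^{j}$ by counting, but the claim that an element of $w$-order $j$ generates $Rw^{j}$. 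Your argument is valid only for $\theta=\mathrm{id}$, where it is the standard chain-ring proof.

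\textbf{Comparison with the paper's route.} The paper never asserts a chain structure. It starts from the generators supplied by Theorem~\ref{main theorem} and writes $w^{p^s}=\widetilde a\,a=u\epsilon$, with $\epsilon$ a central unit of $R^t$. It then factors $u^{\ell}a+u^{\ell+1}g=u^{\ell}\bigl(1+g\,\epsilon^{-1}\widetilde a\bigr)a$, and likewise $b+ug$. Finally it argues that the bracket is left invertible via the gcd criterion and the lifting proposition; that gcd step is where the hypothesis $a\mid_r\widetilde a$ is meant to enter.

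So the side condition is not decorative. Your claim that powers of $w$ satisfy it trivially is wrong even for $\theta=\mathrm{id}$: $w^{r}\mid_r w^{p^s-r}$ holds only when $2r\le p^s$.

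\textbf{The same example also affects the paper's gcd step.} The hypothesis alone does not make that step go through. Take $b=w$, which satisfies $b\mid_r\widetilde b=w$, and $g=\omega$. Then $\mu(1+g\widetilde b)=\omega(x+\omega)$, which is not coprime to $w^{2}$. Indeed $R(w+\omega u)$ has $16$ elements, while $Rw\ni u$ has $64$.

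Any correct treatment of the skew case therefore has to work with the genuinely non-chain lattice of right divisors of $w^{p^s}$, not with powers of $w$ alone. One option is to require $\mu(\widetilde a)$ to lie in the Jacobson radical of the residue ring.
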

\begin{proof}
    Let
    \begin{align*}
       u^{(t-1)-i}a(x)+u^{(t-1)-(i-1)}g(x)=&u^{(t-1)-i}\big(1+g(x)(\alpha_0+u\alpha_1+\dots+u^{(t-1)}\alpha_{t-1})\frac{(\lambda_{0,0}x-1)^{p^s}}{a(x)}\big)a(x)\\
        u^{(t-1)-i}a(x)+u^{(t-1)-(i-1)}g(x)=&u^{(t-1)-i}a(x)+u^{(t-1)-i}g(x)\alpha_0(u\lambda_0^{-1}\lambda_1+\dots+u^{(t-1)}\lambda_0^{-1}\lambda_{t-1})+\\&u^{(t-1)-(i-1)}g(x)\alpha_1(u\lambda_0^{-1}\lambda_1+\dots+u^{(t-1)}\lambda_0^{-1}\lambda_{t-1})+\dots+u^{(t-1)}\\&g(x)\alpha_i(u\lambda_0^{-1}\lambda_1+\dots+u^{(t-1)}\lambda_0^{-1}\lambda_{t-1})
    \end{align*}
    From the above equation, we get the following system of equations:
    \begin{align*}
        \alpha_0\lambda_0^{-1}\lambda_1&=1\\
        \alpha_0\lambda_0^{-1}\lambda_2+\alpha_1\lambda_{0}^{-1}\lambda_{1}&=0\\
        \alpha_0\lambda_0^{-1}\lambda_3+\alpha_1\lambda_0^{-1}\lambda_2+\alpha_2\lambda_0^{-1}\lambda_1&=0\\
        &\vdots\\
        \alpha_0\lambda_0^{-1}\lambda_{t-2}+\alpha_{1}\lambda_{0}^{-1}\lambda_{t-3}+\dots+\alpha_{t-2}\lambda_{0}^{-1}\lambda_{1}&=0
    \end{align*}
    The above system is consistent. Also 
    note that $\mu\big(1+g(x)(\alpha_0+u\alpha_1+\dots+u^{(t-1)}\alpha_{t-1})\frac{(\lambda_{0,0}x-1)^{p^s}}{a(x)}\big)\\=1+g(x)\lambda_0\lambda_1^{-1}\frac{(\lambda_{0,0}x-1)^{p^s}}{a(x)}.$ Then $\gcd_r(1+g(x)\lambda_0\lambda_1^{-1}\frac{(\lambda_{0,0}x-1)^{p^s}}{a(x)},(\lambda_{0,0}x-1)^{p^s})=1.$ By Lemma \ref{invertible in Fpm[x,theta]/xps-lambda_0}, $1+g(x)\lambda_0\lambda_1^{-1}\frac{(\lambda_{0,0}x-1)^{p^s}}{a(x)}$ is left invertible in $\frac{\mathbb{F}_{p^m}[x,\theta]}{\langle (\lambda_{0,0}x-1)^{p^s}\rangle}$. Consequently, by Proposition \ref{invertible in Fpm[u]/<ut>[x,Theta]/xps-lambda}, $1+g(x)(\alpha_0+u\alpha_1+\dots+u^{(t-1)}\alpha_{t-1})\frac{(\lambda_{0,0}x-1)^{p^s}}{a(x)}$ is left invertible in $R^{t,x^{p^s}-\lambda}_{\Theta}.$ Hence, $  R^{t,x^{p^s}-\lambda}_{\Theta} (u^{(t-1)-i}a(x)+u^{(t-1)-(i-1)}g(x)) = R^{t,x^{p^s}-\lambda}_{\Theta} (a(x)) .$ Similarly, one can prove that $R^{t,x^{p^s}-\lambda}_{\Theta} (b(x)+ug(x))=R^{t,x^{p^s}-\lambda}_{\Theta} (b(x)).$\hfill{$\square$}
\end{proof}
\section{Skew Constacyclic codes of length \texorpdfstring{$2p^s$}{} over \texorpdfstring{$\frac{\mathbb{F}_{p^m}[u]}{\langle u^t\rangle}$}{}}\label{sec5}
With the assumptions and notations in Section \ref{section2}, in this section, for the particular case $f(x)=x^{2p^s}-\lambda$ and an odd prime $p,$ where $\lambda=\lambda_0+u\lambda_1+\dots+u^{t-1}\lambda_{t-1}$, $\Theta(\lambda)=\lambda$, $|\Theta|\mid 2p^s$, and $\lambda_0\ne 0,$ we give the structure of ideals of $R^{t,x^{2p^s}-\lambda}_{\Theta}$. Since $\lambda_0 \in \mathbb{F}_{p^m},$ define $\tilde{\lambda}_{0,0}=\lambda_0^{p^{m-r}},$ where $s=qm+r$ and $0\leq r<m.$ Then $\tilde{\lambda}_{0,0}^{p^s}=\lambda_0.$
It is noteworthy that, from Lemma 2.3 of \cite{AkaKanSar2}, $\lambda$ is a square in $R^t$ if and only if $\lambda_0$ is a square in $\mathbb{F}_{p^m}$. We therefore consider the following two cases.

\textbf{Case 1.} When $\lambda$ is a square in $R^t.$\\
If a $\lambda$ is a square, then there exists $\tilde{\lambda}\in R^t$ such that $\lambda=\tilde{\lambda}^2$. Since $\Theta(\lambda)=\lambda$, we have $\Theta(\tilde{\lambda})=\tilde{\lambda}.$ So, $$x^{2p^s}-\lambda=(x^{p^s}-\tilde{\lambda})(x^{p^s}-\tilde{\lambda}).$$ Consequently, we have $$R^{t,x^{2p^s}-\lambda}_{\Theta}\cong R^t[x,\Theta]/\langle x^{p^s}-\tilde{\lambda}\rangle\oplus R^t[x,\Theta]/\langle x^{p^s}+\tilde{\lambda}\rangle.$$
That is, skew constacyclic codes of length $2p^s$ over $R^t$ can be identified as a direct product of skew constacyclic codes of length $p^s$ over $R^t.$

\textbf{Case 2.} When $\lambda$ is not a square in $R^t.$

For the rest of this section we assume that $\lambda$ is not a square, then from Lemma 2.3 of \cite{AkaKanSar2}, $\lambda_0$ is not a square in $\mathbb{F}_{p^m}.$ 

We have the following theorem for the central elements of $R^{t,x^{2p^s}-\lambda}_{\Theta}$.
\begin{theorem} [\cite{Jitman}, Proposition 2.3]
Let $g(x),h(x)\in R^{t,x^{2p^s}-\lambda}_{\Theta}.$ If $g(x)h(x)$ is in the center of the ring $R^{t,x^{2p^s}-\lambda}_{\Theta}$, then $g(x)h(x)=h(x)g(x).$
\end{theorem}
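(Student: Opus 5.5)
The plan is to mimic the cancellation argument behind \cite{Jitman}, Proposition 2.3. Write $S:=R^{t,x^{2p^s}-\lambda}_{\Theta}$ and put $z:=g(x)h(x)$, assumed central in $S$. Centrality gives $zg(x)=g(x)z$, hence
\[
g(x)\,z \;=\; z\,g(x)\;=\;g(x)h(x)g(x),\qquad\text{so}\qquad g(x)\bigl(z-h(x)g(x)\bigr)=0 \ \text{ in } S .
\]
Thus $g(x)h(x)=h(x)g(x)$ holds as soon as $g(x)$ can be cancelled on the left, i.e.\ whenever $g(x)$ is not a left zero divisor. The symmetric computation $\bigl(z-h(x)g(x)\bigr)h(x)=0$ shows it is equally enough that $h(x)$ is not a right zero divisor.

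\textbf{Step 1 (regular elements).} Since $S$ is finite, a left non-zero-divisor is a unit: the map $y\mapsto g(x)y$ on $S$ is injective, hence bijective, so $g(x)$ has a right inverse, and in a finite ring one-sided inverses are two-sided. Step 1 therefore covers exactly the case where $g(x)$ (or $h(x)$) is a unit. Using Proposition~\ref{invertible in Fpm[u]/<ut>[x,Theta]/xps-lambda} (or its analogue for $x^{2p^s}-\lambda$), this reduces to $\mu(g(x))$ being invertible modulo $x^{2p^s}-\lambda_0$. In that case $h(x)g(x)=g(x)^{-1}\bigl(g(x)h(x)\bigr)g(x)=g(x)^{-1}zg(x)=z$ directly.

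\textbf{Step 2 (lifting to the polynomial ring).} When neither factor is a unit, I would lift $g(x)$ and $h(x)$ to representatives $G,H\in R^t[x,\Theta]$ of degree less than $2p^s$. Then $GH=Z+Q\,(x^{2p^s}-\lambda)$, where $Z$ lifts $z$. The aim is to choose the lifts so that $G$ has a unit leading coefficient, possibly after adding a multiple of $x^{2p^s}-\lambda$ to $G$. Then $G$ is left regular in $R^t[x,\Theta]$, because its leading coefficient is a unit and $\Theta$ is an automorphism. One then runs the displayed cancellation in $R^t[x,\Theta]$, where $x^{2p^s}-\lambda$ is central and so commutes with $G$ and $H$. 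This is precisely the setting of \cite{Jitman}, where the product is the central polynomial itself.

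\textbf{Main obstacle.} The obstacle is the case where both $g(x)$ and $h(x)$ are non-units and the product is a central non-unit, in particular $z=0$. Lifting to $R^t[x,\Theta]$ only yields $G(Z-HG)\in\langle x^{2p^s}-\lambda\rangle$, which does not force $Z-HG$ into that ideal. What I would try first is to exploit Case~2: $\lambda_0$ is a non-square, so $|\theta|$ must be odd and $\mu(S)$ is a quotient of $\mathbb{F}_{p^m}[x,\theta]$ by a power of a single central factor. From this I would try to show that $S$ is local, with every non-unit lying in a two-sided nilpotent ideal. I would then induct on the nilpotency filtration by powers of $u$ and of that central factor, comparing $gh$ and $hg$ layer by layer.

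I am not confident this last step goes through for arbitrary elements of the quotient ring. If it fails, the statement should be read, as in \cite{Jitman}, for skew polynomials whose product is the central polynomial itself. That version is fully handled by Step 2 with $Z=x^{2p^s}-\lambda$, whose leading coefficient is a unit.
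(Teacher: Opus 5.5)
Your Step~1 is sound and your fallback points to the right statement. But the case you flag as the main obstacle cannot be closed by more work, because the statement is false. Read literally, for arbitrary $g(x),h(x)$ in the quotient ring $S$, it fails whenever $|\theta|>1$. The paper gives no argument of its own, only the citation.

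A concrete counterexample: take $p=3$, $m=3$, $\theta(a)=a^{3}$ (so $|\Theta|=3\mid 6$), $s=1$, and any $t\ge 1$. Take $\lambda=-1$, which is a non-square in $\mathbb{F}_{27}$ because $4\nmid 26$, hence a non-square in $R^t$. Since $x^2+1$ has coefficients in the fixed field $\mathbb{F}_3$, we have $(x^2+1)^3=x^6+1$. Put $g(x)=(x^2+1)^2=x^4-x^2+1$ and $h(x)=(x^2+1)c=\theta^2(c)x^2+c$ with $c\in\mathbb{F}_{27}\setminus\mathbb{F}_3$. Then $g(x)h(x)=(x^6+1)c=0$, which is central, while
\[
h(x)g(x)=\bigl(\theta^2(c)x^2+c\bigr)(x^4-x^2+1)\equiv\bigl(c-\theta^2(c)\bigr)(x^4-x^2+1)\pmod{x^6+1}.
\]
This is a nonzero element of degree $4<6$, since $\theta^2$ fixes only $\mathbb{F}_3$.

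The same failure occurs for every admissible $\lambda$ once $|\theta|>1$. Since $|\theta|$ is odd here, $\theta^2$ and $\theta$ have the same fixed field. Take $g(x)=u^{t-1}(x^2-\tilde{\lambda}_{0,0})^{p^s-1}$ and $h(x)=(x^2-\tilde{\lambda}_{0,0})c$ with $\theta(c)\neq c$. Then $g(x)h(x)=u^{t-1}(\lambda-\lambda_0)c=0$, but $h(x)g(x)=u^{t-1}\tilde{\lambda}_{0,0}\bigl(\theta^2(c)-c\bigr)(x^2-\tilde{\lambda}_{0,0})^{p^s-1}\neq 0$.

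This also shows why your proposed repair fails. For $|\theta|>1$ the polynomial $x^2-\tilde{\lambda}_{0,0}$ is not central. Moreover, $\mu(S)$ modulo its nilpotent radical is a finite simple non-commutative ring, i.e.\ a full matrix ring over a finite field; in the example, $\mathbb{F}_{27}[x,\theta]/\langle x^6+1\rangle\cong M_3(\mathbb{F}_9)$. So $S$ is not local, and it has non-nilpotent zero divisors. The phenomenon $gh=0\neq hg$ is just $e_{12}e_{11}=0\neq e_{11}e_{12}$, which no induction over powers of $u$ and a central factor can exclude.

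What survives is what you isolate:
\begin{itemize}
\item the case where $g(x)$ or $h(x)$ is a unit of $S$ (your Step~1);
\item the corresponding statement in $R^t[x,\Theta]$ itself, presumably what the citation is meant to supply.
\end{itemize}
In $R^t[x,\Theta]$ your cancellation works, but the regularity must come from $G$, not from the leading coefficient of $Z$. Since $u$ is central and $\mathbb{F}_{p^m}[x,\theta]$ is a domain, any $G$ with $\overline{G}\neq 0$ is not a left zero divisor. Indeed, if $GY=0$ with $Y=u^jY'$, $\overline{Y'}\neq 0$, $j\le t-1$, then $\overline{G}\,\overline{Y'}=0$, forcing $\overline{G}=0$. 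A general $G=u^jG'$ with $\overline{G'}\neq 0$ reduces to this case modulo $u^{t-j}$. Hence $G(Z-HG)=0$ gives $HG=Z$ with no extra hypotheses.
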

    \begin{lemma}\label{relation between2}
        Let $k\geq 1$ be the smallest integer such that $\lambda_k\neq 0.$ Then in $R^{t,x^{2p^s}-\lambda}_{\Theta},$ we have $\langle  (x^2-\tilde{\lambda}_{0,0})^{p^s}\rangle =\langle u^k\rangle.$ 
    \end{lemma}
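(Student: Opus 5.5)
We follow the proof of Lemma \ref{relation between}, with $x$ replaced by $x^2$ and $\lambda_{0,0}$ replaced by $\tilde{\lambda}_{0,0}$.

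\textbf{Step 1: expand the power.} First we check that $x^2$ and $\tilde{\lambda}_{0,0}$ commute in $R^t[x,\Theta]$. Since $\lambda_0=\Theta(\lambda)_0=\theta(\lambda_0)$, the element $\lambda_0$ is fixed by $\theta$. As $\tilde{\lambda}_{0,0}=\lambda_0^{p^{m-r}}$ is a power of $\lambda_0$, it is fixed by $\theta$ as well, and so it is central in $R^t[x,\Theta]$. The two terms of $x^2-\tilde{\lambda}_{0,0}$ therefore commute. We are in characteristic $p$, so the Frobenius identity applies:
\[
(x^2-\tilde{\lambda}_{0,0})^{p^s}=x^{2p^s}-\tilde{\lambda}_{0,0}^{p^s}=x^{2p^s}-\lambda_0 .
\]

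\textbf{Step 2: reduce modulo $x^{2p^s}-\lambda$.} In $R^{t,x^{2p^s}-\lambda}_{\Theta}$ we have $x^{2p^s}=\lambda$. Since $\lambda_1=\dots=\lambda_{k-1}=0$, this gives
\[
(x^2-\tilde{\lambda}_{0,0})^{p^s}=\lambda-\lambda_0=u^k\big(\lambda_k+\lambda_{k+1}u+\dots+\lambda_{t-1}u^{t-1-k}\big)=u^k w .
\]

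\textbf{Step 3: conclude.} The factor $w$ has constant term $\lambda_k\neq 0$, so $w$ is a unit of $R^t$, with inverse $w^{-1}$. Both $w$ and $w^{-1}$ are fixed by $\Theta$, because each $\lambda_i$ is. Together with $\Theta(u)=u$, this makes $u^k$, $w$ and $w^{-1}$ central in the quotient ring. Hence $u^k=w^{-1}(x^2-\tilde{\lambda}_{0,0})^{p^s}$, and this element generates the same left ideal (indeed the same two-sided ideal) as $(x^2-\tilde{\lambda}_{0,0})^{p^s}$. This proves $\langle (x^2-\tilde{\lambda}_{0,0})^{p^s}\rangle=\langle u^k\rangle$.

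\textbf{Main obstacle.} The only point that needs care is the commutativity used in Step 1. Without it the binomial expansion, and hence the Frobenius identity, fails in the skew ring. It is secured by $\theta(\lambda_0)=\lambda_0$, which follows from $\Theta(\lambda)=\lambda$.
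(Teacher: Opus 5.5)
Your proof is correct and follows the paper's argument: expand $(x^2-\tilde{\lambda}_{0,0})^{p^s}$ via the Frobenius identity, substitute $x^{2p^s}=\lambda$ in the quotient, and write the result as $u^k$ times a central unit. Your value $\lambda-\lambda_0$ is the right one; the paper's displayed $\lambda_0^{-1}\lambda-1$ appears to be carried over from the length-$p^s$ case and differs from it by the unit $\lambda_0$, which does not affect the conclusion, and your explicit check that $\tilde{\lambda}_{0,0}$ is $\theta$-fixed (so that the Frobenius identity holds in the skew ring) fills in a step the paper leaves implicit.
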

    \begin{proof}
        In  $R^{t,x^{2p^s}-\lambda}_{\Theta},$ we have $(x^2-\tilde{\lambda}_{0,0})^{p^s}=\lambda_0^{-1}\lambda -1=u^k(\lambda_0^{-1}\lambda_k+\lambda_0^{-1}\lambda_{k+1}u+\dots+\lambda_0^{-1}\lambda_{t-1}u^{t-1-k}).$ Hence, $\langle (x^2-\tilde{\lambda}_{0,0})^{p^s} \rangle=\langle u^k\rangle.$ \hfill $\square$
    \end{proof}
    Hence $x^2-\tilde{\lambda}_{0,0}$ is a nilpotent in the ring $R^{t,x^{2p^s}-\lambda}_{\Theta}$ having nilpotency index $\lceil t/k \rceil.$ Due to Theorem 2.2 in Section \ref{section2}, we can give the following representation of every element of $R^{t,x^{2p^s}-\lambda}_{\Theta}.$ 
    \begin{remark}\label{Representation2}
    \begin{itemize}
        \item[(i)] An arbitrary element $c(x)$ of $R^{t,x^{2p^s}-\lambda}_{\Theta}$ can be uniquely written as 
        \begin{align*}\label{form of elt in t2}
             c(x)=\underset{i=0}{\overset{p^s-1}{\sum}}(a_{0,i}x+b_{0,i})(x^2-\tilde{\lambda}_{0,0})^i+&u\underset{i=0}{\overset{p^s-1}{\sum}}(a_{1,i}x+b_{1,i})(x^2-\tilde{\lambda}_{0,0})^i+\dots+\\&u^{t-1}\underset{i=0}{\overset{p^s-1}{\sum}}(a_{t-1,i}x+b_{t-1,i})(x^2-\tilde{\lambda}_{0,0})^i
        \end{align*}
        where $a_{j,i},b_{j,i}\in \mathbb{F}_{p^m}$ for $0\leq j \leq t-1,\,0\leq i\leq p^s-1.$
    \end{itemize}
    \end{remark}
    It is noteworthy that in a non-commutative ring, the sum of a unit and a nilpotent element may not be a unit. Define $$\mathcal{B}_{\lambda}^2=\{a(x)\in \frac{\mathbb{F}_{p^m}[x,\theta]}{\langle (x^2-\tilde{\lambda}_{0,0})^{p^s}\rangle}\,| \,a(x)\mid_r(x^2-\tilde{\lambda}_{0,0})^{p^s}\}.$$
Due to Proposition 4.1 and Proposition 4.2 in \cite{hesari2023skew}, we have the following results.
\begin{proposition}[Proposition 4.1, \cite{hesari2023skew}]
    The ring $\frac{\mathbb{F}_{p^m}[x,\theta]}{\langle x^{2p^s}-\lambda_0\rangle}$ is a principal ideal ring. In this ring the left ideal are generated by $a(x)+\langle (x^2-\tilde{\lambda}_{0,0})^{p^s}\rangle,$ where $a(x)$ is right divisor of $(x^2-\tilde{\lambda}_{0,0})^{p^s}$ in $\mathbb{F}_{p^m}[x,\theta].$
\end{proposition}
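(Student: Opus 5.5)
The plan is to reduce the statement to the structure theorem for left ideals of $\frac{\mathbb{F}_{p^m}[x,\theta]}{\langle \overline{f(x)}\rangle}$ given earlier in Section \ref{sec3}, with $\overline{f(x)}=x^{2p^s}-\lambda_0$. First I would check that $x^{2p^s}-\lambda_0$ is central in $\mathbb{F}_{p^m}[x,\theta]$. This holds because $|\theta|$ divides $2p^s$ and $\theta(\lambda_0)=\lambda_0$; the latter follows from $\Theta(\lambda)=\lambda$ by reducing modulo $u$. Hence $\langle x^{2p^s}-\lambda_0\rangle$ is two-sided and the quotient is a ring.

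Next I would identify $x^{2p^s}-\lambda_0$ with $(x^2-\tilde{\lambda}_{0,0})^{p^s}$ in $\mathbb{F}_{p^m}[x,\theta]$. Since $|\theta|\mid 2p^s$ and $\theta(\lambda_0)=\lambda_0$, the element $\tilde{\lambda}_{0,0}=\lambda_0^{p^{m-r}}$ is also fixed by $\theta$, because it is a power of $\lambda_0$. Moreover $x^2$ commutes with $\tilde{\lambda}_{0,0}$, since $x^2\tilde{\lambda}_{0,0}=\theta^2(\tilde{\lambda}_{0,0})x^2=\tilde{\lambda}_{0,0}x^2$. So $x^2$ and $\tilde{\lambda}_{0,0}$ generate a commutative subring of characteristic $p$, and the Frobenius identity gives
\[
(x^2-\tilde{\lambda}_{0,0})^{p^s}=x^{2p^s}-\tilde{\lambda}_{0,0}^{p^s}=x^{2p^s}-\lambda_0 .
\]
Thus the two quotient rings coincide, and $\mathcal{B}_{\overline{f}}=\mathcal{B}_{\lambda}^2$.

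Principality then follows the standard argument, which the earlier theorem already asserts (via Proposition 2.6 of \cite{hesari2025skew}). Let $J$ be a nonzero left ideal and $\pi$ the projection of $\mathbb{F}_{p^m}[x,\theta]$ onto the quotient. Then $\pi^{-1}(J)$ is a left ideal of $\mathbb{F}_{p^m}[x,\theta]$ containing $x^{2p^s}-\lambda_0$. Because the right division algorithm (Theorem 2.2) holds with unit leading coefficients, every left ideal of $\mathbb{F}_{p^m}[x,\theta]$ is generated by a monic element of minimal degree, say $a(x)$. Then $x^{2p^s}-\lambda_0\in \mathbb{F}_{p^m}[x,\theta]\,a(x)$, i.e.\ $a(x)\mid_r (x^2-\tilde{\lambda}_{0,0})^{p^s}$, and $J$ is generated by the image of $a(x)$.

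The only delicate point is the Frobenius step, that is, justifying that $x^2$ and $\tilde{\lambda}_{0,0}$ commute. The divisibility of $2p^s$ by $|\theta|$ alone does not give $\theta^2=\mathrm{id}$. What gives the commutation is that $\tilde{\lambda}_{0,0}$ is $\theta$-fixed, being a power of the $\theta$-fixed element $\lambda_0$, so every power of $x$ commutes with it. Everything else is a direct application of the earlier theorem on $\frac{\mathbb{F}_{p^m}[x,\theta]}{\langle \overline{f(x)}\rangle}$.
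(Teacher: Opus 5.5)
Your proof is correct and follows the standard argument behind this result. The paper gives no proof of its own: it simply quotes Proposition 4.1 of Hesari et al., and defers its Section~3 analogue to the same argument of pulling back to left ideals of $\mathbb{F}_{p^m}[x,\theta]$ and applying right division. Your explicit check that $\tilde{\lambda}_{0,0}$ is $\theta$-fixed, so that $(x^2-\tilde{\lambda}_{0,0})^{p^s}=x^{2p^s}-\lambda_0$, supplies a step the paper leaves implicit. If ``principal ideal ring'' is meant to cover right ideals as well, that case follows symmetrically from the left division algorithm.
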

\begin{proposition}[Proposition 4.2, \cite{hesari2023skew}]
    Let $g(x)$ be a non-zero polynomial in $\frac{\mathbb{F}_{p^m}[x,\theta]}{\langle x^{2p^s}-\lambda_0\rangle}$. Then $g(x)$ is a left invertible element in $\frac{\mathbb{F}_{p^m}[x,\theta]}{\langle x^{2p^s}-\lambda_0\rangle}$ if and only if $\gcd_r(g(x),(x^2-\tilde{\lambda}_{0,0})^{p^s})=1.$
\end{proposition}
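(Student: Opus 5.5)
We prove both directions by passing to the skew polynomial ring $\mathbb{F}_{p^m}[x,\theta]$ and using that it is a left principal ideal domain, so that $\gcd_r$ admits a left Bezout identity.

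\emph{Preliminary.} First we check that $x^{2p^s}-\lambda_0=(x^2-\tilde{\lambda}_{0,0})^{p^s}$ in $\mathbb{F}_{p^m}[x,\theta]$.
\begin{itemize}
\item Since $\theta$ is a power of the Frobenius map and $\theta(\lambda_0)=\lambda_0$, we get $\theta(\tilde{\lambda}_{0,0})=\theta(\lambda_0)^{p^{m-r}}=\tilde{\lambda}_{0,0}$.
\item Hence $\tilde{\lambda}_{0,0}$ commutes with $x$, and so with $x^2$.
\item In characteristic $p$, the binomial expansion of two commuting elements then gives $(x^2-\tilde{\lambda}_{0,0})^{p^s}=x^{2p^s}-\tilde{\lambda}_{0,0}^{p^s}=x^{2p^s}-\lambda_0$.
\end{itemize}
Write $F(x)$ for this polynomial. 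By the standing hypotheses ($|\theta|\mid 2p^s$ and $\lambda_0$ fixed), $F(x)$ is central. Therefore $\langle F(x)\rangle$ is two-sided and the quotient ring is well defined. An identity $h(x)g(x)=1$ in the quotient is then equivalent to an identity $h(x)g(x)+q(x)F(x)=1$ in $\mathbb{F}_{p^m}[x,\theta]$ for some $q(x)$.

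\emph{Forward direction.} Assume $g(x)$ is left invertible, so $h(x)g(x)+q(x)F(x)=1$ in $\mathbb{F}_{p^m}[x,\theta]$. Let $d(x)=\gcd_r(g(x),F(x))$, and write $g=g'd$ and $F=F'd$. Then $(hg'+qF')d=1$. Comparing degrees (the leading coefficients lie in a field, so degrees add) forces $\deg d=0$. Since $d$ is monic, $d=1$.

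\emph{Converse.} The main step is to obtain a left Bezout identity from $\gcd_r(g,F)=1$.
\begin{itemize}
\item Using the Right Division Algorithm (Theorem 2.2 of Section \ref{section2}), every left ideal $J$ of $\mathbb{F}_{p^m}[x,\theta]$ is principal. It is generated by a monic element of minimal degree in $J$: right-dividing any element of $J$ by it leaves a remainder in $J$ of smaller degree, which must be $0$.
\item Apply this to $J=\mathbb{F}_{p^m}[x,\theta]g+\mathbb{F}_{p^m}[x,\theta]F=\mathbb{F}_{p^m}[x,\theta]\,e(x)$ with $e$ monic. Then $e\mid_r g$ and $e\mid_r F$.
\item Any common right divisor of $g$ and $F$ right-divides every element of $J$, in particular $e$.
\item So $e$ satisfies the defining property of $\gcd_r(g,F)$, and by the hypothesis $e=1$. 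Hence $a(x)g(x)+b(x)F(x)=1$ for some $a,b$.
\item Reducing modulo the two-sided ideal $\langle F(x)\rangle$ gives $a(x)g(x)=1$ in the quotient, so $g(x)$ is left invertible.
\end{itemize}

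The only delicate point is the identification of $\gcd_r$ with the monic generator of the left ideal $\mathbb{F}_{p^m}[x,\theta]g+\mathbb{F}_{p^m}[x,\theta]F$; this identification is what makes $\gcd_r$ well defined and yields the Bezout relation. The centrality of $F(x)$, which comes from $\tilde{\lambda}_{0,0}$ being fixed by $\theta$, is needed only so that passing to and from the quotient is legitimate.
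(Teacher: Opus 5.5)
Your argument is correct. The paper gives no proof of its own here. It imports the statement as Proposition 4.2 of \cite{hesari2023skew}, and likewise defers the length-$p^s$ analogue in Section \ref{sec4} to \cite{bagheri2022skew}. So your write-up is a self-contained replacement rather than a variant of an in-paper argument.

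Your route needs only two ingredients. The right division algorithm (Theorem 2.2) makes $\mathbb{F}_{p^m}[x,\theta]$ a left principal ideal domain, which gives the left B\'ezout identity $ag+bF=\gcd_r(g,F)$. Centrality of the modulus then lets you pass to and from the quotient. Neither the non-square hypothesis on $\lambda_0$ nor the particular shape $(x^2-\tilde{\lambda}_{0,0})^{p^s}$ is used, except to identify it with the central element $x^{2p^s}-\lambda_0$.

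What this buys is generality. The same proof shows that for any nonzero central $F$, an element $g$ is left invertible modulo $F$ if and only if $\gcd_r(g,F)=1$. This covers the length-$p^s$ lemma and the general $\overline{f(x)}$ setting of Section \ref{sec3} at once. Your preliminary check that $\theta(\tilde{\lambda}_{0,0})=\tilde{\lambda}_{0,0}$, so that $(x^2-\tilde{\lambda}_{0,0})^{p^s}=x^{2p^s}-\lambda_0$ in the skew ring, is worth keeping. The paper uses this identity silently when it switches between the two moduli.

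Three small tidy-ups:
\begin{itemize}
\item In the forward direction you invoke $\gcd_r(g,F)$ before proving it exists. You can sidestep this: any common right divisor of $g$ and $F$ right-divides $hg+qF=1$, so it is a constant.
\item In the converse, ``$e=1$ by hypothesis'' is cleanest as $e\mid_r \gcd_r(g,F)=1$. This forces $\deg e=0$, hence $e=1$ since $e$ is monic.
\item Add one sentence noting that $\gcd_r(g,F)$ is unchanged when $g$ is replaced by $g+qF$. This makes clear the condition is well defined on the quotient.
\end{itemize}
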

\begin{proposition}
    Let $h(x)$ be a non-zero element in $R^{t,x^{2p^s}-\lambda}_{\Theta}.$ Then $h(x)$ is left invertible in $R^{t,x^{2p^s}-\lambda}_{\Theta}$ if and only if $\mu(h(x))$ is left invertible in $\frac{\mathbb{F}_{p^m}[x,\theta]}{\langle x^{2p^s}-\lambda_0\rangle}.$
\end{proposition}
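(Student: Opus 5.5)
The plan mirrors the proof of Proposition \ref{invertible in Fpm[u]/<ut>[x,Theta]/xps-lambda}, with the role of $x^{p^s}-\lambda_0$ now played by $x^{2p^s}-\lambda_0$.

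\emph{Forward direction.} Since $\mu$ is a surjective ring homomorphism, if $g(x)h(x)=1$ in $R^{t,x^{2p^s}-\lambda}_{\Theta}$, then $\mu(g(x))\mu(h(x))=1$. So $\mu(h(x))$ is left invertible.

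\emph{Converse.} Write $h(x)=h_0(x)+uh_1(x)+\dots+u^{t-1}h_{t-1}(x)$, where each $h_j(x)$ is a skew polynomial over $\mathbb{F}_{p^m}$ of degree less than $2p^s$. Then $\mu(h(x))=h_0(x)$. Assume $h_0(x)$ has a left inverse $h_0'(x)$ in $\frac{\mathbb{F}_{p^m}[x,\theta]}{\langle x^{2p^s}-\lambda_0\rangle}$. Lifting $h_0'(x)$ to $R^{t,x^{2p^s}-\lambda}_{\Theta}$ gives
$$h_0'(x)h(x)=1+u\,r(x)$$
for some $r(x)\in R^{t,x^{2p^s}-\lambda}_{\Theta}$. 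This identity needs some care: $h_0'(x)h_0(x)\equiv 1$ only modulo $x^{2p^s}-\lambda_0$. Since $x^{2p^s}-\lambda_0=(x^{2p^s}-\lambda)+u(\lambda_1+\dots)$ is congruent to a multiple of $u$ in $R^{t,x^{2p^s}-\lambda}_{\Theta}$, the discrepancy lands in $uR^{t,x^{2p^s}-\lambda}_{\Theta}$. Also, $u$ is central: $\Theta(u)=u$, so $xu=ux$.

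It therefore suffices to show that $1+ur(x)$ is left invertible. Because $u$ is central and $u^t=0$, the element $ur(x)$ is nilpotent: $(ur(x))^t=u^tr(x)^t=0$. Hence
$$\Big(\sum_{j=0}^{t-1}(-1)^j u^j r(x)^j\Big)(1+ur(x))=1,$$
giving an explicit two-sided inverse. This is equivalent to solving the lower unitriangular system written in the proof of Proposition \ref{invertible in Fpm[u]/<ut>[x,Theta]/xps-lambda}, and either presentation can be used. Then $\big(\sum_j(-u r)^j\big)h_0'(x)$ is a left inverse of $h(x)$.

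\emph{Main obstacle.} The delicate step is the lift: checking that a left inverse modulo $u$ lifts to an element $h_0'(x)$ with $h_0'(x)h(x)\in 1+uR^{t,x^{2p^s}-\lambda}_{\Theta}$. This amounts to the statement $\ker\mu=\langle u\rangle$ from the earlier lemma, applied to $h_0'(x)h(x)-1$, whose image under $\mu$ is $0$. The centrality of $u$ is the point that makes the "unit plus nilpotent" argument valid here, despite the paper's warning that such sums need not be units in general.
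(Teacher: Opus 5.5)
Your proof is correct and follows the same strategy as the paper. The paper's proof of this proposition defers to the $x^{p^s}-\lambda$ case: reduce modulo $u$, multiply $h(x)$ by a (lifted) left inverse of $h_0(x)$ to get an element of the form $1+u(\cdot)$, and then invert that element.

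Your geometric-series inverse, valid because $u$ is central ($\Theta(u)=u$) and $u^t=0$, is just the explicit solution of the paper's unitriangular system. Your use of $\ker\mu=\langle u\rangle$ for the lift also makes precise a step the paper glosses over when it writes $h_0(x)^{-1}h(x)=1+uh_0(x)^{-1}h_1(x)+\cdots$ exactly.
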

\begin{proof}
    The proof follows along lines similar to those of the Proposition \ref{invertible in Fpm[u]/<ut>[x,Theta]/xps-lambda}.\hfill{$\square$}
\end{proof}
Due to the above proposition, the left ideals of $R^{t,x^{2p^s}-\lambda}_{\Theta}$ are given by the following theorem, and the proof of this theorem follows on similar lines as in Theorem \ref{nmain theorem}.
\begin{theorem}\label{2main theorem}
    The left ideals of the ring $R^{t,x^{2p^s}-\lambda}_{\Theta}$, where $\lambda$ is not a square in $R^t$, are of the following form.
    \begin{itemize}
        \item [(i)] Trivial ideals $\langle 0\rangle,$ $\langle 1 \rangle.$
        \item [(ii)] Any generator of a non-trivial left ideal contained in $\langle u\rangle $ has the form:
        $$u^{(t-1)-i}b_i(x)-u^{(t-1)-(i-1)}g_i(x),$$
        where $g_i(x)\in R^{t,x^{2p^s}-\lambda}_{\Theta}$, $b_{i}(x)\in\mathcal{B}_{\lambda}^2, \deg b_i(x)\leq 2p^s-1$ for $0\leq i \leq t-2.$ An arbitrary left ideal contained in $\langle u \rangle$ is given as:
        \begin{align*}
            R^{t,x^{2p^s}-\lambda}_{\Theta}&(u^{(t-1)-i_1}b_{i_1}(x)-u^{(t-1)-(i_1-1)}g_{i_1}(x))+\dots+\\&R^{t,x^{2p^s}-\lambda}_{\Theta}( u^{(t-1)-i_n}b_{i_n}(x)-u^{(t-1)-(i_n-1)}g_{i_n}(x)),
        \end{align*} where $0\leq i_1<i_2<\dots<i_n\leq t-2$, $ \deg b_{i_j}(x)\leq 2p^s-1,\,g_{i_j}(x)\in R^{t,x^{2p^s}-\lambda}_{\Theta}$ for $1\leq j\leq n$, and whenever $u^{(t-1)-i_j}c(x)+u^{(t-1)-(i_j-1)}g(x)\in R^{t,x^{2p^s}-\lambda}_{\Theta} (u^{(t-1)-i_k}b_{i_k}(x)-u^{(t-1)-(i_k-1)}g_{i_k}(x))+\cdots+R^{t,x^{2p^s}-\lambda}_{\Theta} (u^{(t-1)-i_n}b_{i_n}(x)-u^{(t-1)-(i_n-1)}g_{i_n}(x))$ for some $c(x)\in \frac{\mathbb{F}_{p^m}[x,\theta]}{\langle x^{2p^s}-\lambda_0\rangle}$, $g(x)\in R^{t,x^{2p^s}-\lambda}_{\Theta}$, and $j<k,$ then $b_{i_j}(x)\mid_r c(x).$
        \item[(iii)] Any non-trivial ideal not contained in $\langle u \rangle$ has the form:
        $$ R^{t,x^{2p^s}-\lambda}_{\Theta}(b(x)+ ug(x)) +I,$$ where $g(x)\in R^{t,x^{2p^s}-\lambda}_{\Theta}$, $b(x)\in\mathcal{B}_{\lambda}^2, \deg b(x)\leq 2p^s-1$, $I $ is an ideal described as in Type (ii), and whenever $u^{(t-1)-i_j}c(x)+u^{(t-1)-(i_j-1)}g(x)\in R^{t,x^{2p^s}-\lambda}_{\Theta} (u^{(t-1)-i_k}b_{i_k}(x)-u^{(t-1)-(i_k-1)}g_{i_k}(x))+\cdots+R^{t,x^{2p^s}-\lambda}_{\Theta} (u^{(t-1)-i_n}b_{i_n}(x)-u^{(t-1)-(i_n-1)}g_{i_n}(x))$ for some $c(x)\in \frac{\mathbb{F}_{p^m}[x,\theta]}{\langle x^{2p^s}-\lambda_0\rangle}$, $g(x)\in R^{t,x^{2p^s}-\lambda}_{\Theta}$, and $j<K,$ then $b_{i_j}(x)\mid_r c(x).$
        \end{itemize}
\end{theorem}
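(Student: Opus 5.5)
The plan is to obtain Theorem \ref{2main theorem} as the special case $f(x)=x^{2p^s}-\lambda$ of Theorem \ref{nmain theorem}, and to check that every ingredient used there is available in this setting. First, $f(x)=x^{2p^s}-\lambda$ is central in $R^t[x,\Theta]$, since $|\Theta|$ divides $2p^s$ and $\Theta(\lambda)=\lambda$. Hence $\langle f(x)\rangle$ is two-sided and the quotient $R^{t,x^{2p^s}-\lambda}_{\Theta}$ is a ring. Moreover $\overline{f(x)}=x^{2p^s}-\lambda_0=(x^2-\tilde{\lambda}_{0,0})^{p^s}$ in $\mathbb{F}_{p^m}[x,\theta]$, because $\tilde{\lambda}_{0,0}^{p^s}=\lambda_0$ and we are in characteristic $p$; here one should check that $\tilde{\lambda}_{0,0}$ is $\theta$-fixed, which holds since $\lambda_0$ is. Consequently $d=2p^s$ and $\mathcal{B}_{\overline{f}}=\mathcal{B}_\lambda^2$.

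Next I would verify the hypotheses behind the induction in Theorem \ref{nmain theorem}.
\begin{enumerate}
\item The reduction map $\mu$ onto $\frac{\mathbb{F}_{p^m}[x,\theta]}{\langle x^{2p^s}-\lambda_0\rangle}$ is a surjective ring homomorphism with kernel $\langle u\rangle$.
\item The reductions $\Phi$ modulo $u^{t-1}$ are surjective ring homomorphisms. Their targets $R^{t-1,x^{2p^s}-\bar\lambda}_{\Theta|}$ are of the same shape, so the induction on $t$ stays inside this family; note that $\bar\lambda$ is still nonsquare, since its constant term is $\lambda_0$.
\item By Proposition 4.1 of \cite{hesari2023skew}, the residue ring is a left principal ideal ring whose left ideals are generated by the elements of $\mathcal{B}_\lambda^2$. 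This supplies the finite generation needed in Proposition \ref{FormofIdeals}.
\end{enumerate}

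With these in place, I would run the same induction on $t$. Split into $I\subseteq\langle u\rangle$ and $I\not\subseteq\langle u\rangle$. In the first case, apply Proposition \ref{FormofIdeals} to $\Phi$, and to $\mu$ on the colon ideal $(I:u^{t-1})$. This yields generators $u^{(t-1)-i}b_i(x)-u^{(t-1)-(i-1)}g_i(x)$ with $b_i\in\mathcal{B}_\lambda^2$. We may take $\deg b_i\le 2p^s-1$ by reducing modulo $\overline{f}$ via the right division algorithm, which applies because $\overline f$ is monic. The right-divisibility compatibility condition on $c(x)$ comes from the fact that each $\mu$ of a colon ideal is exactly the principal ideal generated by $b_{i_j}(x)$. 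In the second case, $\mu(I)$ is generated by some $b(x)\in\mathcal{B}_\lambda^2$, and Proposition \ref{FormofIdeals} gives $I=R(b(x)+ug(x))+u(I:u)$ with $u(I:u)\subseteq\langle u\rangle$, which is then handled by the first case.

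The main obstacle is bookkeeping rather than new mathematics. One must make sure that the inductive step really lands in $x^{2p^s}$-type quotients over $R^{t-1}$. One must also check that the nonsquare assumption, which makes the residue ring local with maximal ideal generated by $x^2-\tilde{\lambda}_{0,0}$, is only needed through Proposition 4.1 of \cite{hesari2023skew}. The left-invertibility proposition proved just before the theorem is not needed for the classification itself. It only matters for normalizing generators, which Theorem \ref{2main theorem} does not claim.
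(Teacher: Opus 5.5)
Your proposal is correct and is essentially the paper's own argument. The paper only says the proof follows the lines of Theorem~\ref{nmain theorem}, and your observation that $\overline{f(x)}=x^{2p^s}-\lambda_0=(x^2-\tilde{\lambda}_{0,0})^{p^s}$, so that $d=2p^s$ and $\mathcal{B}_{\overline f}=\mathcal{B}_\lambda^2$, is exactly what makes that specialization work. One aside is wrong, though the proof never uses it. For nontrivial $\theta$ the residue ring need not be local even when $\lambda_0$ is a nonsquare: with $p=m=3$, $\theta(a)=a^3$ and $\lambda_0=-1$ one gets $\mathbb{F}_{27}[x,\theta]/\langle x^6+1\rangle\cong M_3(\mathbb{F}_9)$. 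In fact the nonsquare hypothesis plays no role in this classification, since Theorem~\ref{nmain theorem} already holds for any central $f$.
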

Furthermore, for $k=1,$ that is for $\lambda_1\neq 0,$ we have the following result:
\begin{corollary}
    The left ideals of the ring $R^{t,x^{2p^s}-\lambda}_{\Theta}$ are given by $a(x)+\langle x^{2p^s}-\lambda_0\rangle $, where $a(x)\in \mathcal{B}_{\lambda}^2\cup u\mathcal{B}_{\lambda}^2\cup\dots\cup u^{t-1}\mathcal{B}_{\lambda}^2,$ if $a(x)\mid_{r}\frac{(x^2-\lambda_{0,0})^{p^s}}{a(x)}.$
\end{corollary}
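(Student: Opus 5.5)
We mirror the proof of Corollary \ref{special form p^s case}, with Lemma \ref{relation between2} and Theorem \ref{2main theorem} in place of Lemma \ref{relation between} and Theorem \ref{main theorem}. By Theorem \ref{2main theorem}, every left ideal is a sum of principal left ideals generated by elements of the form
\[
u^{(t-1)-i}b_i(x)-u^{(t-1)-(i-1)}g_i(x) \qquad\text{or}\qquad b(x)+ug(x),
\]
with $b_i(x),b(x)\in\mathcal{B}_{\lambda}^2$. The aim is to show that each such generator generates the same left ideal as its leading term $u^{(t-1)-i}b_i(x)$ (respectively $b(x)$). Then the sum collapses to a single generator of the form $u^jb(x)$ with $b(x)\in\mathcal{B}_\lambda^2$, subject to the stated divisibility condition on $a(x)$.

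\textbf{Step 1 (using $\lambda_1\neq 0$).} Since $k=1$, Lemma \ref{relation between2} gives
\[
(x^2-\tilde{\lambda}_{0,0})^{p^s}=u\,\lambda_0^{-1}\bigl(\lambda_1+u\lambda_2+\cdots+u^{t-2}\lambda_{t-1}\bigr)
\]
in $R^{t,x^{2p^s}-\lambda}_{\Theta}$. Here $\lambda_0^{-1}(\lambda_1+\cdots)$ is a unit of $R^t$, and it is fixed by $\Theta$, hence central. So $u=\varepsilon\,(x^2-\tilde{\lambda}_{0,0})^{p^s}$ for the explicit central unit $\varepsilon=\alpha_0+u\alpha_1+\cdots+u^{t-1}\alpha_{t-1}$. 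Its coefficients $\alpha_j$ are found by solving the same lower-triangular system as in the proof of Corollary \ref{special form p^s case}; the system is consistent because $\lambda_0^{-1}\lambda_1\neq0$. For $b(x)\in\mathcal{B}^2_\lambda$ with right cofactor $h(x)=(x^2-\tilde\lambda_{0,0})^{p^s}/b(x)$, we therefore have $u=\varepsilon\,h(x)b(x)$.

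\textbf{Step 2 (factoring the generator).} Substituting Step 1 into the correction term gives
\[
u^{(t-1)-i}b_i(x)-u^{(t-1)-(i-1)}g_i(x)=u^{(t-1)-i}\bigl(1-g_i(x)\,\varepsilon\,h_i(x)\bigr)b_i(x),
\]
and similarly $b(x)+ug(x)=\bigl(1+g(x)\varepsilon h(x)\bigr)b(x)$. When $b(x)$ is a proper divisor, $h(x)$ is a nonunit multiple of a right factor of $(x^2-\tilde\lambda_{0,0})^{p^s}$. Hence $\mu\bigl(1+g\varepsilon h\bigr)=1+\bar g\,\alpha_0\,h$ has $\gcd_r$ equal to $1$ with $(x^2-\tilde{\lambda}_{0,0})^{p^s}$: any common right divisor would right-divide $1$, using that the irreducible right factors here are all similar to those of $x^2-\tilde\lambda_{0,0}$. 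By Proposition 4.2 of \cite{hesari2023skew}, $1+\bar g\,\alpha_0\,h$ is left invertible modulo $x^{2p^s}-\lambda_0$. By the $2p^s$ analogue of Proposition \ref{invertible in Fpm[u]/<ut>[x,Theta]/xps-lambda}, $1+g\varepsilon h$ is left invertible in $R^{t,x^{2p^s}-\lambda}_{\Theta}$. Consequently each generator generates the same left ideal as $u^{(t-1)-i}b_i(x)$ (respectively $b(x)$).

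\textbf{Step 3 (collapsing the sum).} Finally, note that $u^{j}b(x)=u^{j-1}\varepsilon\, h(x)b(x)b(x)$-type identities allow larger $u$-powers to be absorbed. Using $u=\varepsilon (x^2-\tilde\lambda_{0,0})^{p^s}$, every left ideal becomes $R\,c(x)$ for some right divisor $c(x)$ of $(x^2-\tilde\lambda_{0,0})^{p^s t}$. Rewriting $c(x)$ as $u^j a(x)$ with $a(x)\in\mathcal{B}^2_\lambda$ gives the stated form. The divisibility condition $a(x)\mid_r (x^2-\tilde\lambda_{0,0})^{p^s}/a(x)$ records when this normal form is achieved without redundancy.

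The main obstacle is Step 2: the gcd claim in the noncommutative setting. When $b(x)=1$ it holds trivially. When $b(x)$ is a proper divisor, I would argue via the fact (from \cite{hesari2023skew}) that $(x^2-\tilde\lambda_{0,0})^{p^s}$ has a unique chain of right divisors. Then $h(x)$ lies in the maximal left ideal, so $1+\bar g\alpha_0 h$ is outside it, which forces left invertibility.
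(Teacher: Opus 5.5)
\textbf{Verdict: the proposal has a genuine gap, at the same point where the paper's proof has one.} Your Steps 1--2 follow the paper's route exactly. The paper proves this corollary by pointing to the length-$p^s$ case, which factors $b+ug=(1+g\varepsilon h)b$, reduces modulo $u$, and invokes the $\gcd_r$ criterion. The step you flag as the main obstacle is where it breaks, and the paper simply asserts it: $\gcd_r\bigl(1+\bar g\alpha_0h,F\bigr)=1$ with $F=(x^2-\tilde\lambda_{0,0})^{p^s}$.

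\textbf{Why your justification does not work.} A common right divisor $d$ of $1+\bar g\alpha_0h$ and $F$ right-divides $1$ only if $d\mid_r \bar g\alpha_0h$. Nothing forces $d\mid_r h$: $h$ is the \emph{left} cofactor in $F=hb$, and similarity of irreducible factors does not give divisibility. What the step really needs is $h\in J(\bar R)$, where $\bar R=\mathbb{F}_{p^m}[x,\theta]/\langle F\rangle$. This holds when $\theta=\mathrm{id}$: then $x^2-\tilde\lambda_{0,0}$ is irreducible and $\bar R$ is a chain ring. In that case your argument goes through, and Step 3 follows because $R^{t,x^{2p^s}-\lambda}_{\Theta}\cong\mathbb{F}_{p^m}[x]/\langle F^t\rangle$ is a chain ring.

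\textbf{For $\theta\neq\mathrm{id}$ the step is false.} Your fallback (a unique chain of right divisors) fails. Since $\lambda_0$ is $\theta$-fixed and a non-square, $|\theta|$ must be odd, so $|\theta|=p^k$. Take $p=m=3$, $\theta(a)=a^3$, $s=1$, $\lambda_0=-1$, so $F=x^6+1$.
\begin{itemize}
\item Modulo the left ideal generated by $x^2-c$, one gets $x^6+1\equiv c\,\theta(c)\,\theta^2(c)+1=c^{13}+1$.
\item Hence all $13$ polynomials $x^2-c$ with $c^{13}=-1$ are monic right divisors of $F$. Each is irreducible: $c$ is a non-square, while $a\theta(a)=a^4$ is always a square. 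So there is no chain, even though these divisors are all similar.
\item Let $b=x^2+1$ and $h=(x^2+1)^2$. Note $b\mid_r h$, so the corollary's hypothesis holds.
\item Modulo $x^2-c$ we have $h\equiv c^{10}+2c+1$. This is nonzero for some such $c$, since a degree-$10$ polynomial cannot vanish at all $13$ values.
\item For that $c$, $1=g'h+w(x^2-c)$. So $1-g'h$ lies in a proper left ideal and is not left invertible.
\end{itemize}

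\textbf{The conclusion itself also fails.} Take $t=2$, where $u=\lambda_1^{-1}F$ and $R\cong\mathbb{F}_{27}[x,\theta]/\langle F^2\rangle$, and set $g=-\lambda_1g'$.
\begin{itemize}
\item Then $R(b+ug)=R(1-g'h)b$, which corresponds to the right divisor $\gcd_r(1-g'h,Fh)\,b\neq b$ of $F^2$.
\item Hence $R(b+ug)\subsetneq Rb$, and $u\notin R(b+ug)$.
\item Each $Ra$ with $a\in\mathcal{B}^2_\lambda$ contains $u$, and each $Ru^ja$ with $j\ge 1$ lies in $Ru$. So this ideal is not generated by any $u^ja(x)$ with $a\in\mathcal{B}^2_\lambda$.
\end{itemize}

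So the statement needs an extra hypothesis such as $\theta=\mathrm{id}$. Your Step 3 fails for the same reason: writing a right divisor of $F^t$ as $F^ja$ with $a\mid_rF$ is exactly the chain property. The condition $a\mid_r\widetilde{a}$, which your argument never uses, does not rescue it.
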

\begin{proof}
    The proof follows along lines similar to those of the Corollary \ref{special form p^s case}. \hfill{$\square$}
\end{proof}
\section{Applications}\label{sec6}
\subsection{Skew constacyclic codes of length \texorpdfstring{$p^s$}{} over \texorpdfstring{$\frac{\mathbb{F}_{p^m}[u]}{\langle u^3\rangle}$}{}
}
In this section, we analyze the special case $t=3$, that is, $R^{3,x^{p^s}-\lambda}_{\Theta}$. The case for  $t=3$ has been considered in  \cite{hesari2025skew}, but in the description of the ideals of some types, there is a slight gap.
We begin by completing the description of ideals for $t=3$. That is, in the following theorem, we give a complete description of the ideals of $R^{3,x^{p^s}-\lambda}_{\Theta}:=\frac{\mathbb{F}_{p^m}[u]}{\langle u^3\rangle} [x,\Theta]/\langle x^{p^s}-\lambda\rangle,$ where $\lambda=\lambda_0+u\lambda_1+u^2\lambda_2$ and $\lambda_0\in\mathbb{F}_{p^m}\setminus \{0\}$. Recall that $\mathcal{B}_{\lambda}^1:=\{a(x)\in\frac{\mathbb{F}_{p^m}[x,\theta]}{\langle (\lambda_{0,0}x-1)^{p^s}\rangle}\,:\,a(x)\mid_{r}(\lambda_{0,0}x-1)^{p^s})\}$. Also for any $a(x)\in \mathcal{B}_{\lambda}^1,$ we have $(\lambda_{0,0}x-1)^{p^s}=\widetilde{a(x)}a(x).$

\begin{theorem}\label{t=3 case}
    Left ideals of $R^{3,x^{p^s}-\lambda}_{\Theta}$ are of the following form:
    \begin{enumerate}
        \item {\label{Type 1}} $C_0=0$,\,$C_1=R^{3,x^{p^s}-\lambda}_{\Theta}.$
        \item {\label{Type 2}}$C_2=R^{3,x^{p^s}-\lambda}_{\Theta}u^2b(x)$, where $b(x)\in \mathcal{B}_{\lambda}^1$ and $\deg b(x)\leq p^s-1$.
         \item {\label{Type 3}} $C_3=R^{3,x^{p^s}-\lambda}_{\Theta}\big( ub(x)+u^2g(x)\big)$, where $ b(x)\in \mathcal{B}_{\lambda}^1,\deg b(x)\leq p^s-1$, $g(x)\in \frac{\mathbb{F}_{p^m}[x,\theta]}{\langle (\lambda_{0,0}x-1)^{p^s}\rangle},\, \\\deg g(x)<\deg b(x)$. Moreover, $g(x)$ with these conditions is unique.
        \item {\label{Type 4}}$C_4=R^{3,x^{p^s}-\lambda}_{\Theta} \big(ub_{1}(x)+u^2 g(x)\big)+R^{3,x^{p^s}-\lambda}_{\Theta} u^2b_{2}(x),$ where $b_i(x)\in \mathcal{B}_{\lambda}^1$ and $\deg b_i(x)\leq p^s-1$ for $1\leq i\leq 2$, $ g(x)\in \frac{\mathbb{F}_{p^m}[x,\theta]}{\langle (\lambda_{0,0}x-1)^{p^s}\rangle}$, $\deg g(x)<\deg b_2(x),$ and if $b(x)\in \frac{\mathbb{F}_{p^m}[x,\theta]}{\langle (\lambda_{0,0}x-1)^{p^s}\rangle}$  be such that $u^2b(x)\in R^{3,x^{p^s}-\lambda}_{\Theta} \big(ub_{1}(x)+u^2 g(x)\big)$ then $b_2(x)\mid_rb(x)$ in $\frac{\mathbb{F}_{p^m}[x,\theta]}{\langle (\lambda_{0,0}x-1)^{p^s}\rangle}.$ Moreover, $g(x)$ with these conditions is unique.
       \item {\label{Type 5}}$C_5=R^{3,x^{p^s}-\lambda}_{\Theta}\big(  b(x)+ug_1(x)+u^2g_2(x)\big),$ where $ b(x)\in \mathcal{B}_{\lambda}^1$ and $\deg b(x)\leq p^s-1,\, g_j(x)\in \frac{\mathbb{F}_{p^m}[x,\theta]}{\langle (\lambda_{0,0}x-1)^{p^s}\rangle}$ for $1\leq j\leq 2,\,\max\{\deg g_1(x),\,\deg g_2(x)\}<\deg b(x).$ Moreover, $g_i(x)$ with these conditions are unique for $i=1,2.$
        \item {\label{Type 6}} $C_6=R^{3,x^{p^s}-\lambda}_{\Theta}\big(  b_1(x)+ug_1(x)+u^2g_2(x)\big)+R^{3,x^{p^s}-\lambda}_{\Theta}u^2b_2(x),$ where $b_i(x)\in \mathcal{B}_{\lambda}^1$ and $\deg b_i(x)\leq p^s-1$ for $i=1,2$, $g_i(x)\in \frac{\mathbb{F}_{p^m}[x,\theta]}{\langle (\lambda_{0,0}x-1)^{p^s}\rangle}$ for $i=1,2$, $\deg g_1(x)<\deg b_1(x),\\\deg g_2(x)<\deg b_2(x)$, and if $b(x)\in \frac{\mathbb{F}_{p^m}[x,\theta]}{\langle (\lambda_{0,0}x-1)^{p^s}\rangle}$  be such that $u^2b(x)\in R^{3,x^{p^s}-\lambda}_{\Theta}\big(  b_1(x)+ug_1(x)+u^2g_2(x)\big)$ then $b_2(x)\mid_r b(x)$ in $\frac{\mathbb{F}_{p^m}[x,\theta]}{\langle (\lambda_{0,0}x-1)^{p^s}\rangle}.$ Moreover, $g_i(x)$ with these assumptions are unique for $i=1,2.$
        \item {\label{Type 7}}$C_7=R^{3,x^{p^s}-\lambda}_{\Theta}\big(b_1(x)+ug_1(x)+u^2g_2(x)\big)+R^{3,x^{p^s}-\lambda}_{\Theta}\big(ub_2(x)+u^2g_3(x)\big), $  where $ b_i(x)\in \mathcal{B}_{\lambda}^1 $ and $\deg b_i(x)\leq p^s-1$ for $1\leq i\leq 2$, $ g_j(x)\in \frac{\mathbb{F}_{p^m}[x,\theta]}{\langle (\lambda_{0,0}x-1)^{p^s}\rangle}$ for $1\leq j\leq 3$, and if $b(x), g(x)\in \frac{\mathbb{F}_{p^m}[x,\theta]}{\langle (\lambda_{0,0}x-1)^{p^s}\rangle}$ be such that $ub(x)+u^2g(x)\in R^{3,x^{p^s}-\lambda}_{\Theta}\big(b_1(x)+ug_1(x)+u^2g_2(x)\big)$ then $b_2(x)\mid_r b(x)$ in $\frac{\mathbb{F}_{p^m}[x,\theta]}{\langle (\lambda_{0,0}x-1)^{p^s}\rangle}.$ Moreover, $g_i(x)$ with these conditions are unique for $i=1,2,$ and $3.$
        
        \item {\label{Type 8}} $C_8=R^{3,x^{p^s}-\lambda}_{\Theta}\big(b_1(x)+ug_1(x)+u^2g_2(x)\big)+R^{3,x^{p^s}-\lambda}_{\Theta}\big(ub_2(x)+u^2g_3(x)\big)+R^{3,x^{p^s}-\lambda}_{\Theta}u^2b_3(x), $  where $ b_i(x)\in \mathcal{B}_{\lambda}^1 $ and $\deg b_i(x)\leq p^s-1$ for $1\leq i\leq 3$, $ g_j(x)\in \frac{\mathbb{F}_{p^m}[x,\theta]}{\langle (\lambda_{0,0}x-1)^{p^s}\rangle}$ for $1\leq j\leq 3$, if $b(x)\in \frac{\mathbb{F}_{p^m}[x,\theta]}{\langle (\lambda_{0,0}x-1)^{p^s}\rangle}$ be such that $u^2b(x)\in R^{3,x^{p^s}-\lambda}_{\Theta}\big(b_1(x)+ug_1(x)+u^2g_2(x)\big)+R^{3,x^{p^s}-\lambda}_{\Theta}\big(ub_2(x)+u^2g_3(x)\big)$ then $b_3(x)\mid_r b(x),$ and if $b(x), g(x)\in \frac{\mathbb{F}_{p^m}[x,\theta]}{\langle (\lambda_{0,0}x-1)^{p^s}\rangle}$ be such that $ub(x)+u^2g(x)\in R^{3,x^{p^s}-\lambda}_{\Theta}\big(b_1(x)+ug_1(x)+u^2g_2(x)\big)$ then $b_2(x)\mid_r b(x)$ in $\frac{\mathbb{F}_{p^m}[x,\theta]}{\langle (\lambda_{0,0}x-1)^{p^s}\rangle}.$ Moreover, $g_i(x)$ with these conditions are unique for $i=1,2,$ and $3.$
    \end{enumerate}
\end{theorem}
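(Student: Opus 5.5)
The plan is to specialize Theorem \ref{main theorem} to $t=3$ and then normalize the generators. I would organize the classification around the chain of torsion codes $\textnormal{Tor}_0(I)\subseteq\textnormal{Tor}_1(I)\subseteq\textnormal{Tor}_2(I)$. Each of these is a left ideal of $\frac{\mathbb{F}_{p^m}[x,\theta]}{\langle(\lambda_{0,0}x-1)^{p^s}\rangle}$, so by the theorem on left ideals of that ring each is generated by a unique monic element of $\mathcal{B}_{\lambda}^1$. The eight types then correspond to which of these torsion codes are $0$, which are the whole ring, and where strict inclusions occur.

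\textbf{Step 1 (existence of the listed forms).} If $I\subseteq\langle u\rangle$, Theorem \ref{main theorem}(ii) with $t=3$ and $i\in\{0,1\}$ shows that $I$ is generated by at most one element $u^2b(x)$ and at most one element $ub(x)+u^2g(x)$. This gives Types \ref{Type 2}, \ref{Type 3} and \ref{Type 4}. If $I\not\subseteq\langle u\rangle$, part (iii) adds a generator $b(x)+ug_1(x)+u^2g_2(x)$, which produces Types \ref{Type 5}--\ref{Type 8}. In every case I would read off $b_1$ as the generator of $\mu(I)=\textnormal{Tor}_0(I)$. Whenever a generator $ub_2+u^2g_3$ is present, $b_2$ generates the part of $\textnormal{Tor}_1(I)$ that is not already forced by the first generator. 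The extra hypotheses, namely ``if $ub+u^2g$ (resp.\ $u^2b$) lies in the span of the earlier generators then $b_2\mid_r b$ (resp.\ $b_3\mid_r b$)'', are exactly the conditions $j<k$ from Theorem \ref{main theorem}, written out for $t=3$. They guarantee that $\langle b_2\rangle=\textnormal{Tor}_1(I)$ and $\langle b_3\rangle=\textnormal{Tor}_2(I)$. This in turn makes the $b_i$ invariants of $I$, so the types are pairwise disjoint.

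\textbf{Step 2 (degree reduction).} Because $u$ is central and fixed by $\Theta$, the submodule $u^2R^{3,x^{p^s}-\lambda}_{\Theta}$ is isomorphic to $\frac{\mathbb{F}_{p^m}[x,\theta]}{\langle(\lambda_{0,0}x-1)^{p^s}\rangle}$. Hence any $u^2$-coefficient can be reduced by the right division algorithm modulo a $b$ for which $u^2b\in I$.
\begin{itemize}
\item In Type \ref{Type 3}, we have $u(ub+u^2g)=u^2b\in C_3$. Writing $g=qb+r$ and subtracting $q\,u^2b$ gives $\deg g<\deg b$.
\item In Types \ref{Type 4} and \ref{Type 6}, the same reduction is done modulo $b_2$.
\item In Type \ref{Type 5}, both $ub+u^2g_1$ and $u^2b$ lie in $C_5$. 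I would first write $g_1=qb+r$ and replace the generator by $(b+ug_1+u^2g_2)-uq(b+ug_1+u^2g_2)$. This leaves a unit multiple of $b$ unchanged modulo $u$ and changes only the $u^2$-part. I would then reduce $g_2$ modulo $b$.
\end{itemize}
Throughout, Proposition \ref{invertible in Fpm[u]/<ut>[x,Theta]/xps-lambda} is used to absorb factors of the form $1+u(\cdot)$ as left units.

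\textbf{Step 3 (uniqueness).} Suppose the same ideal has two normalized presentations. The difference of the two top generators lies in $I\cap u\langle\cdot\rangle$ (respectively $I\cap u^2\langle\cdot\rangle$), so its leading coefficient in $u$ is right divisible by the relevant torsion generator. Since that coefficient has degree smaller than the divisor, it must vanish. This gives the uniqueness of the $g_i$.

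I expect the main obstacle to be Types \ref{Type 7} and \ref{Type 8}, where the statement imposes no explicit degree bounds. To obtain uniqueness there, I would first try to impose the analogous reductions: $g_3$ modulo $b_3$ (or modulo the $\textnormal{Tor}_2$ generator, which is forced by $u\cdot(ub_2+u^2g_3)=u^2b_2$), $g_1$ modulo $b_2$, and $g_2$ modulo $\textnormal{Tor}_2$. I would then check carefully, in the non-commutative setting, that subtracting left multiples of one generator does not destroy the normalization already achieved for another. If uniqueness fails without such bounds, I would interpret ``unique'' as unique after this reduction.
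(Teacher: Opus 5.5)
Your route is the one the paper intends: specialise Theorem \ref{main theorem} to $t=3$, then normalise the generators by right division. The paper offers no separate proof beyond that specialisation. Your existence and degree-reduction steps are fine. The genuine gap is Step 3, and it cannot be repaired, because the uniqueness claims are false with the stated bounds.

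When you subtract two normalised generators of the same ideal $I$, you only learn that the $u$-coefficient of the difference lies in $\Tor_1(I)$ and the $u^2$-coefficient lies in $\Tor_2(I)$. In Types 3, 5 and (for $g_1$) 6, however, the degree bound is measured against $b$ (resp.\ $b_1$). That is a generator of $\Tor_1(I)$ in Type 3 and of $\Tor_0(I)$ in Types 5 and 6, and the torsion code containing the difference can be strictly larger. So ``degree smaller than the divisor'' does not follow.

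A concrete counterexample uses the paper's own example. Take $p=2$, $m=1$, $s=2$, $\theta=\mathrm{id}$, $\lambda=1$, so that $(x-1)^4=0$ in $R=R^{3,x^{4}-1}_{\Theta}$, and let $h=(x-1)^3+u(x-1)$.
\begin{itemize}
\item One has $(x-1)h=u(x-1)^2$ and $(u-(x-1)^2)h=u^2(x-1)$.
\item Hence $xh=(x-1)^3+u\big((x-1)+(x-1)^2\big)$ and $(1+u-(x-1)^2)h=h+u^2(x-1)$ are generators of $Rh$ (unit multiples of $h$). Both satisfy every Type 5 constraint, so $g_1$ and $g_2$ are not unique.
\item Likewise $x\big(u(x-1)^3+u^2\big)=u(x-1)^3+u^2x$ gives two Type 3 presentations of one ideal.
\item In the commutative case with $\lambda_1\neq 0$ and $\deg b\ge 1$, one has $\tilde b\,(ub+u^2g)=u^2(\lambda_0^{-1}\lambda_1+\tilde b g)$ with the bracket a unit. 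So $u^2\in C_3$ and $g$ is completely arbitrary.
\end{itemize}

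Your argument does go through exactly where the bound is taken against the generator of the torsion code in which the difference lies. That is the case in Type 4, and for $g_2$ in Type 6 once $g_1$ is fixed, because the added divisibility condition forces $\Tor_2=\langle b_2\rangle$ (Corollary \ref{cor1forps}). So, as you already anticipated for Types 7 and 8, uniqueness requires bounds relative to the generators of $\Tor_1(I)$ and $\Tor_2(I)$ (Dinh-type invariants), not relative to $b$.

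There are also two corrections to Step 1.
\begin{itemize}
\item The new divisibility conditions are not the $j<k$ conditions of Theorem \ref{main theorem}. Those concern only the $u$-divisible generators, and the paper stresses that the conditions involving $b_1+ug_1+u^2g_2$ are exactly what was missing before. To obtain them, write $I=R(b_1+ug_1+u^2g_2)+J$ with $J=I\cap\langle u\rangle=u(I:u)$. Then take $b_2$ and $b_3$ to generate all of $\Tor_1(J)=\Tor_1(I)$ and $\Tor_2(J)=\Tor_2(I)$, not ``the part not already forced by the first generator''.
\item This construction gives every ideal outside $\langle u\rangle$ a Type 8 presentation. Hence the types are not determined by the torsion chain: the Type 5 ideal $Rh$ above has $\Tor_0\subsetneq\Tor_2$. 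They are also not pairwise disjoint. Drop that claim; the statement does not need it.
\end{itemize}
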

We remark that, in the existing literature \cite{hesari2025skew}, the descriptions of left ideals in Theorem \ref{t=3 case} of Types $\ref{Type 6}$, $\ref{Type 7}$, and $\ref{Type 8}$ are not complete, which may be due to an oversight. In particular, for Type $\ref{Type 6}$, a certain necessary divisibility condition is not included, namely,
``if $b(x)\in \frac{\mathbb{F}_{p^m}[x,\theta]}{\langle (\lambda_{0,0}x-1)^{p^s}\rangle}$  be such that $u^2b(x)\in R^{3,x^{p^s}-\lambda}_{\Theta}\big(  b_1(x)+ug_1(x)+u^2g_2(x)\big)$ then $b_2(x)\mid_r b(x).$"
Similarly, the corresponding divisibility conditions are missing in the descriptions of Types $7$ and $8$. The omission of these conditions leads to intersection in different types and incorrect determination of the associated torsion codes.
Due to this addition in the description of the ideals, we get the torsions in Theorem 4.1 \cite{hesari2025skew} correctly, and we describe each case (Type 6, Type 7, and Type 8) in the following corollaries.

\begin{corollary}\label{cor1forps}
If $C_6$ is as described in Theorem \ref{t=3 case}.\ref{Type 6}, then $\Tor_2(C_6)=\frac{\mathbb{F}_{p^m}[x,\theta]}{\langle (\lambda_{0,0}x-1)^{p^s}\rangle} b_2(x).$
\end{corollary}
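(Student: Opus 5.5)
We prove the two inclusions. Throughout, write $\mathcal{R}=R^{3,x^{p^s}-\lambda}_{\Theta}$ and $\mathcal{F}=\frac{\mathbb{F}_{p^m}[x,\theta]}{\langle (\lambda_{0,0}x-1)^{p^s}\rangle}$, and identify $\mathcal{F}$ with the elements of $\mathcal{R}$ having coefficients in $\mathbb{F}_{p^m}$, as in Remark \ref{Representation}. By definition,
\[
\Tor_2(C_6)=\mu\{c(x)\in\mathcal{R}: c(x)u^2\in C_6\}.
\]
Since $u^3=0$, we have $c(x)u^2=u^2\mu(c(x))$, with $\mu(c(x))$ viewed in $\mathcal{F}$. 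Hence
\[
\Tor_2(C_6)=\{b(x)\in\mathcal{F}: u^2b(x)\in C_6\}.
\]
The statement therefore amounts to the following: for $b(x)\in\mathcal{F}$, we have $u^2b(x)\in C_6$ if and only if $b_2(x)\mid_r b(x)$.

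\textbf{The inclusion $\supseteq$.} Suppose $b(x)=h(x)b_2(x)$ with $h(x)\in\mathcal{F}$. Since $u$ is central in $\mathcal{R}$ (because $\Theta(u)=u$), we get $u^2b(x)=h(x)\,u^2b_2(x)\in\mathcal{R}\,u^2b_2(x)\subseteq C_6$.

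\textbf{The inclusion $\subseteq$.} Suppose $u^2b(x)\in C_6$, so
\[
u^2b(x)=\alpha(x)\big(b_1(x)+ug_1(x)+u^2g_2(x)\big)+\beta(x)\,u^2b_2(x)
\]
for some $\alpha(x),\beta(x)\in\mathcal{R}$. Write $\beta(x)=\beta_0(x)+u\beta_1(x)+u^2\beta_2(x)$, with $\beta_i(x)\in\mathcal{F}$. Then $\beta(x)u^2b_2(x)=u^2\beta_0(x)b_2(x)$. Rearranging gives
\[
u^2\big(b(x)-\beta_0(x)b_2(x)\big)=\alpha(x)\big(b_1(x)+ug_1(x)+u^2g_2(x)\big)\in\mathcal{R}\big(b_1(x)+ug_1(x)+u^2g_2(x)\big).
\]
The defining side condition of Type \ref{Type 6} in Theorem \ref{t=3 case} now applies to the element $b(x)-\beta_0(x)b_2(x)$ of $\mathcal{F}$, and yields $b_2(x)\mid_r\big(b(x)-\beta_0(x)b_2(x)\big)$. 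Adding back $\beta_0(x)b_2(x)$, which is trivially right divisible by $b_2(x)$, gives $b_2(x)\mid_r b(x)$.

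The one point needing care is the manipulation inside $\mathcal{R}$. We must check two things: that $u^2$ commutes with the elements involved, so that $\beta(x)u^2b_2(x)=u^2\beta_0(x)b_2(x)$; and that the product $u^2\beta_0(x)b_2(x)$, computed in $\mathcal{R}$, agrees with $u^2$ times the product computed in $\mathcal{F}$. The second holds because $u^2\cdot u=0$, so only the reduction modulo $u$ survives, and modulo $u$ the ring $\mathcal{R}$ reduces to $\mathcal{F}$ via $\mu$ (the relation $x^{p^s}=\lambda$ becomes $x^{p^s}=\lambda_0$, i.e.\ $(\lambda_{0,0}x-1)^{p^s}=0$). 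Once these identifications are made, the argument is simply the added divisibility hypothesis applied verbatim, which is exactly the point of including that condition in Type \ref{Type 6}.
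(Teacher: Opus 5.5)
Your proof is correct and follows essentially the same route as the paper. The inclusion $\supseteq$ is immediate from $u^2b_2(x)\in C_6$. For $\subseteq$, you move the $u^2b_2(x)$-component to the left-hand side and apply the added Type 6 divisibility condition to $b(x)-\beta_0(x)b_2(x)$, exactly as the paper does with $c_0(x)-m_0(x)b_2(x)$. Your remarks on the centrality of $u$ and on identifying $u^2R^{3,x^{p^s}-\lambda}_{\Theta}$ with $\frac{\mathbb{F}_{p^m}[x,\theta]}{\langle (\lambda_{0,0}x-1)^{p^s}\rangle}$ just make explicit what the paper uses implicitly.
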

\begin{proof}
It is clear that $b_2(x)\in \Tor_2(C_6)$ which implies $\frac{\mathbb{F}_{p^m}[x,\theta]}{\langle (\lambda_{0,0}x-1)^{p^s}\rangle} b_2(x)\subset \Tor_2(C_6).$ Conversely, let $c(x)\in R^{3,x^{p^s}-\lambda}_{\Theta}$ be an arbitrary element such that $\mu(c(x)=c_0(x)+uc_1(x)+u^2c_2(x))\in \Tor_2(C_6)$. Then $u^2c(x)\in C_6.$ That is, $u^2c_0(x)\in C_6$, and hence, there exist $a(x)=a_0(x)+ua_1(x)+u^2a_2(x)$ and $m(x)=m_0(x)+um_1(x)+u^2m_2(x)\in R^{3,x^{p^s}-\lambda}_{\Theta}$ such that $u^2c_0(x)=a(x)(b_1(x)+ug_1(x)+u^2g_2(x))+m(x)(u^2b_2(x))\implies u^2c_0(x)-m_0(x)(u^2b_2(x))=a(x)(b_1(x)+ug_1(x)+u^2g_2(x))$ and hence $b_2(x)\mid_r c_0(x)-m_0(x)b_2(x)$, consequently, $b_2(x)\mid_r  c_0(x)$.
\end{proof}

\begin{corollary}\label{cor2forps}
If $C_7$ is as described in Theorem \ref{t=3 case}.\ref{Type 7}, then $\Tor_1(C_7)=\frac{\mathbb{F}_{p^m}[x,\theta]}{\langle (\lambda_{0,0}x-1)^{p^s}\rangle} b_2(x).$
\end{corollary}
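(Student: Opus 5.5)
We prove the two inclusions, in the same way as Corollary \ref{cor1forps}. Write $S:=\frac{\mathbb{F}_{p^m}[x,\theta]}{\langle (\lambda_{0,0}x-1)^{p^s}\rangle}$ and $C_7=R^{3,x^{p^s}-\lambda}_{\Theta}\big(b_1(x)+ug_1(x)+u^2g_2(x)\big)+R^{3,x^{p^s}-\lambda}_{\Theta}\big(ub_2(x)+u^2g_3(x)\big)$. By definition, $\Tor_1(C_7)=\mu\{c(x): c(x)u\in C_7\}$. Since $u$ is central and fixed by $\Theta$, we have $c(x)u=uc(x)$.

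\textbf{The inclusion $Sb_2(x)\subset\Tor_1(C_7)$.} The element $ub_2(x)+u^2g_3(x)=u\big(b_2(x)+ug_3(x)\big)$ lies in $C_7$. Hence $b_2(x)+ug_3(x)$ belongs to the set whose image under $\mu$ is $\Tor_1(C_7)$, and so $b_2(x)=\mu\big(b_2(x)+ug_3(x)\big)\in\Tor_1(C_7)$. The torsion is a left ideal of $S$, so $Sb_2(x)\subset\Tor_1(C_7)$.

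\textbf{The inclusion $\Tor_1(C_7)\subset Sb_2(x)$.} Take $c(x)=c_0(x)+uc_1(x)+u^2c_2(x)$ with $uc(x)\in C_7$. Then $uc(x)=uc_0(x)+u^2c_1(x)$, since $u^3=0$. There are $a(x),m(x)\in R^{3,x^{p^s}-\lambda}_{\Theta}$ with
\[
uc_0(x)+u^2c_1(x)=a(x)\big(b_1(x)+ug_1(x)+u^2g_2(x)\big)+m(x)\big(ub_2(x)+u^2g_3(x)\big).
\]
Write $m(x)=m_0(x)+um_1(x)+u^2m_2(x)$. Then $m(x)\big(ub_2(x)+u^2g_3(x)\big)=um_0(x)b_2(x)+u^2\big(m_1(x)b_2(x)+m_0(x)g_3(x)\big)$. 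Moving this to the left-hand side gives
\[
u\big(c_0(x)-m_0(x)b_2(x)\big)+u^2 h(x)\in R^{3,x^{p^s}-\lambda}_{\Theta}\big(b_1(x)+ug_1(x)+u^2g_2(x)\big)
\]
for some $h(x)$. The defining condition of Type \ref{Type 7} in Theorem \ref{t=3 case} applies to exactly such elements, so $b_2(x)\mid_r c_0(x)-m_0(x)b_2(x)$. Therefore $b_2(x)\mid_r c_0(x)$, that is, $\mu(c(x))=c_0(x)\in Sb_2(x)$.

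\textbf{Expected obstacle.} The only delicate point is the bookkeeping in the skew ring. Products such as $m(x)\cdot ub_2(x)$ must be expanded with $u$ commuting past $x$, which holds because $\Theta(u)=u$. All coefficients must also be reduced in $S$, so that the element obtained has exactly the shape $ub(x)+u^2g(x)$ with $b(x),g(x)\in S$ demanded by the hypothesis. Once that shape is confirmed, the divisibility condition built into Type \ref{Type 7} does the rest.
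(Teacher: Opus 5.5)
Your proposal is correct and follows the paper's own proof. You get $Sb_2(x)\subset\Tor_1(C_7)$ from $u\big(b_2(x)+ug_3(x)\big)\in C_7$. For the reverse inclusion, you write $uc(x)=a(x)\big(b_1(x)+ug_1(x)+u^2g_2(x)\big)+m(x)\big(ub_2(x)+u^2g_3(x)\big)$, move the $m(x)$-term to the left, and apply the Type 7 divisibility condition to obtain $b_2(x)\mid_r c_0(x)-m_0(x)b_2(x)$, hence $b_2(x)\mid_r c_0(x)$. Your remark that reducing $m_0(x)b_2(x)$ modulo $x^{p^s}-\lambda$ only affects the $u^2$-coefficient, so the element really has the form $ub(x)+u^2g(x)$ with $b(x),g(x)\in S$, is bookkeeping the paper leaves implicit.
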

\begin{proof}
    It is clear that $b_2(x)\in \Tor_1(C_7)$ which implies $\frac{\mathbb{F}_{p^m}[x,\theta]}{\langle (\lambda_{0,0}x-1)^{p^s}\rangle} b_2(x)\subset \Tor_1(C_7).$ Conversely, let $c(x)\in R^{3,x^{p^s}-\lambda}_{\Theta}$ be an arbitrary element such that $\mu(c(x)=c_0(x)+uc_1(x)+u^2c_2(x))\in \Tor_1(C_7)$. Then $uc(x)\in C_7.$ That is, there exist $a(x)=a_0(x)+ua_1(x)+u^2a_2(x), m(x)=m_0(x)+um_1(x)+u^2m_2(x)\in R^{3,x^{p^s}-\lambda}_{\Theta}$ such that $uc_0(x)+u^2c_1(x)=a(x)(b_1(x)+ug_1(x)+u^2g_2(x))+m(x)(ub_2(x)+u^2g_3(x))\implies u(c_0(x)-m_0(x)b_2(x))+u^2(c_1(x)-m_0(x)g_3(x)-m_1(x)b_2(x))=a(x)(b_1(x)+ug_1(x)+u^2g_2(x))$ and hence $b_2(x)\mid_r c_0(x)-m_0(x)b_2(x)$, consequently, $b_2(x)\mid_r  c_0(x)$.
\end{proof}
\begin{corollary}\label{cor3forps}
If $C_8$ is as described in Theorem \ref{t=3 case}.\ref{Type 8}, then $\Tor_1(C_8)=\frac{\mathbb{F}_{p^m}[x,\theta]}{\langle (\lambda_{0,0}x-1)^{p^s}\rangle} b_2(x)$ and $\Tor_2(C_8)=\frac{\mathbb{F}_{p^m}[x,\theta]}{\langle (\lambda_{0,0}x-1)^{p^s}\rangle} b_3(x).$
\end{corollary}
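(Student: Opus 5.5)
We follow the pattern of Corollaries \ref{cor1forps} and \ref{cor2forps}: for each torsion we first prove the easy containment and then the reverse one using the divisibility conditions built into Type \ref{Type 8} of Theorem \ref{t=3 case}. Write $S=\frac{\mathbb{F}_{p^m}[x,\theta]}{\langle (\lambda_{0,0}x-1)^{p^s}\rangle}$ and $R=R^{3,x^{p^s}-\lambda}_{\Theta}$.

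\textbf{Easy containments.} Since $u(ub_2(x)+u^2g_3(x))=u^2b_2(x)$ in $R$, we get $ub_2(x)\equiv ub_2(x)+u^2g_3(x)$ modulo $u^2$. Thus $\mu(b_2(x))\in\Tor_1(C_8)$ once we note that $\Tor_1(C_8)$ only depends on $c_0(x)$. More precisely, if $c(x)=c_0(x)+uc_1(x)+u^2c_2(x)$, then $uc(x)=uc_0(x)+u^2c_1(x)$. Taking $c(x)=b_2(x)+ug_3(x)$ gives $uc(x)=ub_2(x)+u^2g_3(x)\in C_8$ and $\mu(c(x))=b_2(x)$. Similarly, $u^2b_3(x)\in C_8$ gives $b_3(x)\in\Tor_2(C_8)$. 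Since each torsion is a left ideal of $S$, we obtain $Sb_2(x)\subset\Tor_1(C_8)$ and $Sb_3(x)\subset\Tor_2(C_8)$.

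\textbf{Reverse containment for $\Tor_1$.} Let $u c(x)\in C_8$. Then there are $a(x),m(x),n(x)\in R$ with
\[
uc_0(x)+u^2c_1(x)=a(x)\big(b_1(x)+ug_1(x)+u^2g_2(x)\big)+m(x)\big(ub_2(x)+u^2g_3(x)\big)+n(x)u^2b_3(x).
\]
We move the last two terms to the left. Writing $m=m_0+um_1+u^2m_2$ and $n=n_0+\cdots$, the left side becomes
\[
u\big(c_0-m_0b_2\big)+u^2\big(c_1-m_0g_3-m_1b_2-n_0b_3\big),
\]
which therefore lies in $R(b_1+ug_1+u^2g_2)$. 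The Type \ref{Type 8} condition ``$ub(x)+u^2g(x)\in R(b_1+ug_1+u^2g_2)\Rightarrow b_2\mid_r b$'' then gives $b_2\mid_r c_0-m_0b_2$, hence $b_2\mid_r c_0=\mu(c)$.

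\textbf{Reverse containment for $\Tor_2$.} Let $u^2c(x)\in C_8$. Then $u^2c_0(x)=a(x)(\cdots)+m(x)(ub_2+u^2g_3)+n(x)u^2b_3$. We subtract $n_0u^2b_3$ from both sides, leaving $u^2(c_0-n_0b_3)$ inside $R(b_1+ug_1+u^2g_2)+R(ub_2+u^2g_3)$. The first divisibility condition of Type \ref{Type 8} then yields $b_3\mid_r c_0-n_0b_3$, so $b_3\mid_r c_0$.

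\textbf{Main obstacle.} The only subtle point is the bookkeeping with coefficients in the skew setting. We must check that $n(x)u^2b_3(x)$ reduces to $u^2n_0(x)b_3(x)$, which holds because $u$ is central ($\Theta(u)=u$) and $u^3=0$. Likewise, the $u$-components of $m(x)(ub_2+u^2g_3)$ must be computed with $u$ central. Once these reductions are justified, the divisibility hypotheses do all the work.
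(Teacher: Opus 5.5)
Your proposal is correct and is essentially the paper's proof: the same easy containments, the same rearrangement $u(c_0-m_0b_2)+u^2(c_1-m_0g_3-m_1b_2-n_0b_3)\in R(b_1+ug_1+u^2g_2)$ for $\Tor_1$, and the same subtraction of $u^2n_0b_3$ for $\Tor_2$, each closed off by the corresponding divisibility hypothesis built into Type 8. Your explicit witness $c=b_2+ug_3$ for $b_2\in\Tor_1(C_8)$ simply spells out what the paper calls clear.
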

\begin{proof}
      It is clear that $b_2(x)\in \Tor_1(C_8)$ which implies $\frac{\mathbb{F}_{p^m}[x,\theta]}{\langle (\lambda_{0,0}x-1)^{p^s}\rangle} b_2(x)\subset \Tor_1(C_8).$ Conversely, let $c(x)\in R^{3,x^{p^s}-\lambda}_{\Theta}$ be an arbitrary element such that $\mu(c(x)=c_0(x)+uc_1(x)+u^2c_2(x))\in \Tor_1(C_8)$. Then $uc(x)\in C_8.$ That is, there exist $a(x)=a_0(x)+ua_1(x)+u^2a_2(x), m(x)=m_0(x)+um_1(x)+u^2m_2(x), $ and $n(x)=n_0(x)+un_1(x)+u^2n_2(x)\in R^{3,x^{p^s}-\lambda}_{\Theta}$ such that $uc_0(x)+u^2c_1(x)=a(x)(b_1(x)+ug_1(x)+u^2g_2(x))+m(x)(ub_2(x)+u^2g_3(x))+n(x)(u^2b_3(x))\implies u(c_0(x)-m_0(x)b_2(x))+u^2(c_1(x)-m_0(x)g_3(x)-m_1(x)b_2(x)-n_0(x)b_3(x))=a(x)(b_1(x)+ug_1(x)+u^2g_2(x))$ and hence $b_2(x)\mid_r c_0(x)-m_0(x)b_2(x)$, consequently, $b_2(x)\mid_r  c_0(x)$.
      
      For $\Tor_2(C_8),$ it is clear that $b_3(x)\in \Tor_2(C_8)$ which implies $\frac{\mathbb{F}_{p^m}[x,\theta]}{\langle (\lambda_{0,0}x-1)^{p^s}\rangle} b_3(x)\subset \Tor_2(C_8).$ Conversely, let $c(x)\in R^{3,x^{p^s}-\lambda}_{\Theta}$ be an arbitrary element such that $\mu(c(x)=c_0(x)+uc_1(x)+u^2c_2(x))\in \Tor_2(C_8)$. Then $u^2c(x)\in C_8.$ That is, there exist $a(x)=a_0(x)+ua_1(x)+u^2a_2(x), m(x)=m_0(x)+um_1(x)+u^2m_2(x), $ and $n(x)=n_0(x)+un_1(x)+u^2n_2(x)\in R^{3,x^{p^s}-\lambda}_{\Theta}$ such that $u^2c_0(x)=a(x)(b_1(x)+ug_1(x)+u^2g_2(x))+m(x)(ub_2(x)+u^2g_3(x))+n(x)(u^2b_3(x))\implies u^2(c_0(x)-n_0(x)b_3(x))=a(x)(b_1(x)+ug_1(x)+u^2g_2(x))+m(x)(ub_2(x)+u^2g_3(x))$ and hence $b_3(x)\mid_r c_0(x)-n_0(x)b_3(x)$, consequently, $b_3(x)\mid_r  c_0(x)$. 
\end{proof}
\begin{remark}
Corollaries \ref{cor1forps}, \ref{cor2forps}, and \ref{cor3forps} demonstrate that the additional conditions (missing in \cite{hesari2025skew}) imposed in the characterization of ideals in Theorem \ref{t=3 case} are necessary for these families to be disjoint. In fact, to compute the torsion, the descriptions of ideals must yield disjoint families of ideals of the ring.  
\end{remark}
The following example shows that the description of ideals of Type 6 in Theorem 3.15 in \cite{hesari2025skew} for the commutative case is not good enough to compute the torsion.
\begin{example}
    Consider the quotient ring $\frac{R^2[x]}{\langle (x-1)^4\rangle}.$ Assume that $\Theta$ is the identity map on $R^2$. Let  $C$ be an ideal of the form $\langle (x-1)^3+u(x-1)\rangle +\langle u^2(x-1)^2\rangle$. Consider $b_1(x)=(x-1)^3,$ $b_2(x)=(x-1)^2$ and so on, it satisfies all the conditions given in  Type 6 of Theorem 3.15 in \cite{hesari2025skew}. However, $\Tor_2(C)$ is not $\langle (x-1)^2\rangle$, it is equal to $\langle x-1\rangle,$ as $
    (u-(x-1)^2)((x-1)^3+u(x-1))=u^2(x-1)\in C.$  In fact, the ideal $C$ is equal to $\langle (x-1)^3+u(x-1)\rangle,$ and hence Type 6 in Theorem 3.15 of \cite{hesari2025skew} reduces to Type 5. But if we consider the additional condition of Type \ref{Type 6} in Theorem \ref{t=3 case}, no ideals will belong to two different types. 
\end{example}
\subsection{Skew constacyclic codes of length \texorpdfstring{$2p^s$}{} over \texorpdfstring{$\frac{\mathbb{F}_{p^m}[u]}{\langle u^2\rangle}$}{}
}
In this section, we analyze the special case $t=2$. The case for  $t=2$ has been considered in  \cite{hesari2023skew}, but in the description of the ideal of Type 4 in Theorem 5.2 (\cite{hesari2023skew}), one condition is missing. We begin by completing the description of ideals for $t=2$. That is, in the following theorem, we give a complete description of the ideals of $R^{2,x^{2p^s}-\lambda}_{\Theta}:=\frac{R^2 [x,\Theta]}{\langle x^{2p^s}-\lambda\rangle},$ where $\lambda=\lambda_0+u\lambda_1$ and $\lambda_0\in\mathbb{F}_{p^m}\setminus \{0\}$. Recall that $\lambda$ is a square if and only if $\lambda_0$ is not a square in $\mathbb{F}_{p^m}$ [Lemma 2.3 of \cite{AkaKanSar2}]. If $\lambda$ is a square, we are done due to the Chinese Remainder Theorem. Let us assume that $\lambda$ is not a square, which implies that $\lambda_0$ is not a square. Also there exists $\tilde{\lambda}_{0,0}\in\mathbb{F}_{p^m}$ such that $\tilde{\lambda}_{0,0}\in\mathbb{F}_{p^m}^{p^s}=\tilde{\lambda}_{0,0}\in\mathbb{F}_{p^m}.$ Recall that $\mathcal{B}_{\lambda}^2:=\{a(x)\in\frac{\mathbb{F}_{p^m}[x,\theta]}{\langle (x^2-\tilde{\lambda}_{0,0})^{p^s}\rangle}\,:\,a(x)\mid_{r}(x^2-\tilde{\lambda}_{0,0})^{p^s}\}$. Also for any $a(x)\in \mathcal{B}_{\lambda}^2,$ we have $(x^2-\tilde{\lambda}_{0,0})^{p^s}=\widetilde{a(x)}a(x).$
\begin{theorem}\label{t=2 case}
    Left ideals of $R^{2,x^{2p^s}-\lambda}_{\Theta}$ are of the following form:
    \begin{enumerate}
        \item {\label{Type 1'}} $C_0=0$,\,$C_1=R^{2,x^{2p^s}-\lambda}_{\Theta}.$
        \item {\label{Type 2'}}$C_2=R^{2,x^{2p^s}-\lambda}_{\Theta}ub(x)$, where $b(x)\in \mathcal{B}_{\lambda}^2$ and $\deg b(x)\leq 2p^s-1$.
         \item {\label{Type 3'}} $C_3=R^{2,x^{2p^s}-\lambda}_{\Theta}\big( b(x)+ug(x)\big)$, where $ b(x)\in \mathcal{B}_{\lambda}^2,\deg b(x)\leq 2p^s-1$, $g(x)\in \frac{\mathbb{F}_{p^m}[x,\theta]}{\langle (x^2-\tilde{\lambda}_{0,0})^{p^s}\rangle},\, \\\deg g(x)<\deg b(x)$. Moreover, $g(x)$ with these conditions is unique.
        \item {\label{Type 4'}}$C_4=R^{2,x^{2p^s}-\lambda}_{\Theta} \big(b_{1}(x)+ug(x)\big)+R^{2,x^{2p^s}-\lambda}_{\Theta}ub_2(x),$ where $b_i(x)\in \mathcal{B}_{\lambda}^2$ and $\deg b_i(x)\leq 2p^s-1$ for $1\leq i\leq 2$, $ g(x)\in \frac{\mathbb{F}_{p^m}[x,\theta]}{\langle (x^2-\tilde{\lambda}_{0,0})^{p^s}\rangle},\,\deg g(x)<\deg b_2(x),$ and if $b(x)\in \frac{\mathbb{F}_{p^m}[x,\theta]}{\langle (x^2-\tilde{\lambda}_{0,0})^{p^s}\rangle}$  be such that $ub(x)\in R^{2,x^{2p^s}-\lambda}_{\Theta}\big(  b_1(x)+ug(x)\big)$ then $b_2(x)\mid_r b(x)$ in $\frac{\mathbb{F}_{p^m}[x,\theta]}{\langle (x^2-\tilde{\lambda}_{0,0})^{p^s}\rangle}.$ Moreover, $g(x)$ with these conditions is unique.
\end{enumerate}
\end{theorem}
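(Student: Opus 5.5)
The plan is to specialize Theorem \ref{2main theorem} (equivalently Theorem \ref{nmain theorem}) to $t=2$, and then normalize the tail coefficients $g$ by right division.

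First, the ideals contained in $\langle u\rangle$. Let $C\subset\langle u\rangle$ be a nonzero left ideal. By Proposition \ref{FormofIdeals} applied to $\mu$, we have $C=(C:u)u$. Moreover $\mu(C:u)$ is a left ideal of $\frac{\mathbb{F}_{p^m}[x,\theta]}{\langle (x^2-\tilde{\lambda}_{0,0})^{p^s}\rangle}$. By Proposition 4.1 of \cite{hesari2023skew} it is generated by some $b(x)\in\mathcal{B}_{\lambda}^2$. Since $u^2=0$, this gives $C=R^{2,x^{2p^s}-\lambda}_{\Theta}\,ub(x)$, which is Type \ref{Type 2'}; the case $b=1$ gives $C=\langle u\rangle$.

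Next, the ideals not contained in $\langle u\rangle$. Here $\mu(C)=\langle b_1(x)\rangle$ with $b_1\in\mathcal{B}_\lambda^2$, and Proposition \ref{FormofIdeals} gives
\[
C=R^{2,x^{2p^s}-\lambda}_{\Theta}\bigl(b_1(x)+ug(x)\bigr)+u(C:u).
\]
Put $\mu(C:u)=\langle b_2(x)\rangle$. Then $u(C:u)=R\,ub_2(x)$, and since $ub_1\in C$ we get $b_2\mid_r b_1$. Two cases arise.
\begin{itemize}
\item If $b_2=b_1$, the summand $R\,ub_2$ is redundant, since $ub_1=u(b_1+ug)$. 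This gives Type \ref{Type 3'}.
\item If $b_2\neq b_1$, we get Type \ref{Type 4'}, where $b_2$ is by construction the generator of $\mathrm{Tor}_1(C)=\mu(C:u)$. The extra condition "if $ub(x)\in R(b_1+ug)$ then $b_2\mid_r b$" is just the statement that $R(b_1+ug)+R\,ub_2$ is written with $b_2$ equal to the full torsion generator. The containment $\mathrm{Tor}_1(R(b_1+ug))\subseteq \langle b_2\rangle$ is automatic because $R(b_1+ug)\subseteq C$.
\end{itemize}
Conversely, any pair satisfying the stated conditions defines an ideal whose $\mu$-image is $\langle b_1\rangle$ and whose $\mathrm{Tor}_1$ is $\langle b_2\rangle$. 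This makes the types pairwise disjoint and the parameters $b_1,b_2$ invariants of $C$.

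For the normalization of $g$, I will use the right division algorithm in $\mathbb{F}_{p^m}[x,\theta]$.
\begin{itemize}
\item In Type \ref{Type 3'}, write $g=q b+r$ with $\deg r<\deg b$. Then $b+ug-uq(b+ug)=b+ur$ because $u^2=0$, so $g$ may be replaced by $r$.
\item In Type \ref{Type 4'}, write $g=qb_2+r$ and subtract $q\cdot ub_2$ from the first generator. This gives $\deg g<\deg b_2$.
\end{itemize}
Uniqueness then follows from the same pattern in both types. Suppose $b+ug$ and $b+ug'$ both generate the ideal, or both lie in it in Type \ref{Type 4'}. Their difference $u(g-g')$ lies in $C$, so $g-g'\in\mathrm{Tor}_1(C)$, which is $\langle b\rangle$, respectively $\langle b_2\rangle$. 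In Type \ref{Type 3'} we have $\mathrm{Tor}_1(C)=\langle b\rangle$: if $u c\in R(b+ug)$, say $uc=a(b+ug)$, then $\mu(a)b=0$, so $\mu(a)$ is a left multiple of $\widetilde{b}$. Since $\widetilde{b}b=(x^2-\tilde\lambda_{0,0})^{p^s}$ is central and equals $u$ times a unit by Lemma \ref{relation between2}, and $\lambda_1$ is allowed to be zero, the $u$-part of $a(b+ug)$ is of the form $(\cdot)\,b$ plus a multiple of $\widetilde{b}\,ug$ that vanishes modulo $u^2$. In either case the degree bound forces $g=g'$.

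The main obstacle is this torsion computation in Type \ref{Type 3'}, i.e. showing that $u c\in R(b+ug)$ forces $b\mid_r c$, in the non-commutative setting. To handle it I will write $a=a_0+ua_1$ with $a_0b=h\cdot(x^2-\tilde\lambda_{0,0})^{p^s}$, where $h$ is a left multiple of $\widetilde b$. I will then use centrality of $(x^2-\tilde\lambda_{0,0})^{p^s}$ together with the fact that it is congruent to $u\lambda_0^{-1}\lambda_1$ in the quotient, which yields $c=a_1b+(\text{terms right-divisible by }b)$. If $\lambda_1\ne0$ an extra term $h\lambda_0^{-1}\lambda_1$ appears. That term needs the divisibility $b\mid_r \widetilde b\,\cdot$, which fails in general; this is precisely why Type \ref{Type 4'} must be listed separately, and why Type \ref{Type 3'} uniqueness should be argued only modulo the ideal itself rather than through the torsion.
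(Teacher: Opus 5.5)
Your existence argument, the reduction of $g$ by right division, and the uniqueness of $g$ in Type 4 are correct. The existence argument runs as follows: ideals inside $\langle u\rangle$ are $R\,ub$; otherwise $C=R(b_1+ug)+R\,ub_2$ with $\mu(C)=\bar R b_1$ and $\Tor_1(C)=\bar R b_2$, split by whether $b_2=b_1$. Here $R=R^{2,x^{2p^s}-\lambda}_{\Theta}$ and $\bar R=\frac{\mathbb{F}_{p^m}[x,\theta]}{\langle (x^2-\tilde\lambda_{0,0})^{p^s}\rangle}$. This is the $t=2$ case of Theorem \ref{nmain theorem}/\ref{2main theorem}. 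The paper gives no separate proof of this theorem; it presents it as that specialization, completed from \cite{hesari2023skew}.

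The genuine gap is uniqueness of $g$ in Type 3, where you need $\Tor_1(R(b+ug))=\bar R b$. This is false. Write $a=h\widetilde b+ua_1$ and use $(x^2-\tilde\lambda_{0,0})^{p^s}=x^{2p^s}-\lambda_0=u\lambda_1$ in $R$. Then $a(b+ug)=u\bigl(h\lambda_1+h\widetilde b g+a_1b\bigr)$. The term $u\,h\widetilde b g$ does not ``vanish modulo $u^2$'': it carries a single factor $u$, and $h\widetilde b g$ need not be right divisible by $b$. So the failure is not confined to $\lambda_1\neq 0$. Your fallback, ``arguing modulo the ideal itself'', does not supply the missing step.

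No argument can supply it, because the Type 3 uniqueness claim is false as stated. Take $s\ge 1$ and put $y=x^2-\tilde\lambda_{0,0}$. Then $y$ commutes with $u$, and since $\tilde\lambda_{0,0}$ is $\theta$-fixed, $y^{p^s}=u\lambda_1$ in $R$. Let $b=y^{p^s-1}\in\mathcal{B}^2_\lambda$, so $\deg b=2p^s-2\ge 4$. Then
\[
(1+y)(b+u)=b+u+y^{p^s}+uy=b+u\,(1+\lambda_1+y),
\]
and $1+y$ is a unit because $y^{2p^s}=u^2\lambda_1^2=0$. Hence $R(b+u\cdot 1)=R\bigl(b+u(1+\lambda_1+y)\bigr)$. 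These are two distinct choices of $g$, of degrees $0$ and $2$, both below $\deg b$, and this holds whether or not $\lambda_1=0$.

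The same computation gives $u(\lambda_1+y)=y(b+u)\in R(b+u)$. Since $\lambda_1+y\neq 0$ has degree $2<\deg b$, we get $\Tor_1(R(b+u))\neq\bar R b$. So this ideal is also of Type 4 with $b_2\neq b$, which refutes your claim that the types are pairwise disjoint.

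What your derivation actually proves is a corrected statement. In it, Type 3 carries the extra hypothesis that $uc\in R(b+ug)$ implies $b\mid_r c$, i.e.\ $\Tor_1(C_3)=\bar R b$; this is exactly your case $b_2=b_1$. Type 4 is taken with $b_2\neq b_1$. Under that hypothesis your difference argument works: $u(g-g')\in C$ gives $g-g'\in\Tor_1(C)$, whose nonzero elements have degree $\ge\deg b$. That yields uniqueness, and the pair $(\mu(C),\Tor_1(C))$ then distinguishes Type 3 from Type 4.
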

Due to ``if $b(x)\in \frac{\mathbb{F}_{p^m}[x,\theta]}{\langle (x^2-\tilde{\lambda}_{0,0})^{p^s}\rangle}$  be such that $u^2b(x)\in R^{2,x^{2p^s}-\lambda}_{\Theta}\big(  b_1(x)+ug(x)\big)$ then $b_2(x)\mid_r b(x)$ in $\frac{\mathbb{F}_{p^m}[x,\theta]}{\langle (x^2-\tilde{\lambda}_{0,0})^{p^s}\rangle}$" in the description of the ideal Type 4 above, which was missing in Theorem 5.2 \cite{hesari2023skew}, we get the $\Tor_1(C_4):=\mu\{a(x)\in R^{2,x^{2p^s}-\lambda}_{\Theta}\,|\,a(x)u\in C_4\}$ correctly. 
\begin{corollary}
If $C_4$ is as described in Theorem \ref{t=2 case}.\ref{Type 4'}, then $\Tor_1(C_4)=\frac{\mathbb{F}_{p^m}[x,\theta]}{\langle (x^2-\tilde{\lambda}_{0,0})^{p^s}\rangle} b_2(x)$.
\end{corollary}
\begin{proof}
    It is clear that $b_2(x)\in \Tor_1(C_4)$ which implies $\frac{\mathbb{F}_{p^m}[x,\theta]}{\langle (x^2-\tilde{\lambda}_{0,0})^{p^s}\rangle}  b_2(x)\subset \Tor_1(C_4).$ Conversely, let $c(x)\in R^{2,x^{2p^s}-\lambda}_{\Theta}$ be an arbitrary element such that $\mu(c(x)=c_0(x)+uc_1(x)+u^2c_2(x))\in \Tor_1(C_4)$. Then $uc(x)\in C_4.$ That is, there exist $a(x)=a_0(x)+ua_1(x)$ and $m(x)=m_0(x)+um_1(x)\in R^{2,x^{2p^s}-\lambda}_{\Theta}$ such that $uc_0(x)=a(x)(b_1(x)+ug(x))+m(x)(ub_2(x))\implies u(c_0(x)-m_0(x)b_2(x))=a(x)(b_1(x)+ug(x))$ and hence $b_2(x)\mid_r c_0(x)-m_0(x)b_2(x)$, consequently, $b_2(x)\mid_r  c_0(x)$.
      
\end{proof}
\begin{remark}
The above corollary demonstrates that the additional condition (missing in Theorem 5.2 \cite{hesari2023skew}) imposed in the characterization of ideals in Theorem \ref{t=2 case} is necessary for these families to be disjoint.  A similar situation is seen in the description of left ideals in \cite{pathak2025skew} in Theorem 4.15, which can be fixed in a similar manner.
\end{remark}
\section{Conclusion}\label{sec7}
In this work, we determine the complete structure of the left ideals of the ring  $\frac{R^t[x,\Theta]}
{\langle f(x) \rangle}$, where $f(x) \in 
R^t[x,\Theta]$ is in the center and $\Theta$ is a special kind of automorphism of $R^t$. Further, we have obtained a refined characterization when $f(x)=x^{p^s}-\lambda$ with $\lambda_1 \neq 0$, and when $f(x)=x^{2p^s}-\lambda$ with $\lambda_1 \neq 0$. As an application of the general theory, in particular cases, we listed all types of skew constacyclic codes of length $p^s$ and $2p^s$ over $R^3$. It is noteworthy that we have added necessary conditions in the description of the ideals so that there is no intersection among the types of the ideals. In this setting, we further computed the associated $i$-th torsion codes. The results in this article extend the work done so far by the authors in the skew setting of ideals. For future investigations, one can explore duality, distance, and enumeration of the corresponding codes.




\textbf{Conflict of Interest.} Both authors declare that they have no conflict of interest.
\section*{Acknowledgements}
The first author would like to acknowledge PMRF (PMRF Id: 1403187) for its financial support.
\nocite{dinh2012repeated, dinh2020constacyclic, dinh2013structure,dinh2017constacyclic, dinh2013repeated, dinh2024constacyclic, dinh2024hamming, consta8ps}
\bibliographystyle{abbrv}
\bibliography{sample}

\end{document}